\theoremstyle{definition}
\newtheorem{theorem}{Theorem}
\newtheorem{lemma}{Lemma}
\newtheorem{claim}{Claim}
\newtheorem{corollary}{Corollary}
\newtheorem{example}{Example}
\newtheorem{remark}{Remark}
\newtheorem{definition}{Definition}
\newcommand{\calA}{\mathcal{A}}
\newcommand{\calC}{\mathcal{C}}
\newcommand{\calN}{\mathcal{N}}
\newcommand{\calP}{\mathcal{P}}
\newcommand{\calS}{\mathcal{S}}
\newcommand{\calT}{\mathcal{T}}
\newcommand{\calI}{\mathcal{I}}
\newcommand{\bfg}{\mathbf{g}}
\newcommand{\bfv}{\mathbf{v}}
\newcommand{\bfu}{\mathbf{u}}
\newcommand{\bfc}{\mathbf{c}}
\newcommand{\bfT}{\mathbf{T}}
\newcommand{\bfG}{\mathbf{G}}
\newcommand{\bfS}{\mathbf{S}}
\newcommand{\bfI}{\mathbf{I}}
\newcommand{\bfC}{\mathbf{C}}
\newcommand{\bfD}{\mathbf{D}}
\newcommand{\bfA}{\mathbf{A}}
\newcommand{\bfB}{\mathbf{B}}
\newcommand{\bfF}{\mathbf{F}}
\newcommand{\bfzr}{\mathbf{0}}
\newcommand{\bfoe}{\mathbf{1}}
\begin{document}

	%
	\title{Coded Caching Schemes with Reduced Subpacketization from Linear Block Codes}
\author{\IEEEauthorblockN{Li Tang and Aditya Ramamoorthy\\}
		\IEEEauthorblockA{Department of Electrical and Computer Engineering\\
		Iowa State University\\
		Ames, IA 50010\\
			Emails:\{litang, adityar\}@iastate.edu}
	\thanks{This work was supported in part by the National Science Foundation by grants CCF-1718470, CCF-1320416 and CCF-1149860. This paper was presented in part at the 2016 IEEE Workshop on Network Coding and Applications (NetCod) and at the 2017 IEEE International Symposium on Information Theory (ISIT).}}

	\maketitle
\begin{abstract}
Coded caching is a technique that generalizes conventional caching and promises significant reductions in traffic over caching networks.
However, the basic coded caching scheme requires that
each file hosted in the server be partitioned into a large number
(i.e., the subpacketization level) of non-overlapping subfiles. From
a practical perspective, this is problematic as it means that prior
schemes are only applicable when the size of the files is extremely
large. In this work, we propose coded caching schemes based on combinatorial structures called resolvable designs.
These structures can be obtained in a natural manner from linear block codes whose generator matrices possess certain rank properties.
We obtain several schemes with subpacketization levels substantially lower than the basic scheme at the cost of an increased rate. Depending on the system parameters, our approach allows
us to operate at various points on the subpacketization level vs. rate tradeoff.
\end{abstract}
\begin{IEEEkeywords}
coded caching, resolvable designs, cyclic codes, subpacketization level
\end{IEEEkeywords}
\section{Introduction}
\label{sec:intro}

Caching is a popular technique for facilitating large scale content delivery over the Internet. Traditionally, caching operates by storing popular content closer to the end users. Typically, the cache serves an end user's file request partially (or sometimes entirely) with the remainder of the content coming from the main server. Prior work in this area \cite{maddahN14} demonstrates that allowing coding in the cache and coded transmission from the server (referred to as {\it coded caching}) to the end users can allow for significant reductions in the number of bits transmitted from the server to the end users. This is an exciting development given the central role of caching in supporting a significant fraction of Internet traffic. In particular, reference \cite{maddahN14} considers a scenario where a single server contains $N$ files. 
The server connects to $K$ users over a shared link and each user has a cache that allows it to store $M/N$ fraction of all the files in the server. Coded caching consists of two distinct phases: a $\emph{placement phase}$ and a $\emph{delivery phase}$. In the placement phase, the caches of the users are populated. This phase does not depend on the user demands which are assumed to be arbitrary. In the delivery phase, the server sends a set of $\emph{coded}$ signals that are broadcast to each user such that each user's demand is satisfied.

The original work of \cite{maddahN14} considered the case of centralized coded caching, where the server decides the content that needs to be placed in the caches of the different users. Subsequent work considered the decentralized case where the users populate their caches by randomly choosing parts of each file while respecting the cache size constraint. Recently, there have been several papers that have examined various facets of coded caching. These include tightening known bounds on the coded caching rate \cite{ghasemiR17_accepted,sengupta2015improved}, considering
 issues with respect to decentralized caching \cite{maddahN14mr_tradeoff}, explicitly considering popularities of files \cite{maddahN14nonuniform_demand,hachemKD14a}, network topology issues \cite{TangR16,JJ15} and synchronization issues \cite{ghasemiR17,niesen2015coded}. 

In this work, we examine another important aspect of the coded caching problem that is closely tied to its adoption in practice. It is important to note that the huge gains of coded caching require each file to be partitioned into $F_s \approx \binom{K}{\frac{KM}{N}}$ non-overlapping subfiles of equal size; $F_s$ is referred to as the \emph{subpacketization level}. It can be observed that for a fixed cache size $\frac{M}{N}$, $F_s$ grows exponentially with $K$. This can be problematic in practical implementations. For instance, suppose that $K=64$, with $\frac{M}{N}=0.25$ so that $F_s=\binom{64}{16}\approx 4.8 \times 10^{14}$ with a rate $R \approx 2.82$. In this case, it is evident that at the bare minimum, the size of each file has to be at least $480$ terabits for leveraging the gains in \cite{maddahN14}. It is even worse in practice. The atomic unit of storage on present day hard drives is a sector of size $512$ bytes and the trend in the disk drive industry is to move this to $4096$ bytes \cite{toshiba_wp}. As a result, the minimum size of each file needs to be much higher than $480$ terabits. Therefore, the scheme in \cite{maddahN14} is not practical even for moderate values of $K$. Furthermore, even for smaller values of $K$, schemes with low subpacketization levels are desirable. This is because any practical scheme will require each of the subfiles to have some header information that allows for decoding at the end users. When there are a large number of subfiles, the header overhead may be non-negligible. For these same parameters ($K=64, M/N = 0.25$) our proposed approach in this work allows us obtain, e.g., the following operating points: (i) $F_s \approx 1.07 \times 10^9$ and $R = 3$, (ii) $F_s \approx 1.6\times 10^4$ and $R=6$, (iii) $F_s=64$ and $R=12$. For the first point, it is evident that the subpacketization level drops by over five orders of magnitude with only a very small increase in the rate. Point (ii) and (iii) show proposed scheme allows us to operate at various points on the tradeoff between subpacketization level and rate. 


The issue of subpacketization was first considered in the work of \cite{shanmugam_et_al14,shanmugam_et_al16} in the decentralized coded caching setting. In the centralized case it was considered in the work of \cite{yan_et_al17}. They proposed a low subpacketization scheme based on placement delivery arrays. Reference \cite{shangguan2016centralized} viewed the problem from a hypergraph perspective and presented several classes of coded caching schemes. The work of \cite{shanmugam2017coded} has recently shown that there exist coded caching schemes where the subpacketization level grows linearly with the number of users $K$; however, this result only applies when the number of users is very large. We elaborate on related work in Section \ref{sec:related_work}.

In this work, we propose low subpacketization level schemes for coded caching. Our proposed schemes leverage the properties of combinatorial structures known as resolvable designs and their natural relationship with linear block codes. Our schemes are applicable for a wide variety of parameter ranges and allow the system designer to tune the subpacketization level and the gain of the system with respect to an uncoded system. We note here that designs have also been used to obtain results in distributed data storage \cite{olmezR16} and network coding based function computation in recent work \cite{tripathyR15,tripathyR17}.


This paper is organized as follows. Section \ref{sec:background} discusses the background and related work and summarizes the main contributions of our work. Section \ref{sec:design} outlines our proposed scheme. It includes all the constructions and the essential proofs. A central object of study in our work are matrices that satisfy a property that we call the consecutive column property (CCP). Section \ref{sec:LinearCode} overviews several constructions of matrices that satisfy this property. Several of the longer and more involved proofs of statements in Sections \ref{sec:design} and \ref{sec:LinearCode} appear in the Appendix. In Section \ref{sec:compare} we perform an in-depth comparison our work with existing constructions in the literature. We conclude the paper with a discussion of opportunities for future work in Section \ref{sec:conclusion}.
\section{Background, Related Work and Summary of Contributions}
\label{sec:background}
We consider a scenario where the server has $N$ files each of which consist of $F_s$ subfiles. There are $K$ users each equipped with a cache of size $MF_s$ subfiles. The coded caching scheme is specified by means of the placement scheme and an appropriate delivery scheme for each possible demand pattern. In this work, we use combinatorial designs \cite{Stinson} to specify the placement scheme in the coded caching system.	
\begin{definition}
\label{def:comb_design}
A design is a pair $(X, \calA)$ such that
\begin{enumerate}
\item  $X$ is a set of elements called points, and
\item $\calA$ is a collection of nonempty subsets of $X$ called blocks, where each block contains the same number of points.
\end{enumerate}
\end{definition}
A design is in one-to-one correspondence with an incidence matrix $\calN$ which is defined as follows.
\begin{definition}
	The incidence matrix $\calN$ of a design $(X, \calA)$ is a binary matrix of dimension $|X| \times |\calA|$, where the rows and columns correspond to the points and blocks respectively.
	Let $i \in X$ and $j \in \calA$. Then,
	\begin{align*}
		\calN (i,j) = \begin{cases}
			1 & \text{~if $i \in j$,}\\
			0 & \text{~otherwise}.
		\end{cases}
	\end{align*}
\end{definition}
It can be observed that the transpose of an incidence matrix also specifies a design. We will refer to this as the transposed design. In this work, we will utilize resolvable designs which are a special class of designs.
\begin{definition}
\label{def:resolv_design}
A parallel class $\calP$ in a design $(X,\calA)$ is a subset of disjoint blocks from $\calA$ whose union is $X$. A partition of $\calA$ into several parallel classes is called a resolution, and $(X,\calA)$ is said to be a resolvable design if $\calA$ has at least one resolution.
\end{definition}
For resolvable designs, it follows that each point also appears in the same number of blocks.

\begin{example}
\label{eg:resolv_6_2}
Consider a block design specified as follows.
\begin{align*}
X&=\{1,2,3,4\}, \text{~and}\\
\calA &=\{\{1,2\},\{1,3\},\{1,4\},\{2,3\},\{2,4\},\{3,4\}\}.
\end{align*}
Its incidence matrix is given below.
\begin{align*}
 \calN &= \begin{bmatrix}
 	1 & 1 & 1 & 0 & 0 & 0\\
 	1 & 0 & 0 & 1 & 1 & 0\\
 	0 & 1 & 0 & 1 & 0 & 1\\
 	0 & 0 & 1 & 0 & 1 & 1
\end{bmatrix}.
\end{align*}
It can be observed that this design is resolvable with the following parallel classes.
\begin{align*}
\calP_1 &= \{\{1,2\}, \{3,4\}\},\\
\calP_2 &= \{\{1,3\}, \{2,4\}\}, \text{~and}\\
\calP_3 &= \{\{1,4\}, \{2,3\}\}.
\end{align*}
\end{example}
In the sequel we let $[n]$ denote the set $\{1, \dots, n\}$.
We emphasize here that the original scheme of \cite{maddahN14} can be viewed as an instance of the trivial design. For example, consider the setting when $t = KM/N$ is an integer. Let  $X = [K]$ and $\calA = \{B: B \subset [K], |B| = t\}$. In the scheme of \cite{maddahN14}, the users are associated with $X$ and the subfiles with $\calA$. User $i \in [K]$ caches subfile $W_{n,B}, n \in [N]$ for $B \in \calA$ if $i \in B$. 
The main message of our work is that carefully constructed resolvable designs can be used to obtain coded caching schemes with low subpacketization levels, while retaining much of the rate gains of coded caching. The basic idea is to associate the users with the blocks and the subfiles with the points of the design. The roles of the users and subfiles can also be interchanged by simply working with the transposed design.

\begin{example}
\label{eg:placement_q_2_k_3}
Consider the resolvable design from Example \ref{eg:resolv_6_2}. The blocks in $\calA$ correspond to six users $U_{12}$, $U_{34}$, $U_{13}$, $U_{24}$, $U_{14}$, $U_{23}$. Each file is partitioned into $F_s=4$ subfiles $W_{n,1}, W_{n,2}, W_{n,3}, W_{n,4}$ which correspond to the four points in $X$. The cache in user $U_{B}$, denoted $Z_{B}$ is specified as $Z_{ij}=(W_{n,i},W_{n,j})_{n=1}^N$. For example, $Z_{12}=(W_{n,1}, W_{n,2})_{n=1}^{N}$.

We note here that the caching scheme is symmetric with respect to the files in the server. Furthermore, each user caches half of each file so that $M/N = 1/2$.
Suppose that in the delivery phase user $U_B$ requests file $W_{d_{B}}$ where $d_B \in [N]$. These demands can be satisfied as follows. We pick three blocks, one each from parallel classes $\calP_1$, $\calP_2$, $\calP_3$ and generate the signals transmitted in the delivery phase as follows.
\begin{align}
	&W_{d_{12},3}\oplus W_{d_{13},2}\oplus W_{d_{23},1}, \label{eq:eg_2_intro}\\
    &W_{d_{12},4}\oplus W_{d_{24},1}\oplus W_{d_{14},2}, \nonumber\\
    &W_{d_{34},1}\oplus W_{d_{13},4}\oplus W_{d_{14},3}, \text{~and} \nonumber\\
    &W_{d_{34},2}\oplus W_{d_{24},3}\oplus W_{d_{23},4}. \nonumber
\end{align}
The three terms in the in eq. (\ref{eq:eg_2_intro}) above correspond to blocks from different parallel classes $\{1,2\}\in \calP_1, \{1,3\}\in \calP_2 ,\{2,3\}\in \calP_3$. This equation has the {\it all-but-one} structure that was also exploited in \cite{maddahN14}, i.e., eq. (\ref{eq:eg_2_intro}) is such that each user caches all but one of the subfiles participating in the equation. Specifically, user $U_{12}$ contains $W_{n,1}$ and $W_{n,2}$ for all $n \in [N]$. Thus, it can decode subfile $W_{d_{12},3}$ that it needs. A similar argument applies to users $U_{13}$ and $U_{23}$. It can be verified that the other three equations also have this property. Thus, at the end of the delivery phase, each user obtains its missing subfiles.


This scheme corresponds to a subpacketization level of $4$ and a rate of $1$. In contrast, the scheme of \cite{maddahN14} would require a subpacketization level of $\binom{6}{3} = 20$ with a rate of $0.75$. Thus, it is evident that we gain significantly in terms of the subpacketization while sacrificing some rate gains.
\end{example}
As shown in Example \ref{eg:placement_q_2_k_3}, we can obtain a scheme by associating the users with the blocks and the subfiles with the points. In this work, we demonstrate that this basic idea can be significantly generalized and several schemes with low subpacketization levels that continue to leverage much of the rate benefits of coded caching can be obtained. 

\subsection{Discussion of Related Work}
\label{sec:related_work}
Coded caching has been the subject of much investigation in recent work as discussed briefly earlier on. We now overview existing literature on the topic of low subpacketization schemes for coded caching.
In the original paper \cite{maddahN14}, for given problem parameters $K$ (number of users) and $M/N$ (cache fraction), the authors showed that when $N \geq K$, the rate equals
\begin{align*}
R &= \frac{K(1 - M/N)}{1 + KM/N}
\end{align*}
when $M$ is an integer multiple of $N/K$. Other points are obtained via memory sharing.
Thus, in the regime when $KM/N$ is large, the coded caching rate is approximately $N/M - 1$, which is independent of $K$. Crucially, though this requires the subpacketization level $F_s \approx \binom{K}{KM/N}$. It can be observed that for a fixed $M/N$, $F_s$ grows exponentially with $K$. This is one of main drawbacks of the original scheme and for reasons outlined in Section \ref{sec:intro}, deploying this solution in practice may be difficult.

The subpacketization issue was first discussed in the work of \cite{shanmugam_et_al14, shanmugam_et_al16} in the context of decentralized caching. Specifically, \cite{shanmugam_et_al16} showed that in the decentralized setting for any subpacketization level $F_s$ such that $F_s \leq \exp (KM/N)$ the rate would scale linearly in $K$, i.e., $R \geq cK$. Thus, much of the rate benefits of coded caching would be lost if $F_s$ did not scale exponentially in $K$. Following this work, the authors in \cite{yan_et_al17} introduced a technique for designing low subpacketization schemes in the centralized setting which they called placement delivery arrays. In \cite{yan_et_al17}, they considered the setting when $M/N = 1/q$ or $M/N = 1 - 1/q$ and demonstrated a scheme where the subpacketization level was exponentially smaller than the original scheme, while the rate was marginally higher. This scheme can be viewed as a special case of our work. We discuss these aspects in more detail in Section \ref{sec:compare}. In \cite{shangguan2016centralized}, the design of coded caching schemes was achieved through the design of hypergraphs with appropriate properties. In particular, for specific problem parameters, they were able to establish the existence of schemes where the subpacketization scaled as $\exp (c \sqrt{K})$. Reference\cite{yan2016placement} presented results in this setting by considering strong edge coloring of bipartite graphs.

Very recently, \cite{shanmugam2017coded} showed the existence of coded caching schemes where the subpacketization grows linearly with the number of users, but the coded caching rate grows as $O(K^\delta)$ where $0 < \delta < 1$. Thus, while the rate is not a constant, it does not grow linearly with $K$ either. Both \cite{shangguan2016centralized} and \cite{shanmugam2017coded} are interesting results that demonstrate the existence of regimes where the subpacketization scales in a manageable manner. Nevertheless, it is to be noted that these results come with several caveats. For example, the result of \cite{shanmugam2017coded} is only valid in the regime when $K$ is very large and is unlikely to be of use for practical values of $K$. The result of \cite{shangguan2016centralized} has significant restrictions on the number of users, e.g., in their paper, $K$ needs to be of the form $\binom{n}{a}$ and $q^t\binom{n}{a}$.

\subsection{Summary of Contributions}
In this work, the subpacketization levels we obtain are typically exponentially smaller than the original scheme. However, they still continue to scale exponentially in $K$, albeit with much smaller exponents. However, our construction has the advantage of being applicable for a large range of problem parameters.
Our specific contributions include the following.
\begin{itemize}
\item We uncover a simple and natural relationship between a $(n,k)$ linear block code and a coded caching scheme. We first show that any linear block code over $GF(q)$ and in some cases $\mathbb{Z} \mod q$ (where $q$ is not a prime or a prime power) generates a resolvable design. This design in turn specifies a coded caching scheme with $K=nq$ users where the cache fraction $M/N = 1/q$. A complementary cache fraction point where $M/N = 1 - \alpha/nq$ where $\alpha$ is some integer between $1$ and $k+1$ can also be obtained. Intermediate points can be obtained by memory sharing between these points.
\item We consider a class of $(n,k)$ linear block codes whose generator matrices satisfy a specific rank property. In particular, we require collections of consecutive columns to have certain rank properties. For such codes, we are able to identify an efficient delivery phase and determine the precise coded caching rate. We demonstrate that the subpacketization level is at most $q^k (k+1)$ whereas the coded caching gain scales as $k+1$ with respect to an uncoded caching scheme. Thus, different choices of $k$ allow the system designer significant flexibility to choose the appropriate operating point.
\item We discuss several constructions of generator matrices that satisfy the required rank property. We characterize the ranges of alphabet sizes $(q)$ over which these matrices can be constructed. If one has a given subpacketization budget in a specific setting, we are able to find a set of schemes that fit the budget while leveraging the rate gains of coded caching.
\end{itemize}
\section{Proposed low subpacketization level scheme}
 \label{sec:design}

All our constructions of low subpacketization schemes will stem from resolvable designs ({\it cf.} Definition \ref{def:resolv_design}). Our overall approach is to first show that any $(n,k)$ linear block code over $GF(q)$ can be used to obtain a resolvable block design. The placement scheme obtained from this resolvable design is such that $M/N = 1/q$. Under certain (mild) conditions on the generator matrix we show that a delivery phase scheme can be designed that allows for a significant rate gain over the uncoded scheme while having a subpacketization level that is significantly lower than \cite{maddahN14}. Furthermore, our scheme can be transformed into another scheme that operates at the point $M/N = 1 - \frac{k+1}{nq}$. Thus, intermediate values of $M/N$ can be obtained via memory sharing. We also discuss situations under which we can operate over modular arithmetic $\mathbb Z_q = \mathbb{Z} \mod q$ where $q$ is not necessarily a prime or a prime power; this allows us to obtain a larger range of parameters.

\subsection{Resolvable Design Construction}
\label{sec:construction}
 Consider a $(n,k)$ linear block code over $GF(q)$. To avoid trivialities we assume that its generator matrix does not have an all-zeros column. We collect its $q^k$ codewords 
 and construct a matrix $\bfT$ of size $n\times q^{k}$ as follows.
 \begin{equation}
 \label{eq:T}
 \bfT=[\bfc_0^T,\bfc_1^T,\cdots,\bfc_{q^{k}-1}^T],
 \end{equation}
where the $1 \times n$ vector $\bfc_\ell$ represents the $\ell$-th codeword of the code. Let $X=\{0,1,\cdots,q^{k}-1\}$ be the point set and $\calA$ be the collection of all subsets $B_{i,l}$ for $0\le i\le n-1$ and $0\le l\le q-1$, where
 \begin{align*}
 B_{i,l}&=\{j:\bfT_{i,j}=l\}.
 \end{align*}
Using this construction, we can obtain the following result.
\begin{lemma}
 \label{lemma:Resolve_MDS}
The construction procedure above results in a design $(X, \calA)$ where $X = \{0,1,\cdots, q^k-1\}$ and $|B_{i,l}| = q^{k-1}$ for all $0 \leq i \leq n-1$ and $0 \leq l \leq q-1$. Furthermore, the design is resolvable with parallel classes given by $\calP_i = \{B_{i,l}: 0 \leq l \leq q-1\}$, for $0\leq i \leq n-1$.
\end{lemma}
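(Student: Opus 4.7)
The plan is to exploit the linearity of the code together with the assumption that no column of the generator matrix $\bfG$ is the all-zeros vector. Let $\bfG$ be a $k \times n$ generator matrix and denote its $i$-th column by $\bfg_i \in GF(q)^k$. Index the codewords by $\bfu \in GF(q)^k$ via $\bfc(\bfu) = \bfu \bfG$, and match this indexing with $j \in X = \{0,1,\dots,q^k-1\}$. Then entry $(i,j)$ of $\bfT$ satisfies $\bfT_{i,j} = \bfu_j \bfg_i$, i.e., row $i$ of $\bfT$ records the value of the linear functional $\bfu \mapsto \bfu \bfg_i$ evaluated on every element of $GF(q)^k$.

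With this reformulation, the two claims reduce to linear algebra. First, I would show that each parallel class $\calP_i$ is in fact a partition of $X$: for any fixed $i$, every codeword $\bfc_j$ has exactly one value in coordinate $i$, so the sets $B_{i,0},B_{i,1},\dots,B_{i,q-1}$ are pairwise disjoint and their union is all of $X$. This immediately yields the resolvability part, provided the uniform-size claim holds. Next, for the block size, I would argue that since $\bfg_i \neq \bfzr$ by hypothesis, the map $\bfu \mapsto \bfu \bfg_i$ from $GF(q)^k$ to $GF(q)$ is surjective and its kernel is a subspace of dimension $k-1$ with $q^{k-1}$ elements. Hence for every $l \in \{0,\dots,q-1\}$, the preimage $B_{i,l}$ is a coset of this kernel and therefore has exactly $q^{k-1}$ elements. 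Uniformity of $|B_{i,l}|$ across both $i$ and $l$ then confirms that $(X,\calA)$ is a design in the sense of Definition \ref{def:comb_design}.

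Putting the two pieces together, $\calP_i$ is a set of $q$ disjoint blocks (each of size $q^{k-1}$) whose union is $X$, matching Definition \ref{def:resolv_design} of a parallel class. Since the collection $\{\calP_0,\calP_1,\dots,\calP_{n-1}\}$ partitions $\calA$ (every block $B_{i,l}$ lies in precisely one $\calP_i$), we obtain a resolution of the design. I do not expect any real obstacle here: the only place where the hypothesis is used is the non-triviality of $\bfg_i$, without which the preimages could have unequal sizes (some being empty and others being all of $GF(q)^k$), which would spoil both the block-size uniformity and the block-design property. The argument is essentially a restatement of the classical fact that the coordinate functions of a linear code are balanced whenever the corresponding generator-matrix column is nonzero.
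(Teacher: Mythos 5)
Your proof is correct and takes essentially the same approach as the paper: both hinge on the observation that column $\bfg_i\neq\bfzr$ makes the map $\bfu\mapsto\bfu\bfg_i$ a nonzero linear functional, so each fiber $B_{i,l}$ has size $q^{k-1}$ (you phrase it as cosets of the $(k-1)$-dimensional kernel, the paper phrases it as fixing a coordinate $a^*$ with $g_{a^*i}\neq 0$ and solving for $\bfu_{a^*}$ with $k-1$ free variables). Your explicit note that the $B_{i,l}$ for fixed $i$ are disjoint with union $X$ is a small clarification the paper leaves implicit, but the substance is the same.
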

\begin{proof}
Let $\bfG=[g_{ab}]$, for $0\le a\le k-1$, $0\le b\le n-1$, $g_{ab}\in GF(q)$. Note that for $\Delta = [\Delta_0 ~\Delta_1~ \dots~\Delta_{n-1}]= \bfu \bfG$, we have
\begin{align*}
\Delta_b &= \sum_{a=0}^{k-1} \bfu_ag_{ab},
\end{align*}
where $\bfu=[\bfu_0,\cdots, \bfu_{k-1}]$. Let $a^*$ be such that $g_{a^*b} \neq 0$. Consider the equation
\begin{align*}
\sum_{a \neq a^*} \bfu_ag_{ab}&=\Delta_b-\bfu_{a^*}g_{a^*b},
\end{align*}
where $\Delta_b$ is fixed. For arbitrary values of $\bfu_a$, $a \neq a^*$, this equation has a unique solution for $\bfu_{a^*}$, which implies that for any $\Delta_b$, $|B_{b,\Delta_b   }| = q^{k-1}$ and that $\calP_{b}$ forms a parallel class.
\end{proof}
\begin{remark} A $k \times n$ generator matrix over $GF(q)$ where $q$ is a prime power can also be considered as a matrix over an extension field $GF(q^m)$ where $m$ is an integer. Thus, one can obtain a resolvable design in this case as well; the corresponding parameters can be calculated in an easy manner.
\end{remark}

\begin{remark} We can also consider linear block codes over $\mathbb{Z} \mod q$ where $q$ is not necessarily a prime or a prime power. In this case the conditions under which a resolvable design can be obtained by forming the matrix $\bfT$ are a little more involved. We discuss this in Lemma \ref{lemma:resol_ring} in the Appendix.
\end{remark}

\begin{example}
 	\label{eg:resolv_q_3_k_2}
 	Consider a $(4,2)$ linear block code over $GF(3)$ with generator matrix
 	$$
 	\bfG=\begin{bmatrix}
 	1&0&1&1\\
 	0&1&1&2
 	\end{bmatrix}.
 	$$
 	
 	Collecting the nine codewords, $\bfT$ is constructed as follows.
 	$$
 	\bfT=
 	\begin{bmatrix}
 	0&0&0&1&1&1&2&2&2\\
 	0&1&2&0&1&2&0&1&2\\
 	0&1&2&1&2&0&2&0&1\\
 	0&2&1&1&0&2&2&1&0
 	\end{bmatrix}.
 	$$
 	Using $\bfT$, we generate the resolvable block design $(X,\calA)$ where
 	the point set is $X=\{0,1,2,3,4,5,6,7,8\}$. For instance, block $B_{0,0}$ is obtained by identifying the column indexes of zeros in the first row of $\bfT$, i.e., $B_{0,0}=\{0,1,2\}$. Following this, we obtain
 	\begin{align*}
 	\calA=&\{\{0,1,2\},\{3,4,5\},\{6,7,8\},\{0,3,6\},\{1,4,7\},\{2,5,8\},\\
 	&\{0,5,7\}, \{1,3,8\},\{2,4,6\},\{0,4,8\},\{2,3,7\},\{1,5,6\}\}.
 	\end{align*}
 	It can be observed that $\calA$ has a resolution ({\it cf.} Definition \ref{def:resolv_design}) with the following parallel classes.
 	\begin{align*}
 	\calP_0 &=\{\{0,1,2\},\{3,4,5\},\{6,7,8\}\},\\
 	\calP_1 &=\{\{0,3,6\},\{1,4,7\},\{2,5,8\}\},\\
 	\calP_2 &=\{\{0,5,7\},\{1,3,8\},\{2,4,6\}\}, \text{~and}\\
 	\calP_3 &=\{\{0,4,8\},\{2,3,7\},\{1,5,6\}\}.
 	\end{align*}
 \end{example}
%
%

\subsection{A special class of linear block codes}

We now introduce a special class of linear block codes whose generator matrices satisfy specific rank properties. It turns out that resolvable designs obtained from these codes are especially suited for usage in coded caching.

Consider the generator matrix $\bfG$ of a $(n,k)$ linear block code over $GF(q)$. 
 The $i$-th column of $\bfG$ is denoted by $\bfg_i$.
Let $z$ be the least positive integer such that $k+1$ divides $nz$ (denoted by $k+1~|~nz$). We let $(t)_n$ denote $t~\text{mod}~n$.

In our construction we will need to consider various collections of $k+1$ consecutive columns of $\bfG$ (wraparounds over the boundaries are allowed). For this purpose, let $\calT_a=\{a(k+1),\cdots, a(k+1)+k\}$  ($a$ is a non-negative integer) and $\calS_a= \{(t)_n~|~t\in \calT_a \}$. Let $\bfG_{\calS_a}$ be the $k \times (k+1)$ submatrix of $\bfG$ specified by the columns in $\calS_a$, i.e.,  $\bfg_{\ell}$ is a column in  $\bfG_{\calS_a}$ if $\ell \in \calS_a$.
Next, we define the $(k,k+1)$-consecutive column property that is central to the rest of the discussion.

\begin{definition}{\it $(k,k+1)$-consecutive column property.}
\label{def:MDSproperty}
Consider the submatrices of $\bfG$ specified by  $\bfG_{\calS_a}$ for $0\le a\le \frac{zn}{k+1}-1$. We say that $\bfG$ satisfies the $(k,k+1)$-consecutive column property if all $k \times k$ submatrices of each $\bfG_{\calS_a}$ are full rank.
\end{definition}
Henceforth, we abbreviate the $(k,k+1)$-consecutive column property as $(k,k+1)$-CCP.
\begin{example}
\label{eg:code_k=2_n=4}
In Example \ref{eg:resolv_q_3_k_2} we have $k=2, n=4$ and hence $z=3$. Thus, $\calS_0 = \{0,1,2\}, \calS_1 = \{3,0,1\}, \calS_2 = \{2, 3, 0\}$ and $\calS_3 = \{1,2,3\}$.
The corresponding generator matrix $\bfG$ satisfies the $(k,k+1)$ CCP as any two columns of the each of submatrices $\bfG_{\calS_i}, i = 0, \dots, 3$ are linearly independent over $GF(3)$.
\end{example}

We note here that one can also define different levels of the consecutive column property. Let $\calT_a^\alpha=\{a\alpha,\cdots, a\alpha+\alpha-1\}$, $\calS_a^\alpha= \{(t)_n~|~t\in \calT_a^\alpha \}$ and $z$ is the least positive integer such that $\alpha| nz$.
\begin{definition}{\emph{$(k,\alpha)$-consecutive column property}}
\label{def:kalphacc}
Consider the submatrices of $\bfG$ specified by $\bfG_{\calS^\alpha_a}$ for $0\le a\le \frac{zn}{\alpha}-1$. We say that $\bfG$ satisfies the $(k,\alpha)$-consecutive column property, where $\alpha \leq k$ if each $\bfG_{\calS^\alpha_a}$ has full rank. In other words, the $\alpha$ columns in each $\bfG_{\calS^\alpha_a}$ are linearly independent.
\end{definition}
As pointed out in the sequel, codes that satisfy the $(k,\alpha)$-CCP, where $\alpha \leq k$ will result in caching systems that have a multiplicative rate gain of $\alpha$ over an uncoded system. 
Likewise, codes that satisfy the $(k,k+1)$-CCP will have a gain of $k+1$ over an uncoded system. In the remainder of the paper, we will use the term CCP to refer to the $(k,k+1)$-CCP if the value of $k$ is clear from the context.

\subsection{Usage in a coded caching scenario}
A resolvable design generated from a linear block code that satisfies the CCP can be used in a coded caching scheme as follows. We associate the users with the blocks. Each subfile is associated with a point and an additional index. The placement scheme follows the natural incidence between the blocks and the points; a formal description is given in Algorithm \ref{Alg:Placement} and illustrated further in Example \ref{eg:placement_q_3_k_2}. 

\begin{algorithm}[t]
	\caption{Placement Scheme}
	\label{Alg:Placement}
   \SetKwInOut{Input}{Input}
   \SetKwInOut{Output}{Output}
   \Input{Resolvable design $(X, \calA)$ constructed from a $(n,k)$ linear block code. Let $z$ be the least positive integer such that $k+1 ~|~ nz$.}
   Divide each file $W_n$, for $n \in [N]$ into $q^k z$ subfiles. Thus, $W_n = \{W_{n,t}^s : t \in \{0, \dots, q^k - 1\} \text{~and~} s \in \{0, \dots, z-1\}\}$ \;
   User $U_B$ for $B \in \calA$ caches $Z_B = \{W_{n,t}^s : n \in [N], t \in B \text{~and~} s \in \{0, \dots, z-1\} \}$ \;
   \Output{Cache content of user $U_B$ denoted $Z_B$ for $B \in \calA$.}
\end{algorithm}

\begin{example}
\label{eg:placement_q_3_k_2}
Consider the resolvable design from Example \ref{eg:resolv_q_3_k_2}, where we recall that $z=3$. The blocks in $\calA$ correspond to twelve users $U_{012}$, $U_{345}$, $U_{678}$, $U_{036}$, $U_{147}$, $U_{258}$, $U_{057}$, $U_{138}$, $U_{246}$, $U_{048}$, $U_{237}$, $U_{156}$. Each file is partitioned into $F_s=9\times z=27$ subfiles, each of which is denoted by $W_{n,t}^s$, $t=0,\cdots,8$, $s=0,1,2$. 
The cache in user $U_{abc}$, denoted $Z_{abc}$ is specified as $Z_{abc}=\{W_{n,t}^s~|~t \in \{a,b,c\}, s \in \{0,1,2\} \text{~and~} n \in [N]\}$. 
This corresponds to a coded caching system where each user caches $1/3$-rd of each file so that $M/N = 1/3$.
\end{example}

In general, (see Algorithm \ref{Alg:Placement}) we have $K = |\calA| = nq$ users. Each file $W_n$, $n\in [N]$ is divided into $q^kz$ subfiles $W_n=\{W_{n,t}^s~|~0\le t\le q^k-1,0\le s\le z-1\}$. A subfile $W_{n,t}^s$ is cached in user $U_B$ where $B \in \calA$ if $t\in B$. Therefore, each user caches a total of $Nq^{k-1}z$ subfiles. As each file consists of $q^k z$ subfiles, we have that $M/N = 1/q$.

It remains to show that we can design a delivery phase
scheme that satisfies any possible demand pattern.
Suppose that in the delivery phase user $U_B$ requests file $W_{d_{B}}$ where $d_B \in [N]$. The server responds by transmitting several equations that satisfy each user. Each equation allows $k+1$ users from {\it different parallel classes}  to simultaneously obtain a missing subfile. Our delivery scheme is such that the set of transmitted equations can be classified into various {\it recovery sets} that correspond to appropriate collections of parallel classes. For example, in Fig. \ref{Fig:Recovery}, $\calP_{\calS_0} = \{\calP_0, \calP_1, \calP_2\}, \calP_{\calS_1} = \{\calP_0, \calP_1, \calP_3\}$ and so on. It turns out that these recovery sets correspond precisely to the sets $\calS_a, 0 \leq a \leq \frac{zn}{k+1} -1$ defined earlier. We illustrate this by means of the example below.

\begin{figure}[t]
		\centering
		\includegraphics[scale=0.8]{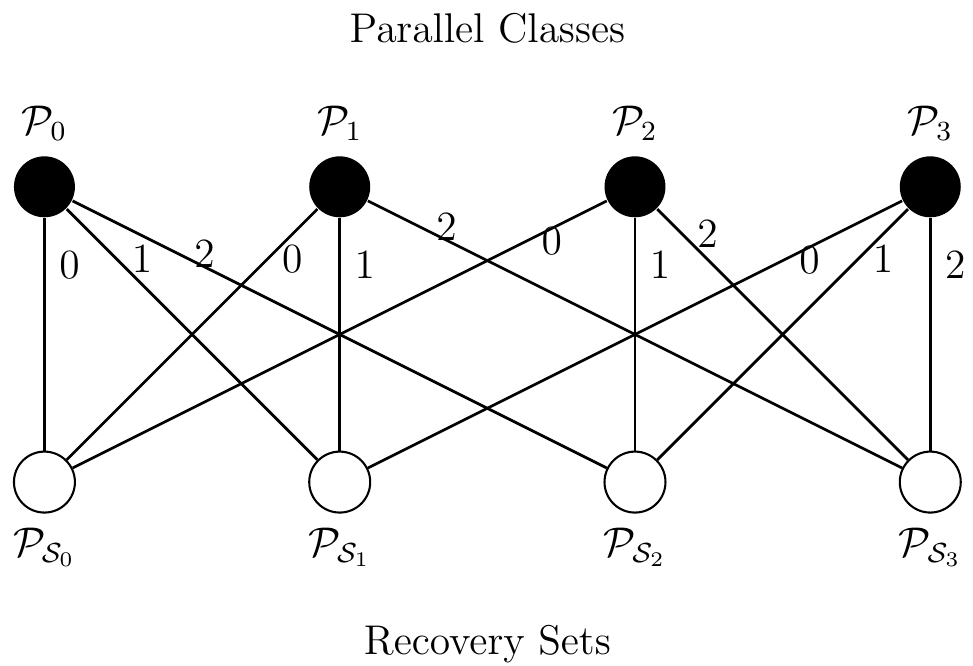}
		\caption{Recovery set bipartite graph}
		\label{Fig:Recovery}
\end{figure}

\begin{example}
	\label{ex:delivery}	
Consider the placement scheme specified in Example \ref{eg:placement_q_3_k_2}. Let each user $U_B$ request file $W_{d_B}$. The recovery sets are specified by means of the recovery set bipartite graph shown in Fig. \ref{Fig:Recovery}, e.g., $\calP_{\calS_1}$ corresponds to $\calS_1 = \{0,1,3\}$. The outgoing edges from each parallel class are labeled arbitrarily with numbers $0, 1$ and $2$. Our delivery scheme is such that each user recovers missing subfiles with a specific superscript from each recovery set that its corresponding parallel class participates in. For instance, a user in parallel class $\calP_1$ recovers missing subfiles with superscript $0$ from $\calP_{\calS_0}$, superscript 1 from $\calP_{\calS_1}$ and superscript 2 from $\calP_{\calS_3}$; these superscripts are the labels of outgoing edges from $\calP_1$ in the bipartite graph.


It can be verified, e.g., that user $U_{012}$ which lies in $\calP_0$ recovers all missing subfiles with superscript $1$ from the equations below.
	\begin{align*}
		&W^1_{d_{012},3}\oplus W^1_{d_{036},2}\oplus W^0_{d_{237},0},&
		W^1_{d_{012},6}\oplus W^1_{d_{036},1}\oplus W^0_{d_{156},0},\\	
		&W^1_{d_{012},4}\oplus W^1_{d_{147},0}\oplus W^0_{d_{048},1},&
		W^1_{d_{012},7}\oplus W^1_{d_{147},2}\oplus W^0_{d_{237},1},\\
		&W^1_{d_{012},8}\oplus W^1_{d_{258},0}\oplus W^0_{d_{048},2},&
		W^1_{d_{012},5}\oplus W^1_{d_{258},1}\oplus W^0_{d_{156},2}.
	\end{align*}
Each of the equations above benefits three users. They are generated simply by choosing $U_{012}$ from $\calP_0$, {\it any} block from $\calP_1$ and the last block from $\calP_3$ so that the {\it intersection of all these blocks is empty}. The fact that these equations are useful for the problem at hand is a consequence of the CCP. The process of generating these equations can be applied to all possible recovery sets. It can be shown that this allows all users to be satisfied at the end of the procedure.
	
\end{example}


In what follows, we first show that for the recovery set $\calP_{\calS_a}$ it is possible to generate equations that benefit $k+1$ users simultaneously.


\begin{claim}
\label{claim:MDSintersection}
	Consider the resolvable design $(X, \calA)$ constructed as described in Section III.A by a $(n,k)$ linear block code that satisfies the CCP. Let $\calP_{\calS_a}=\{\calP_i~|~i\in \calS_a\}$ for $0\le a\le \frac{zn}{k+1}-1$, i.e., it is the subset of parallel classes corresponding to $\calS_a$. We emphasize that $|\calP_{\calS_a}| = k+1$. Consider blocks $B_{i_1, l_{i_1}}, \dots, B_{i_{k}, l_{i_{k}}}$ (where $l_{i_j} \in \{0, \dots, q-1\}$) that are picked from any $k$ distinct parallel classes of $\calP_{\calS_a}$. Then, $|\cap_{j=1}^{k} B_{i_j, l_{i_j}}| = 1$.
\end{claim}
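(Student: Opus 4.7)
The plan is to translate the set-theoretic intersection into a linear-algebraic condition on the underlying codewords and then invoke the CCP directly. First I would rewrite the intersection using the definition of the blocks: since $B_{i,l}=\{j : \bfT_{i,j}=l\}$ and the $j$-th column of $\bfT$ is the codeword $\bfc_j$, an index $j$ lies in $\bigcap_{r=1}^{k} B_{i_r,l_{i_r}}$ if and only if the codeword $\bfc_j$ has prescribed symbols $l_{i_r}$ in positions $i_1,\dots,i_k$. So counting the intersection is the same as counting codewords with prescribed entries at $k$ specified coordinates.

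Next I would parametrize codewords as $\bfc_j = \bfu\bfG$ for some $\bfu\in GF(q)^k$ (the standard bijection that underlies the construction in Section III.A, with $|\{\bfu\}|=q^k = |X|$). The constraint $\bfT_{i_r,j}=l_{i_r}$ becomes the linear equation $\bfu\,\bfg_{i_r}=l_{i_r}$ for $r=1,\dots,k$. Stacking these, I get a system $\bfu\,\bfG' = [l_{i_1},\dots,l_{i_k}]$, where $\bfG'$ is the $k\times k$ submatrix of $\bfG$ obtained by selecting the columns indexed by $i_1,\dots,i_k$.

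The key step, and where I expect the CCP to do all the work, is the following: the indices $i_1,\dots,i_k$ are drawn from $k$ distinct parallel classes among the $k+1$ classes in $\calP_{\calS_a}$, so $\{i_1,\dots,i_k\}\subset \calS_a$ is a $k$-subset of the $(k+1)$-set of consecutive-column indices (with wraparound) defining $\bfG_{\calS_a}$. By Definition 5, every such $k\times k$ submatrix of $\bfG_{\calS_a}$ is full rank, hence $\bfG'$ is invertible over $GF(q)$.

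Therefore the linear system has a unique solution $\bfu\in GF(q)^k$, which in turn corresponds to exactly one codeword $\bfc_j$ and hence to exactly one index $j\in X$. This proves $|\bigcap_{r=1}^{k} B_{i_r,l_{i_r}}|=1$. The main obstacle is essentially bookkeeping: making sure the indexing of codewords by $\bfu$ is genuinely a bijection with $X$ (so that a unique $\bfu$ yields a unique $j$), and that the chosen $k$ parallel-class indices form a valid $k$-subset of $\calS_a$ so that the CCP applies; both follow from the construction of $\bfT$ and Definition 4 respectively.
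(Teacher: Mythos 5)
Your proof is correct and follows essentially the same route as the paper's: both translate membership in $\cap_{j} B_{i_j,l_{i_j}}$ into the linear system $\bfu\bfG'=[l_{i_1},\dots,l_{i_k}]$ over $GF(q)$ and invoke the CCP to conclude the $k\times k$ submatrix is invertible, giving a unique solution and hence a unique point of $X$. The only cosmetic difference is that you phrase the step as inverting $\bfG'$, while the paper writes out the system equation-by-equation; the substance is identical.
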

Before proving Claim \ref{claim:MDSintersection}, we discuss its application in the delivery phase. Note that the claim asserts that $k$ blocks chosen from $k$ distinct parallel classes intersect in precisely one point. Now, suppose that one picks $k+1$ users from $k+1$ distinct parallel classes, such that their intersection is empty. These blocks (equivalently, users) can participate in an equation that benefits $k+1$ users. In particular, each user will recover a missing subfile indexed by the intersection of the other $k$ blocks. We emphasize here that Claim \ref{claim:MDSintersection} is at the core of our delivery phase. Of course, we need to justify that enough equations can be found that allow all users to recover all their missing subfiles. This follows from a natural counting argument that is made more formally in the subsequent discussion. The superscripts $s \in \{0, \dots, z-1\}$ are needed for the counting argument to go through.
\begin{proof}
Following the construction in Section III.A, we note that a block $B_{i,l} \in \calP_i$ is specified by
$$
B_{i,l} = \{j : \bfT_{i,j} = l\}.
$$

Now consider $B_{i_1, l_{i_1}}, \dots, B_{i_{k}, l_{i_k}}$ (where $i_j \in \calS_a, l_{i_j} \in \{0, \dots, q-1\}$) that are picked from $k$ distinct parallel classes of  $\calP_{\calS_a}$. W.l.o.g. we assume that $i_1 < i_2 < \dots < i_{k}$. Let $\calI =  \{i_1, \dots, i_{k}\}$ and $\bfT_{\calI}$ denote the submatrix of $\bfT$ obtained by retaining the rows in $\calI$. We will show that the vector $[l_{i_1}~l_{i_2}~\dots~l_{i_k}]^T$ is a column in $\bfT_{\calI}$ and only appears once.

To see this  consider the system of equations in variables $\bfu_0, \dots, \bfu_{k-1}$.
\begin{align*}
\sum_{b=0}^{k-1}\bfu_{b}g_{bi_1} &= l_{i_1},\\
\mathrel{\makebox[\widthof{=}]{\vdots}}\\
\sum_{b=0}^{k-1}\bfu_{b}g_{bi_k} &= l_{i_k}.
\end{align*}
By the CCP, the vectors $\bfg_{i_1}, \bfg_{i_2}, \ldots, \bfg_{i_k}$ are linearly independent. Therefore this system of $k$ equations in $k$ variables has a unique solution over $GF(q)$. The result follows.
\end{proof}


\begin{algorithm}[t]
\SetNoFillComment
	\caption{Signal Generation Algorithm for $\calP_{\calS_a}$}
	\label{Alg:Signal}
   \SetKwInOut{Input}{Input}
   \SetKwInOut{Output}{Output}
   \Input{For $\calP \in \calP_{\calS_a}$, $E(\calP) = \text{label}(\calP - \calP_{\calS_a})$. Signal set $Sig=\emptyset$.}
   \While{any user $U_B\in \calP_j, j\in \calS_a$ does not recover all its missing subfiles with superscript $E(\calP_j)$}
   {Pick blocks $B_{j,l_j} \in \calP_j$ for all $j \in \calS_a$ and $l_j \in \{0, \dots, q-1\}$ such that $\cap_{j \in \calS_a} B_{j,l_j} = \emptyset$\; \tcc{Pick blocks from distinct parallel classes in $\calP_{\calS_a}$ such that their intersection is empty}
   	Let $\hat{l}_{s} = \cap_{j \in \calS_a \setminus \{s\}}B_{j,l_j}$ for $s\in \calS_a$\;
   \tcc{Determine the missing subfile index that the user from $\calP_s$ will recover}
   Add signal $\oplus_{s \in S_a} W^{E(\calP_s)}_{\kappa_{s,l_s},\hat{l}_s}$ to $Sig$
   \tcc{User $U_{B_{s,l_s}}$ demands file $W_{\kappa_{s,l_s}}$. This equation allows it to recover the corresponding missing subfile index $\hat{l}_s$. The superscript is determined by the recovery set bipartite graph}
   }

   \Output{Signal set $Sig$.}
\end{algorithm}
%
%
%
We now provide an intuitive argument for the delivery phase. Recall that we form a recovery set bipartite graph (see Fig. \ref{Fig:Recovery} for an example) with parallel classes and recovery sets as the disjoint vertex subsets. The edges incident on each parallel class are labeled arbitrarily from $0, \dots, z-1$. For a parallel class $\calP \in \calP_{\calS_a}$ we denote this label by $\text{label}(\calP - \calP_{\calS_a})$. For a given recovery set $\calP_{\calS_a}$, the delivery phase proceeds by choosing blocks from distinct parallel classes in $\calP_{\calS_a}$ such that their intersection is empty; this provides an equation that benefits $k+1$ users.
It turns out that the equation allows a user in parallel class $\calP \in \calP_{\calS_a}$ to recover a missing subfile with the superscript $\text{label}(\calP - \calP_{\calS_a})$.

The formal argument is made in Algorithm \ref{Alg:Signal}. For ease of notation in Algorithm \ref{Alg:Signal}, we denote the demand of user $U_{B_{i,j}}$ for $0 \leq i \leq n-1,0 \leq j \leq q-1$ by $W_{\kappa_{i,j}}$.

\begin{claim}
\label{claim:signalgeneration}
Consider a user $U_B$ belonging to parallel class $\calP \in \calP_{\calS_a}$. The signals generated in  Algorithm \ref{Alg:Signal} can recover all the missing subfiles needed by $U_B$
with superscript $E(\calP)$.
\end{claim}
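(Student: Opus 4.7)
The plan is to exhibit, for each missing subfile of $U_B$ with superscript $E(\calP)$, an explicit iteration of the while loop in Algorithm \ref{Alg:Signal} that produces a signal decoding it. Writing $\calP = \calP_i$ with $i \in \calS_a$ and $B = B_{i,l_i}$, the missing subfiles of $U_B$ with superscript $E(\calP_i)$ are exactly $\{W^{E(\calP_i)}_{d_B,t} : t \in X \setminus B\}$, so it suffices to produce, for each $t \in X \setminus B$, an admissible choice of blocks $\{B_{j,l_j}\}_{j \in \calS_a}$ such that $\hat{l}_i = t$.

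First, I would use the partition property established in Lemma \ref{lemma:Resolve_MDS}: for every $j \in \calS_a$ and every point $t \in X$, there is a unique index $l_j^\star$ with $t \in B_{j,l_j^\star}$. Setting $l_j = l_j^\star$ for $j \in \calS_a \setminus \{i\}$ and keeping $l_i$ as above, Claim \ref{claim:MDSintersection} (which relies on the CCP) applied to these $k$ blocks from distinct parallel classes yields $\bigcap_{j \in \calS_a \setminus \{i\}} B_{j,l_j^\star} = \{t\}$. Since $t \notin B = B_{i,l_i}$, the full $(k+1)$-fold intersection $\bigcap_{j \in \calS_a} B_{j,l_j}$ is empty, which is precisely the admissibility condition required by Algorithm \ref{Alg:Signal}. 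Consequently the algorithm emits the signal $\bigoplus_{s \in \calS_a} W^{E(\calP_s)}_{\kappa_{s,l_s},\hat{l}_s}$ with $\hat{l}_i = t$.

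Next, I would check that $U_B$ can actually decode $W^{E(\calP_i)}_{d_B,t}$ from this signal. For each $s \in \calS_a \setminus \{i\}$, the index $\hat{l}_s = \bigcap_{j \in \calS_a \setminus \{s\}} B_{j,l_j}$ satisfies $\hat{l}_s \in B_{i,l_i} = B$ since $i$ lies in $\calS_a \setminus \{s\}$; therefore the subfile $W^{E(\calP_s)}_{\kappa_{s,l_s},\hat{l}_s}$ is stored in $Z_B$ by the placement rule of Algorithm \ref{Alg:Placement}. XOR-ing these $k$ cached subfiles with the received signal isolates $W^{E(\calP_i)}_{d_B,t}$, as required.

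Letting $t$ range over $X \setminus B$ produces $q^k - q^{k-1}$ distinct signals recovering all missing subfiles of $U_B$ with superscript $E(\calP_i)$, which proves the claim. The content is essentially bookkeeping: the substantive inputs are the partition structure of each $\calP_j$ and Claim \ref{claim:MDSintersection}, both of which are already in hand. The main place where care is needed is the verification of the cache-membership step $\hat{l}_s \in B$ for every $s \neq i$, since this is what ensures the ``all-but-one'' structure of each transmitted equation with respect to $U_B$.
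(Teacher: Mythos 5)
Your proof is correct and rigorously fills in the necessary details, but it takes the opposite direction from the paper's argument. The paper's proof proceeds from the ``production side'': it fixes the block $B_{\alpha,l_\alpha}$ and counts the number of admissible iterations in which it can participate (pick $k-1$ blocks from $k-1$ of the remaining parallel classes in $q^{k-1}$ ways, then pick the last block so the full intersection is empty in $q-1$ ways, for a total of $q^k-q^{k-1}$), then shows by a parallel-class disjointness argument that distinct choices yield distinct recovered points, and finally matches this count against the $q^k-q^{k-1}$ missing subfiles. Your proof instead works from the ``demand side'': for each missing point $t \in X \setminus B$, you use the partition property to pick the unique block through $t$ in each $\calP_j$, $j \in \calS_a \setminus \{i\}$, invoke Claim~\ref{claim:MDSintersection} to pin the $k$-fold intersection to $\{t\}$, and observe that $t \notin B$ forces the $(k+1)$-fold intersection to be empty, giving an admissible iteration with $\hat{l}_i = t$. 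This surjectivity argument makes the bijection between missing points and iterations explicit without needing the paper's separate distinctness step, and it also spells out the decodability check ($\hat{l}_s \in B$ for $s \neq i$) which the paper leaves implicit under its appeal to Claim~\ref{claim:MDSintersection}. Both routes are sound and of comparable length; yours is arguably more constructive, while the paper's counting form is the one reused in the $(k,\alpha)$-CCP generalization in the Appendix.
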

\begin{proof}
    Let $\calP_\alpha \in \calP_{\calS_a}$. In the arguments below, we argue that user $U_{B_{\alpha,l_\alpha}}$ that demands file $W_{\kappa_{\alpha,l_\alpha}}$ can recover all its missing subfiles with superscript $E(\calP_\alpha)$.
    Note that $|B_{\alpha,l_\alpha}| = q^{k-1}$. Thus, user $U_{B_{\alpha,l_{\alpha}}}$ needs to obtain $q^k-q^{k-1}$ missing subfiles with superscript $E(\calP_\alpha)$.
    Consider an iteration of the while loop where block $B_{\alpha,l_\alpha}$ is picked in step 2.
    The equation in Algorithm \ref{Alg:Signal} allows it to recover $W^{E(\calP_\alpha)}_{\kappa_{\alpha,l_\alpha},\hat{l}_{\alpha}}$ where $\hat{l}_{\alpha} = \cap_{j \in \calS_a \setminus \{\alpha\}}B_{j,l_j}$.
    This is because $\cap_{j \in \calS_a} B_{j,l_j} = \emptyset$ and because of Claim \ref{claim:MDSintersection}.

    	
    Next we count the number of equations that $U_{B_{\alpha,l_{\alpha}}}$ participates in. We can pick $k-1$ users from some $k-1$ distinct parallel classes in $\calP_{\calS_a}$. This can be done in $q^{k-1}$ ways. Claim \ref{claim:MDSintersection} ensures that the blocks so chosen intersect in a single point. Next we pick a block from the only remaining parallel class in $\calP_{\calS_a}$ such that the intersection of all blocks is empty. This can be done in $q-1$ ways. Thus, there are a total of $q^{k-1}(q-1) = q^k - q^{k-1}$ equations in which user $U_{B_{\alpha,l_\alpha}}$ participates in.
    	
    It remains to argue that each equation provides a distinct subfile. Towards this end, let $\{i_1, \dots, i_k\} \subset \calS_a$ be an index set such that $\alpha \notin \{i_1, \dots, i_k\}$. Suppose that there exist sets of blocks $\{B_{i_1,l_{i_1}},\dots, B_{i_{k},l_{i_{k}}}\}$ and $\{B_{i_1,l'_{i_1}}, \dots, B_{i_{k},l'_{i_{k}}}\}$ such that $\{B_{i_1,l_{i_1}}, \dots, B_{i_{k},l_{i_{k}}}\}\neq \{B_{i_1,l'_{i_1}}, \dots, B_{i_{k},l'_{i_{k}}}\}$, but $\cap_{j=1}^{k} B_{i_j,l_{i_j}}=\cap_{j=1}^{k} B_{i_j,l'_{i_j}}=\beta$. This is a contradiction since this in turn implies that $\beta \in \cap_{j=2}^{k+1} B_{i_j,l_{i_j}}\bigcap \cap_{j=2}^{k+1} B_{i_j,l'_{i_j}}$, which is impossible since two blocks from the same parallel class have an empty intersection.

    As the algorithm is symmetric with respect to all blocks in parallel classes belonging to $\calP_{\calS_a}$, we have the required result.
\end{proof}

The overall delivery scheme  repeatedly applies Algorithm \ref{Alg:Signal} to each of the recovery sets.
\begin{lemma}
	\label{lemma:delivery}
	The proposed delivery scheme terminates and allows each user's demand to be satisfied. Furthermore the transmission rate of the server is $\frac{(q-1)n}{k+1}$ and the subpacketization level is $q^kz$.
\end{lemma}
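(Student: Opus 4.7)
The subpacketization level claim $F_s = q^{k}z$ is immediate from Algorithm~\ref{Alg:Placement}, so the substance of the lemma is termination, correctness of demand satisfaction, and the exact rate $(q-1)n/(k+1)$. My plan is to organize the argument around three bookkeeping facts: (i) each parallel class appears in exactly $z$ recovery sets, (ii) for each recovery set in which $\calP_i$ participates, Claim~\ref{claim:signalgeneration} recovers one ``stratum'' of subfiles (those with a particular superscript), and (iii) counting the signals per recovery set via Claim~\ref{claim:MDSintersection} gives the claimed rate.

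First, I would analyze how often each parallel class appears across the recovery sets. Since $\calT_a = \{a(k+1),\dots,a(k+1)+k\}$ for $0\le a\le \tfrac{zn}{k+1}-1$, the disjoint union $\bigcup_a \calT_a = \{0,1,\dots,zn-1\}$; reducing modulo $n$ then shows that each index $i \in \{0,\dots,n-1\}$ occurs in exactly $z$ of the sets $\calS_a$. Hence each parallel class $\calP_i$ belongs to exactly $z$ recovery sets, so in the recovery-set bipartite graph the edges incident to $\calP_i$ can be labeled with the distinct values $0,1,\dots,z-1$. Next I would argue correctness: fix a user $U_B$ with $B\in\calP_i$. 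For each recovery set $\calP_{\calS_a}\ni\calP_i$, Claim~\ref{claim:signalgeneration} guarantees that the signals produced by Algorithm~\ref{Alg:Signal} allow $U_B$ to recover every subfile $W^{E(\calP_i)}_{d_B,t}$ with $t\notin B$. Because the $z$ labels $E(\calP_i)$ assigned to the $z$ recovery sets containing $\calP_i$ are exactly $\{0,1,\dots,z-1\}$, $U_B$ recovers every missing subfile across all superscripts; since $|B|=q^{k-1}$ and there are $q^{k}z$ subfiles per file, $U_B$ obtains all $(q^{k}-q^{k-1})z$ missing subfiles of its requested file. Termination of Algorithm~\ref{Alg:Signal} is automatic: it exhausts a finite set of $(k+1)$-tuples of blocks with empty intersection.

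For the rate, I would count the number of signals produced per recovery set. For $\calP_{\calS_a}$, a signal corresponds to a choice of one block from each of its $k+1$ parallel classes with empty mutual intersection. The total number of such $(k+1)$-tuples is $q^{k+1}$; by Claim~\ref{claim:MDSintersection} the tuples with \emph{non-empty} intersection are in bijection with the points of $X$ (each point lies in exactly one block of each parallel class), giving $q^k$ non-empty tuples. Therefore the number of empty-intersection tuples, i.e.\ the number of signals per recovery set, is $q^{k+1}-q^{k}=q^{k}(q-1)$. Summing across all $\tfrac{zn}{k+1}$ recovery sets yields a total of $\tfrac{zn}{k+1}\cdot q^{k}(q-1)$ transmitted coded subfiles. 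Normalizing by the subpacketization level $q^{k}z$ gives the rate
\begin{equation*}
R=\frac{\frac{zn}{k+1}\, q^{k}(q-1)}{q^{k}z}=\frac{(q-1)n}{k+1}.
\end{equation*}
As a sanity check, this matches a dual count: each user needs $(q^{k}-q^{k-1})z$ missing subfiles and each signal benefits $k+1$ users, so $KR\cdot q^{k}z = (q-1)q^{k-1}z\cdot nq\cdot(k+1)$, i.e.\ $R=(q-1)n/(k+1)$.

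The main obstacle I anticipate is not any single deep step but the careful verification that the ``supply'' of signals produced for each recovery set exactly meets the ``demand'' of each user for subfiles in the corresponding superscript stratum, with no redundancy and no gap. The supply side rests on Claim~\ref{claim:MDSintersection} (which requires the CCP), and the demand side rests on the distinctness of the $z$ labels at each parallel class in the recovery-set bipartite graph, which in turn requires the counting identity that each $\calP_i$ lies in exactly $z$ recovery sets. Getting these two enumerations to align — and clearly noting the wrap-around structure enforced by the choice of $z$ as the least integer with $k+1\mid nz$ — is the most delicate part of the write-up; the rest is mechanical.
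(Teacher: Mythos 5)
Your proof is correct and follows essentially the same three-step structure as the paper's: show the recovery-set bipartite graph is biregular with each parallel class having degree $z$, invoke Claim~\ref{claim:signalgeneration} for per-recovery-set correctness, and count signals to obtain the rate. Your disjoint-union argument (that $\bigcup_a \calT_a = \{0,\dots,zn-1\}$ reduced modulo $n$ gives each index multiplicity $z$, with distinctness within each $\calS_a$ following from $k+1\le n$) and your bijection between non-empty-intersection $(k+1)$-tuples and points of $X$ are somewhat more explicit than the paper's presentation, which instead solves $a_\alpha(k+1)+b_\alpha=j+n(\alpha-1)$ directly and leaves the per-recovery-set signal count $q^k(q-1)$ implicit, but the underlying argument is the same.
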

\begin{proof}
See Appendix.
\end{proof}
The main requirement for Lemma \ref{lemma:delivery} to hold is that the recovery set bipartite graph be biregular, where multiple edges between the same pair of nodes is disallowed and the degree of each parallel class is $z$. It is not too hard to see that this follows from the definition of the recovery sets (see the proof in the Appendix for details).

In an analogous manner, if one starts with the generator matrix of a code that satisfies the $(k,\alpha)$-CCP for $\alpha \leq k$, then we can obtain the following result which is stated below. The details are quite similar to the discussion for the $(k,k+1)$-CCP and can be found in the Appendix (Section \ref{sec:kalpha_ccp_matrices}).
\begin{corollary}
Consider a coded caching scheme obtained by forming the resolvable design obtained from a $(n,k)$ code that satisfies the $(k,\alpha)$-CCP where $\alpha \leq k$. Let $z$ be the least positive integer such that $\alpha ~|~nz$. Then, a delivery scheme can be constructed such that the transmission rate is $\frac{(q-1)n}{\alpha}$ and the subpacketization level is $q^kz$. 
\end{corollary}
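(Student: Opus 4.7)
My plan is to adapt the placement-delivery pipeline developed for the $(k,k+1)$-CCP case, with the parameter $\alpha$ in place of $k+1$ throughout. The construction of the resolvable design in Lemma~\ref{lemma:Resolve_MDS} uses only the linearity of the code, not the CCP, so it still applies; I would reuse it to obtain $(X,\calA)$ with $|X|=q^k$ points and $|\calA|=nq$ blocks. The placement phase follows Algorithm~\ref{Alg:Placement} verbatim, with $z$ now redefined as the smallest positive integer such that $\alpha\mid nz$. This splits each file into $q^{k}z$ subfiles $W_{n,t}^s$ and yields $M/N=1/q$ with subpacketization $q^kz$, matching the two quantities in the claim.

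For the delivery, I would introduce $\frac{zn}{\alpha}$ recovery sets $\calP_{\calS_a^\alpha}$ of size $\alpha$ together with a biregular recovery-set bipartite graph of degree $z$ on the parallel-class side and $\alpha$ on the recovery-set side; the choice of $z$ ensures a valid $z$-edge-labeling at each parallel class, exactly as in the $(k,k+1)$ case. The first key step is to generalize Claim~\ref{claim:MDSintersection}: because the $(k,\alpha)$-CCP forces the $\alpha$ columns $\{\bfg_i : i\in\calS_a^\alpha\}$ to be linearly independent, counting the solutions of a rank-$\alpha$ (resp.\ rank-$(\alpha-1)$) linear system in $k$ unknowns over $GF(q)$ yields $|\bigcap_{j=1}^{\alpha}B_{i_j,l_j}|=q^{k-\alpha}$ and $|\bigcap_{j'\ne j}B_{i_{j'},l_{j'}}|=q^{k-\alpha+1}$ for every $j$, whenever the $\alpha$ blocks come from the $\alpha$ distinct parallel classes of $\calP_{\calS_a^\alpha}$.

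The step that genuinely departs from the $(k,k+1)$ analysis is the signal generation inside a single recovery set: for $\alpha<k$ the $\alpha$-fold intersection has $q^{k-\alpha}>1$ elements, so the ``one missing subfile per user per equation'' identity of Claim~\ref{claim:MDSintersection} fails. I would handle this by exploiting the quotient structure: the surjection $\phi_a:X\to GF(q)^{\alpha}$ defined by $\phi_a(t)=(\bfT_{i_1,t},\ldots,\bfT_{i_\alpha,t})$ has fibers that are cosets of the subspace $V:=\ker\phi_a$, each of size $q^{k-\alpha}$. Fixing a bijection between each fiber and $V$ equips every subfile index $t\in X$ with an \emph{offset} $w(t)\in V$ and a \emph{tag} $\phi_a(t)\in GF(q)^{\alpha}$. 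For each $(l_1,\ldots,l_\alpha)\in GF(q)^{\alpha}$, each $\gamma\in GF(q)\setminus\{0\}$ and each offset $w\in V$, the server transmits
\begin{equation*}
\bigoplus_{j=1}^{\alpha} W^{\text{label}(\calP_{i_j}-\calP_{\calS_a^\alpha})}_{d_{U_{B_{i_j,l_j}}},\,\beta_j},
\end{equation*}
where $\beta_j$ is the unique index whose offset is $w$ and whose tag is $(l_1,\ldots,l_{j-1},l_j+\gamma,l_{j+1},\ldots,l_\alpha)$. By the generalized intersection formula, $\beta_j\in\bigcap_{j'\ne j}B_{i_{j'},l_{j'}}\setminus B_{i_j,l_j}$, so the equation is a legitimate all-but-one equation benefiting $\alpha$ users simultaneously.

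The main obstacle I foresee is the combinatorial bookkeeping, namely the verification that every missing (user, subfile, superscript) triple is delivered by exactly one transmitted equation. This follows from the same uniqueness argument as in Claim~\ref{claim:signalgeneration}: given a user $U_{B_{i_j,l_j}}$ and a missing subfile $t\notin B_{i_j,l_j}$, the parameters $l_{j'}=\bfT_{i_{j'},t}$ for $j'\ne j$, $\gamma=\bfT_{i_j,t}-l_j\ne 0$, and $w=w(t)$ are all forced, so the delivering equation is unique. Each recovery set therefore emits $q^{\alpha}(q-1)\,q^{k-\alpha}=q^k(q-1)$ equations and covers the $q^{k-1}(q-1)$ missing subfiles of the relevant superscript for each of its $\alpha q$ users; multiplying by the $\frac{zn}{\alpha}$ recovery sets and dividing by the subpacketization $q^kz$ gives the advertised rate $\frac{(q-1)n}{\alpha}$, while the biregularity of the recovery-set bipartite graph guarantees that, across all recovery sets, each user receives all of its $zq^{k-1}(q-1)$ missing subfiles.
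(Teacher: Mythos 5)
Your proposal is correct and follows essentially the same route as the paper's Appendix proof (resolvable design, same placement with $z$ redefined so $\alpha\mid nz$, the generalized intersection lemma giving $q^{k-\alpha}$ and $q^{k-\alpha+1}$, recovery sets $\calP_{\calS_a^\alpha}$ with the biregular labeling, $q^k(q-1)$ equations per recovery set, and the same rate count). Your offset/tag parametrization via $\phi_a$ and $V=\ker\phi_a$ is simply a concrete, coherent instantiation of the indexing of $\hat{L}_s$ that the paper's Algorithm~4 leaves as an arbitrary enumeration, so the two arguments are mathematically equivalent.
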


\subsection{Obtaining a scheme for $M/N = 1 - \frac{k+1}{nq}$.}
The construction above works for a system where $M/N=1/q$. It turns out that this can be converted into a scheme for $\frac{M}{N} = 1 - \frac{k+1}{nq}$. Thus, any convex combination of these two points can be obtained by memory-sharing.

Towards this end, we note that the class of coded caching schemes considered here can be specified by an \emph{equation-subfile} matrix. This is inspired by the hypergraph formulation and the placement delivery array (PDA) based schemes for coded caching in \cite{shangguan2016centralized} and \cite{yan_et_al17}. Each equation is assumed to be of the all-but-one type, i.e., it is of the form $W_{d_{t_1},\calA_{j_1}} \oplus W_{d_{t_2},\calA_{j_2}} \oplus \cdots \oplus W_{d_{t_{m}},\calA_{j_{m}}}$ where for each $\ell \in [m]$, we have the property that user $U_{t_\ell}$ does not cache subfile $W_{n,\calA_{j_\ell}}$ but caches all subfiles $W_{n,\calA_{j_s}}$ where $\{j_s : s \in [m], s \neq \ell\}$. 

The coded caching system corresponds to a $\Delta\times F_s$ equation-subfile matrix $\bfS$ as follows. We associate each row of $\bfS$ with an equation and each column with a subfile. We denote the $i$-th row of $\bfS$ by $Eq_i$ and $j$-th column of $\bfS$ by $\calA_j$. The value $\bfS(i,j)=t$ if in the $i$-th equation, user $U_t$ recovers subfile $W_{d_t,\calA_j}$, otherwise, $\bfS(i,j)=0$. Suppose that these $\Delta$ equations allow each user to satisfy their demands, i.e., $\bfS$ corresponds to a valid coded caching scheme. It is not too hard to see that the placement scheme can be obtained by examining $\bfS$. Namely, user $U_t$ caches the subfile corresponding to the $j$-th column if integer $t$ does not appear in the $j$-th column.



\begin{example}
	\label{ex:lemmablock}
	Consider a coded caching system in \cite{maddahN14} with $K=4$, $\Delta=4$ and $F_s=6$. We denote the four users as $U_1, U_2, U_3, U_4$.
 Suppose that the equation-subfile matrix $\bfS$ for this scheme is as specified below.
	\begin{align*}
	\bordermatrix{~ & \calA_1 & \calA_2 & \calA_3 & \calA_4 & \calA_5 & \calA_6 \cr
		Eq_1 & 3 & 2 & 0 & 1 & 0 & 0\cr
		Eq_2 & 4 & 0 & 2 & 0 & 1 & 0\cr
		Eq_3 & 0 & 4 & 3 & 0 & 0 & 1\cr
		Eq_4 & 0 & 0 & 0 & 4 & 3 & 2\cr}.
	\end{align*}
Upon examining $\bfS$ it is evident for instance that user $U_1$ caches subfiles $\calA_1, \dots, \calA_3$ as the number $1$ does not appear in the corresponding columns. Similarly, the cache placement of the other users can be obtained. Interpreting this placement scheme in terms of the user-subfile assignment, it can be verified that the design so obtained corresponds to the transpose of the scheme considered in Example \ref{eq:eg_2_intro} (and also to the scheme of \cite{maddahN14} for $K=4$, $M/N = 1/2$).
 %
\end{example}

\begin{lemma}
	\label{lemma:symmcache}
	Consider a $\Delta\times F_s$ equation-subfile matrix $\bfS$ whose entries belong to the set $\{0, 1, \dots, K\}$. It corresponds to a valid coded caching system if the following three conditions are satisfied.
	\begin{itemize}
		\item There is no non-zero integer appearing more than once in each column.
		\item There is no non-zero integer appearing more than once in each row.
		\item If $\bfS(i_1,j_1)=\bfS(i_2,j_2)\neq 0$, then $\bfS(i_1,j_2)=\bfS(i_2,j_1)=0$.
	\end{itemize}

\end{lemma}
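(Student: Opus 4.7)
The plan is to define, directly from $\bfS$, a placement scheme and a delivery scheme, and then check that the three conditions together force every participating user in every transmitted signal to be able to decode. Following the paragraph immediately before the lemma, I set user $U_t$ to cache subfile $\calA_j$ iff the integer $t$ does not appear in column $j$ of $\bfS$, and for each row $i \in [\Delta]$ I let the server transmit the all-but-one-style signal
\begin{equation*}
Eq_i \;=\; \bigoplus_{j \,:\, \bfS(i,j) \neq 0} W_{d_{\bfS(i,j)},\,\calA_j}.
\end{equation*}
Validity then reduces to two checks: (a) whenever $\bfS(i,j) = t \neq 0$, user $U_t$ caches every other subfile $\calA_{j'}$ appearing in $Eq_i$, so that XOR-ing away the cached terms isolates the target $W_{d_t,\calA_j}$; and (b) every user actually recovers every subfile it does not cache.

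The core step is (a), and it is where conditions 2 and 3 cooperate. Suppose for contradiction that $\bfS(i,j) = t \neq 0$, that some $j' \neq j$ satisfies $\bfS(i,j') \neq 0$, and yet $U_t$ does not cache $\calA_{j'}$. By the placement rule, the integer $t$ must then appear somewhere in column $j'$, say $\bfS(i',j') = t$. Condition 2 (no non-zero value appears twice in a row) forces $i' \neq i$. But now $\bfS(i,j) = \bfS(i',j') = t \neq 0$ with $(i,j) \neq (i',j')$, so condition 3 yields $\bfS(i,j') = 0$, contradicting $\bfS(i,j') \neq 0$. Hence every other participating subfile in $Eq_i$ is cached by $U_t$, and the XOR decodes the one missing term.

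Check (b) is then immediate: if $U_t$ does not cache $\calA_j$, the placement rule says $t$ appears in column $j$, say at row $i$, so by (a) user $U_t$ extracts $W_{d_t,\calA_j}$ from $Eq_i$. Condition 1 (no non-zero value appears twice in a column) plays an auxiliary role, guaranteeing that any given missing subfile of any user is recovered from a unique equation, so the delivery carries no redundant transmissions. The main obstacle is really just pinpointing the interaction between conditions 2 and 3 in step (a); once that conflict is ruled out, the rest of the verification is straightforward bookkeeping.
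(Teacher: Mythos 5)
Your proof is correct and follows essentially the same approach as the paper: the same placement rule, the same all-but-one signal per row, and the same use of condition 2 (to rule out a second occurrence of $t$ in row $i$) followed by condition 3 (to rule out any occurrence of $t$ in column $j'$ outside row $i$). Your step (b) and the remark on condition 1 merely make explicit the coverage and non-redundancy observations that the paper states briefly.
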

\begin{proof}
The placement scheme is obtained as discussed earlier, i.e., user $U_t$ caches subfiles $W_{n,\calA_j}$ if integer $t$ does not appear in column $\calA_j$. Therefore, matrix $\bfS$ corresponds to a placement scheme.

Next we discuss the delivery scheme. Note that $Eq_i$ corresponds to an equation as follows.
$$W_{d_{t_1},\calA_{j_1}} \oplus W_{d_{t_2},\calA_{j_2}} \oplus \cdots \oplus W_{d_{t_{m}},\calA_{j_{m}}},$$
where $\bfS(i,j_1)=t_1,\cdots, \bfS(i,j_{m})=t_{m}$. The above equation can allow $m$ users to recover subfiles simultaneously if (a) $U_{t_\ell}$ does not cache $W_{n,\calA_{j_\ell}}$ and (b) $U_{t_\ell}$ caches all $W_{n,\calA_{j_s}}$ where $\{j_s : s \in [m], s \neq \ell \}$. It is evident that $U_{t_\ell}$ does not cache $W_{n,\calA_{j_\ell}}$ owing to the placement scheme. Next, to guarantee the condition (b), we need to show that integer $t_\ell=\bfS(i,j_\ell)$ will not appear in column $\calA_{j_s}$ in $\bfS$ where $\{j_s : s \in [m], s \neq \ell\}$. Towards this end, $t_\ell \neq \bfS(i,j_s)$ because of Condition 2. Next, consider the non-zero entries that lie in the column $\calA_{j_s}$ but not in the row $Eq_i$. Assume there exists an entry $\bfS(i',j_s)$ such that $\bfS(i',j_s)=\bfS(i,j_\ell)=t_\ell$ and $i'\neq i$, then $\bfS(i,j_s)=t_s\neq 0$, which is a contradiction to Condition 3. Finally, Condition 1 guarantees that each missing subfile is recovered only once.
\end{proof}
User $U_t$ caches a fraction $\frac{M_t}{N}=\frac{L_t}{F_s}$ where $L_t$ is the number of columns of $\bfS$ that do not have the entry $t$. Similarly, the transmission rate is given by $R=\frac{\Delta}{F_s}$.



The crucial point is that the transpose of $\bfS$, i.e., $\bfS^T$ also corresponds to a coded caching scheme. This follows directly from the fact that $\bfS^T$ also satisfies the conditions in Lemma \ref{lemma:symmcache}. In particular, $\bfS^T$ corresponds to a coded caching system with $K$ users and $\Delta$ subfiles. In the placement phase, the cache size of $U_t$ is $\frac{M'_t}{N}=\frac{\Delta-F_s+L_t}{\Delta}$. In the delivery phase, by transmitting $F_s$ equations corresponding to the rows of $\bfS^T$, all missing subfiles can be recovered. Then, the transmission rate is $R'=\frac{F_s}{\Delta}$.

Applying the above discussion in our context, consider the equation-subfile matrix $\bfS$ corresponding to the coded caching system with $K=nq$, $\frac{M_t}{N}=\frac{1}{q}$ for $1\le t\le nq$, $F_s=q^kz$ and $\Delta= q^k(q-1)\frac{nz}{k+1}$. Then $\bfS^T$ corresponds to a system with $K'=nq$, $\frac{M'}{N}=1-\frac{k+1}{nq}$, $F'_s=(q-1)q^k\frac{zn}{k+1}$, and transmission rate $R'=\frac{F_s}{\Delta}=\frac{k+1}{(q-1)n}$. The following theorem is the main result of this paper.
\begin{theorem}
	\label{them:main}
	Consider a $(n,k)$ linear block code over $GF(q)$ that satisfies the $(k,k+1)$ CCP. This corresponds to a coded caching scheme with $K=nq$ users, $N$ files in the server where each user has a cache of size $M \in \bigg{\{} \frac{1}{q}N,\bigg{(}1-\frac{k+1}{nq}\bigg{)}N \bigg{\}}$. Let $z$ be the least positive integer such that $k+1~|~nz$. When $\frac{M}{N}=\frac{1}{q}$, we have
	\begin{align*}
		R&=\frac{(q-1)n}{k+1}, \text{~and}\\
		F_s&=q^kz.
	\end{align*}
	When $\frac{M}{N}=(1-\frac{k+1}{nq})$, we have
	\begin{align*}
		R&=\frac{k+1}{(q-1)n}, \text{~and}\\
		F_s&=(q-1)q^k\frac{zn}{k+1}.
	\end{align*}
By memory sharing any convex combination of these points is achievable.
\end{theorem}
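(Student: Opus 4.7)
The plan is to decompose Theorem \ref{them:main} into three parts and assemble pieces already proved in the paper. Part (a), the corner point $M/N = 1/q$, will follow immediately from Algorithm \ref{Alg:Placement} together with Lemma \ref{lemma:delivery}. Part (b), the complementary corner $M/N = 1 - (k+1)/(nq)$, will be obtained by packaging the scheme of Part (a) into an equation-subfile matrix $\bfS$, verifying the hypotheses of Lemma \ref{lemma:symmcache}, and reading off the parameters of the transposed scheme $\bfS^T$. Part (c), the convex combination claim, will be a routine memory-sharing argument.

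For Part (a), I would note that Lemma \ref{lemma:Resolve_MDS} produces $n$ parallel classes of $q$ blocks each, so $K = |\calA| = nq$; each block has size $q^{k-1}$, and hence the placement prescribed by Algorithm \ref{Alg:Placement} caches $q^{k-1}z$ out of $q^k z$ subfiles per file, i.e.\ $M/N = 1/q$. Since the code satisfies the $(k,k+1)$-CCP by hypothesis, Lemma \ref{lemma:delivery} then gives $R = (q-1)n/(k+1)$ and $F_s = q^k z$ at once.

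For Part (b), I would encode the scheme of Part (a) as a $\Delta \times F_s$ matrix $\bfS$ with $\Delta = R\, F_s = (q-1) q^k nz/(k+1)$, where $\bfS(i,j) = t$ precisely when user $U_t$ recovers subfile $\calA_j$ in equation $Eq_i$, and then verify the three hypotheses of Lemma \ref{lemma:symmcache}. Distinct nonzero entries per column will follow from the distinctness argument inside the proof of Claim \ref{claim:signalgeneration}, which already shows that each user recovers each missing subfile exactly once across the entire run of Algorithm \ref{Alg:Signal}. Distinct nonzero entries per row are automatic, because every equation is an all-but-one XOR addressed to $k+1$ distinct users lying in $k+1$ distinct parallel classes. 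Once the three conditions are in hand, Lemma \ref{lemma:symmcache} certifies that $\bfS^T$ is itself a valid equation-subfile matrix, and the discussion immediately preceding Theorem \ref{them:main} reads off $K' = nq$, $M'/N = 1 - (k+1)/(nq)$, $F'_s = \Delta = (q-1) q^k nz/(k+1)$, and $R' = F_s/\Delta = (k+1)/((q-1)n)$.

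Part (c) is then a one-line memory sharing argument: for $\lambda \in [0,1]$, partition each file into portions of relative size $\lambda$ and $1-\lambda$, run the two corner-point schemes independently on the two portions, and sum the rates. The hard part will be the third condition of Lemma \ref{lemma:symmcache}: if user $U_t \in \calP_\alpha$ recovers subfile $\calA_{j_1}$ in $Eq_{i_1}$ and subfile $\calA_{j_2}$ in $Eq_{i_2}$, I must show $\bfS(i_1, j_2) = \bfS(i_2, j_1) = 0$, i.e.\ that neither equation contains the other's target subfile. I would establish this by tracing each nonzero entry back to its originating recovery set $\calP_{\calS_a}$ inside Algorithm \ref{Alg:Signal}: the point-index $\hat{l}_\alpha$ that $U_t$ recovers is uniquely determined, via Claim \ref{claim:MDSintersection}, by the choice of the other $k$ blocks in $\calP_{\calS_a} \setminus \{\calP_\alpha\}$, and forcing $\calA_{j_2}$ to participate in $Eq_{i_1}$ would manufacture a second point in the intersection of $k$ blocks from distinct parallel classes, contradicting the uniqueness supplied by that claim. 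The remaining verifications are arithmetic.
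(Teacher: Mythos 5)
Your proposal is correct and follows essentially the same route the paper takes: corner point $M/N = 1/q$ from Lemma~\ref{lemma:Resolve_MDS}, Algorithm~\ref{Alg:Placement}, and Lemma~\ref{lemma:delivery}; the complementary corner by encoding the scheme as an equation-subfile matrix, invoking Lemma~\ref{lemma:symmcache}, and transposing; and the line segment by memory sharing. The one place you go slightly beyond the paper's exposition is in explicitly verifying the three hypotheses of Lemma~\ref{lemma:symmcache} for the $\bfS$ generated by Algorithm~\ref{Alg:Signal} -- the paper asserts this somewhat tersely -- and your argument for condition 3 via Claim~\ref{claim:MDSintersection} is sound, though a shorter route is available: if $\bfS(i_1,j_1) = \bfS(i_2,j_2) = t$ and $\bfS(i_1,j_2) \neq 0$, then $\calA_{j_2}$ appears in $Eq_{i_1}$ and the all-but-one structure forces $U_t$ to cache $\calA_{j_2}$, contradicting $\bfS(i_2,j_2)=t$; your intersection-based argument reaches the same conclusion by unwinding the construction.
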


In a similar manner for the $(n,k)$ linear block code that satisfies the $(k,\alpha)$-CCP over $GF(q)$, the caching system where $M/N=1/q$ can be converted into a system where $K'=nq$, $\frac{M'}{N'}=1-\frac{\alpha}{nq}$, $F_s'=(q-1)q^k\frac{zn}{\alpha}$ and $R'=\frac{\alpha}{(q-1)n}$ using the equation-subfile technique. The arguments presented above apply with essentially no change.



%
%

\section{Some classes of linear codes that satisfy the CCP}
\label{sec:LinearCode}
At this point we have established that linear block codes that satisfy the CCP are attractive candidates for usage in coded caching. In this section, we demonstrate that there are a large class of generator matrices that satisfy the CCP. For most of the section we work with matrices over a finite field of order $q$. In the last subsection, we discuss some constructions for matrices over $\mathbb{Z} \mod q$ when $q$ is not a prime or prime power. We summarize the constructions presented in this section in Table \ref{Table:construction}.

{
	\renewcommand{\arraystretch}{1.5}
	\begin{table*}[t]
	\begin{center}
		\begin{tabular}{|p{3cm}|p{5cm}|p{7cm}|}
			\hline
			\hline
			 {\bf Code type} & {\bf Code construction} & {\bf Notes}  \\
			\hline
			\hline
			\multirow{5}{*}{Codes over field $GF(q)$} & $(n,k)$ MDS codes  & Satisfy $(k,k+1)$-CCP. Need $q+1 \geq n$. \\
			\cline{2-3}
			 & $(n,k)$ Cyclic codes  & Existence depends on certain properties of the generator polynomials. All cyclic codes satisfy the $(k,k)$-CCP. Need additional conditions for the $(k,k+1)$-CCP.  \\
			\cline{2-3}
			  & Kronecker product of $z \times \alpha$ matrix satisfying the $(z,z)$-CCP with the identity matrix $\bfI_{t \times t}$ & Satisfy the $(k,k)$-CCP where $k=tz$.\\
			\cline{2-3}
			 & Kronecker product of Vandermonde and Vandermonde-like matrices with structured base matrices & Satisfy the $(k,k+1)$-CCP for certain parameters.\\
            \cline{2-3}
            & CCP matrix extension & Extends a $k \times n$ CCP matrix to a $k \times (n + s(k+1))$ CCP matrix for integer s.\\
			\hline
			\hline
			\multirow{3}{*}{Codes over ring $\mathbb{Z} \mod q$}  & Single parity-check (SPC) code  & Satisfy the $(k,k+1)$-CCP with $n=k+1$. \\
			\cline{2-3}
			   & Cyclic codes over the ring  & Require that $q = q_1 \times q_2 \times \dots \times q_d$ where $q_i$'s are prime. Satisfy the $(k,k)$-CCP.  \\
			\cline{2-3}
			  & Kronecker product of $z \times \alpha$ matrix satisfying the $(z,z)$-CCP with the identity matrix $\bfI_{t \times t}$ & Satisfy the $(k,k)$-CCP property where $k=tz$.\\
			\cline{2-3}
            & CCP matrix extension & Extends a $k \times n$ CCP matrix to a $k \times (n + s(k+1))$ CCP matrix for integer s.\\
			\hline
			\hline
		\end{tabular}
	\end{center}
	\caption{\label{Table:construction} A summary of the different constructions of CCP matrices in Section \ref{sec:LinearCode}}
\end{table*}
}

\subsection{Maximum-distance-separable (MDS) codes}
\label{sec:MDS_cons}
$(n,k)$-MDS codes with minimum distance $n-k+1$ are clearly a class of codes that satisfy the CCP. In fact, for these codes any $k$ columns of the generator matrix can be shown to be full rank. Note however, that MDS codes typically need large field size, e.g., $q +1 \geq n$  (assuming that the MDS conjecture is true)\cite{RonRoth}. In our construction, the value of $M/N = 1/q$ and the number of users is $K = nq$. Thus, for large $n$, we will only obtain systems with small values of $M/N$, or equivalently large values of $M/N$ (by Theorem \ref{them:main} above). This may be restrictive in practice.

\subsection{Cyclic Codes}
     A cyclic code is a linear block code, where the circular shift of each codeword is also a codeword \cite{lincostello}. A $(n,k)$ cyclic code over $GF(q)$ is specified by a monic polynomial $g(X) = \sum_{i=0}^{n-k} g_i X^i$ with coefficients from $GF(q)$ where $g_{n-k} = 1$ and $g_0 \neq 0$; $g(X)$ needs to divide the polynomial $X^n-1$. The generator matrix of the cyclic code is obtained as below.
     $$
     \bfG=\begin{bmatrix}
     g_0&g_1&\cdot&\cdot&\cdot&g_{n-k}&0&\cdot&\cdot&0\\
     0&g_0&g_1&\cdot&\cdot&\cdot&g_{n-k}&0&\cdot&0\\
     \vdots&&&&&&&&&\vdots\\
     0&0&\cdot&0&g_0&g_1&\cdot&\cdot&\cdot&g_{n-k}
     \end{bmatrix}.
     $$


The following claim shows that for verifying the CCP for a cyclic code it suffices to pick {\it any} set of $k+1$ consecutive columns.
\begin{claim}
	\label{claim:cyclic_k1}
	Consider a $(n,k)$ cyclic code with generator matrix $\bfG$. Let $\bfG_\calS$ denote a set of $k+1$ consecutive columns of $\bfG$. If each $k \times k$ submatrix of $\bfG_S$ is full rank, then $\bfG$ satisfies the $(k,k+1)$-CCP.
\end{claim}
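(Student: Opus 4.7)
The plan is to exploit the defining symmetry of a cyclic code: invariance under cyclic shift of coordinates. This symmetry translates into a linear relation on $\bfG$ that lets us propagate the full-rank property assumed on one window of $k+1$ consecutive columns to every other such window (with wraparound), and in particular to each $\calS_a$ appearing in Definition \ref{def:MDSproperty}.

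First I would formalize the shift. Let $\bfP$ be the $n \times n$ permutation matrix that cyclically shifts columns by one position with wraparound. For a cyclic code, each row of $\bfG \bfP$ is the cyclic shift of a row of $\bfG$, hence a codeword; since $\bfP$ is invertible, $\bfG \bfP$ has rank $k$, so its rows form another generating set for the same code. Hence there exists an invertible $k \times k$ matrix $\bfA$ over $GF(q)$ with $\bfG \bfP = \bfA \bfG$. Concretely, $\bfA$ is obtained by writing $X^{i+1} g(X) \bmod (X^n - 1)$ as a $GF(q)$-linear combination of $g(X), X g(X), \ldots, X^{k-1} g(X)$ for $0 \leq i \leq k-1$; only the case $i = k-1$ is nontrivial and is handled using the parity-check polynomial $h(X)$ defined by $g(X)h(X) = X^n - 1$.

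Next I would read off the column-level consequence and iterate. For any $\calS \subseteq \{0, \ldots, n-1\}$, write $\calS + 1 = \{(s+1) \bmod n : s \in \calS\}$; comparing columns in $\bfG \bfP = \bfA \bfG$ gives $\bfG_{\calS + 1} = \bfA \bfG_{\calS}$. Since $\bfA$ is invertible, any fixed $k$-subset of columns of $\bfG_{\calS + 1}$ has the same rank as the corresponding $k$-subset of columns of $\bfG_{\calS}$; in particular, the property that every $k \times k$ submatrix is full rank transfers from $\bfG_{\calS}$ to $\bfG_{\calS + 1}$. Iterating this shift starting from the hypothesized window $\calS$ reaches every window of $k+1$ consecutive columns (with wraparound), including each $\calS_a$, so $\bfG$ satisfies the $(k,k+1)$-CCP. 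The only real subtlety is Step 1, namely the identity $\bfG \bfP = \bfA \bfG$ with invertible $\bfA$; once this standard structural property of cyclic codes is in hand, the remainder is routine linear algebra.
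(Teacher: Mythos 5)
Your proposal is correct and, like the paper, it ultimately rests on the single fact that cyclic codes are closed under cyclic shifts of coordinates; but the mechanism you use to exploit that fact is different. The paper reads off a codeword-level criterion: a $k\times k$ submatrix $\bfG_J$ of a generator matrix is rank deficient if and only if some nonzero codeword vanishes on the coordinate set $J$ (since a nontrivial row dependency of $\bfG_J$ is exactly a nonzero message $\bfu$ with $\bfu\bfG$ supported off $J$). Cyclic closure then carries ``no codeword vanishes on $J$'' to ``no codeword vanishes on $J+i$'' immediately, with no auxiliary matrix. You instead work at the level of generator matrices: $\bfG\bfP$ is a generator matrix of the same code (its rows are shifted codewords and it has rank $k$ because $\bfP$ is a permutation), so $\bfG\bfP = \bfA\bfG$ for some invertible $\bfA$, and then $\bfG_{\calS+1}$ and $\bfG_{\calS}$ differ by left-multiplication by an invertible matrix, which preserves ranks of all $k$-column subsets. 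Both routes are sound. Yours is a bit more structural and makes the ``shift = change of basis'' picture explicit; the paper's is shorter because it never needs $\bfA$ at all. One small simplification of your argument: you do not actually need the explicit construction of $\bfA$ via $h(X)$ in Step~1; the existence of an invertible $\bfA$ with $\bfG\bfP=\bfA\bfG$ follows abstractly from the observation that $\bfG\bfP$ is a rank-$k$ matrix whose rows lie in (and hence span) the same $k$-dimensional code, so the explicit polynomial computation can be dropped.
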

\begin{proof}

Let the generator polynomial of the cyclic code be $\bfg(X)$, where we note that $\bfg(X)$ has degree $n-k$. Let $\bfG_{\calS}=[\bfg_{(a)_n},\bfg_{(a+1)_n},\cdots,\bfg_{(a+k)_n}]$ where we assume that $\bfG_{\calS}$ satisfies the $(k,k+1)$-CCP. 
Let
\begin{align*}
\bfG_{\calS\setminus j} =& [\bfg_{(a)_n},\dots,\bfg_{(a+j-1)_n},\bfg_{(a+j+1)_n},\dots,\bfg_{(a+k)_n}], \text{and}\\
\bfG_{\calS'\setminus j} =&[\bfg_{(a+i)_n},\dots,\bfg_{(a+j-1+i)_n},\\&~\bfg_{(a+j+1+i)_n},
\dots,\bfg_{(a+k+i)_n}].
\end{align*}
We need to show that if $\bfG_{\calS\setminus j}$ has full rank, then $\bfG_{\calS'\setminus j}$ has full rank, for any $0\le j\le k$. 

As $\bfG_{\calS\setminus j}$ has full rank, there is no codeword $\bfc \neq \mathbf{0}$ such that $\bfc((a)_n)=\cdots=\bfc((a+j-1)_n)=\bfc((a+j+1)_n)=\cdots=\bfc((a+k)_n)=0.$
By the definition of a cyclic code, any circular shift of a codeword results in another codeword that belongs to the code. Therefore, there is no codeword $\bfc'$ such that $\bfc'((a+i)_n)=\cdots=\bfc'(a+j-1+i)_n)=\bfc'((a+j+1+i)_n)=\cdots=\bfc'((a+k+i)_n)=0.$ Thus, $\bfG_{\calS'\setminus j}$ has full rank.
\end{proof}

Claim \ref{claim:cyclic_k1} implies a low complexity search algorithm to determine if a cyclic code satisfies the CCP. Instead of checking all $\bfG_{\calS_a}$, $0\le a\le \frac{zn}{k+1}-1$, in Definition \ref{def:MDSproperty}, we only need to check an arbitrary $\bfG_{\calS}=[\bfg_{(i)_n},\bfg_{(i+1)_n},\cdots,\bfg_{(i+k)_n}]$, for $0\le i< n$. 
To further simplify the search, we choose $i=n-\lfloor \frac{k}{2}\rfloor-1$.


For this choice of $i$, Claim \ref{claim:cyclic_MDS} shows that $\bfG_\calS$ is such that we only need to check the rank of a list of small-dimension matrices to determine if each $k\times k$ submatrix of $\bfG_\calS$ is full rank (the proof appears in the Appendix).

\begin{claim}
	\label{claim:cyclic_MDS}
	A cyclic code with generator matrix $\bfG$ satisfies the CCP if the following conditions hold.
\begin{itemize}
	\item
	For $0<j\le \lfloor \frac{k}{2}\rfloor$, the submatrices
	\begin{align*}
    \bfC_j =\begin{bmatrix}
	g_{n-k-1}&g_{n-k}&0&\cdot&\cdot&0\\
	g_{n-k-2}&g_{n-k-1}&g_{n-k}&0&\cdot&0\\
	\vdots&&&&&\vdots\\
	g_{n-k-j+1}&\cdot&\cdot&\cdot&\cdot&g_{n-k}\\
	g_{n-k-j}&\cdot&\cdot&\cdot&\cdot&g_{n-k-1}
	\end{bmatrix}
	\end{align*}
	have full rank. In the above expression, $g_i=0$ if $i<0$.
	\item
	For $\lfloor\frac{k}{2}\rfloor<j<k$, the submatrices
		\begin{align*}
		\bfC_j =\begin{bmatrix}
		g_{1}&g_{2}&\cdot&\cdot&\cdot&\cdot&g_{k-j}\\
		g_{0}&g_{1}&\cdot&\cdot&\cdot&\cdot&g_{k-j-1}\\
		\vdots&&&&&&\vdots\\
		0&\cdot&\cdot&0&g_0&g_1&g_2\\
		0&\cdot&&\cdot\cdot&0&g_0&g_1
		\end{bmatrix}
		\end{align*}
	have full rank.
\end{itemize}
\end{claim}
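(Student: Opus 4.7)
The plan is to apply Claim \ref{claim:cyclic_k1}, which reduces the task to exhibiting one set $\calS$ of $k+1$ consecutive column indices (mod $n$) whose associated $\bfG_\calS$ has every $k\times k$ submatrix nonsingular. I would take $i=n-\lfloor k/2\rfloor-1$ and write $p=\lfloor k/2\rfloor$, so that $\bfG_\calS$ lists, in order, the $p+1$ ``tail'' columns $\bfg_{n-p-1},\ldots,\bfg_{n-1}$ followed by the $k-p$ ``head'' columns $\bfg_0,\ldots,\bfg_{k-p-1}$. Reading the column supports off the shifted-generator-polynomial form of $\bfG$ shows that the head columns form an upper-triangular $(k-p)\times(k-p)$ block $L$ with $g_0$ on the diagonal, the tail columns form a lower-triangular $(p+1)\times(p+1)$ block $R$ with $g_{n-k}$ on the diagonal, and the two blocks overlap in exactly one row (row $k-p-1$), whose only two nonzero entries are $g_0$ (in the last head column) and $g_{n-k}$ (in the first tail column). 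Since $g_0\neq 0$ and $g_{n-k}\neq 0$, the extreme deletions $j=0$ and $j=k$ leave block-diagonal matrices with triangular pieces and so are automatically full rank; only $1\leq j\leq k-1$ needs to be analyzed.

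For $1\leq j\leq p$ (removing a non-leftmost tail column) I would reorder rows and columns so the remaining $k\times k$ matrix takes the block upper-triangular form having $L$ in the top-left, a residual $p\times p$ block $R^{\ast\ast}$ in the bottom-right, and nonzero entries only in one row of the top-right; thus $\det=\det(L)\det(R^{\ast\ast})=g_0^{k-p}\det(R^{\ast\ast})$. The matrix $R^{\ast\ast}$ is obtained from the bottom $p$ rows of the $p\times(p+1)$ tail slab by deleting column $j$. A direct analysis of its supports, using that row $s$ of the tail slab vanishes past column $s+1$, shows that after the deletion the top $j$ rows of $R^{\ast\ast}$ vanish in its right $p-j$ columns. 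That yields a second block-triangular decomposition of $R^{\ast\ast}$ whose top-left $j\times j$ block is literally the matrix $\bfC_j$ of the claim and whose bottom-right $(p-j)\times(p-j)$ block is lower triangular with $g_{n-k}$ on the diagonal. So the whole submatrix is full rank if and only if $\bfC_j$ is.

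For $\lfloor k/2\rfloor<j<k$ (removing a non-rightmost head column) I would run the mirror argument via a Laplace expansion over the $p+1$ tail columns: the only non-vanishing minor chooses rows $k-p-1,\ldots,k-1$ and contributes $\det(R)=g_{n-k}^{p+1}$, leaving as cofactor the matrix $L^{\ast\ast}$ formed from the top $k-p-1$ rows of $L$ with column $j-p-1$ deleted. A final block-triangular reduction of $L^{\ast\ast}$ isolates an upper-triangular top-left $(j-p-1)\times(j-p-1)$ block (diagonal of $g_0$'s) and a $(k-j)\times(k-j)$ bottom-right block that is exactly the second form of $\bfC_j$. The main obstacle is index book-keeping: one must verify that the choice $i=n-\lfloor k/2\rfloor-1$ is precisely what forces the head/tail overlap to be a single row, and one must chase indices carefully through the two cascaded block-triangularizations to confirm that the residual $j\times j$ or $(k-j)\times(k-j)$ block equals the $\bfC_j$ stated in the claim. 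Once that identification is made, every remaining factor of the determinant is a power of the nonzero constants $g_0$ or $g_{n-k}$, so the stated conditions force every $k\times k$ submatrix of $\bfG_\calS$ to be nonsingular, and Claim \ref{claim:cyclic_k1} then upgrades this to the full $(k,k+1)$-CCP for $\bfG$.
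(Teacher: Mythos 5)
Your proof is correct and follows essentially the same route as the paper's: both reduce via Claim~\ref{claim:cyclic_k1} to the single window of $k+1$ consecutive columns starting at $n-\lfloor k/2\rfloor-1$, and both block-triangularize each $k\times k$ submatrix into a triangular head block (diagonal of $g_0$'s), a triangular tail block (diagonal of $g_{n-k}$'s), and the middle residual $\bfC_j$, so that the determinant is a nonzero power of $g_0$ and $g_{n-k}$ times $\det\bfC_j$. Your two-stage factorization ($L$, then splitting $R^{**}$) and the Laplace-expansion treatment of the second case are minor presentational variants of the paper's single three-block partition $\bfA_j,\bfC_j,\bfE_j$, which the paper spells out only for $0<j\le\lfloor k/2\rfloor$ and then asserts by symmetry.
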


\begin{example}
	Consider the polynomial $\bfg(X)=X^4+X^3+X+2$ over $GF(3)$. Since it divides $X^8-1$, it is the generator polynomial of a $(8,4)$ cyclic code over $GF(3)$. The generator matrix of this code is given below.
	$$\bfG=
	\begin{bmatrix}
	2&1&0&1&1&0&0&0\\
	0&2&1&0&1&1&0&0\\
	0&0&2&1&0&1&1&0\\
	0&0&0&2&1&0&1&1
	\end{bmatrix}.
	$$
	It can be verified that the $4\times 5$ submatrix which consists of the two leftmost columns and three rightmost columns of $\bfG$ is such that all $4 \times 4$ submatrices of it are full rank. Thus, by Claim \ref{claim:cyclic_k1} the (4,5)-CCP is satisfied for $\bfG$.
\end{example}
\begin{remark}
Cyclic codes form an important class of codes that satisfy the $(k,k)$-CCP ({\it cf.} Definition \ref{def:kalphacc}). This is because, it is well-known \cite{lincostello} that any $k$ consecutive columns of the generator matrix of a cyclic code are linearly independent. 
\end{remark}
\subsection{Constructions leveraging properties of smaller base matrices}
\label{sec:more_cons_block}
It is well recognized that cyclic codes do not necessarily exist for any choice of parameters. This is because of the divisibility requirement on the generator polynomial. We now discuss a more general construction of generator matrices that satisfy the CCP. As we shall see, this construction provides a more or less satisfactory solution for a large range of system parameters.


Our first simple observation is that the Kronecker product (denoted by $\otimes$ below) of a $z \times \alpha$ generator matrix that satisfies the $(z,z)$-CCP with a $t \times t$ identity matrix, $\bfI_{t\times t}$ immediately yields a generator matrix that satisfies the $(tz,tz)$-CCP.
%
\begin{claim}
	\label{claim:GconstructKKCCP}

	Consider a $(n,k)$ linear block code over $GF(q)$ whose generator matrix is specified as $\bfG=\bfA\otimes \bfI_{t\times t}$ where $\bfA$ is a $z\times \alpha$ matrix that satisfies the $(z,z)$-CCP. Then, $\bfG$ satisfies the $(k,k)$-CCP  where $k=tz$ and $n=t\alpha$.
\end{claim}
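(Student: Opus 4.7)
The plan is to exploit the block structure of the Kronecker product to reduce the rank check on $\bfG$ to $t$ separate rank checks on $z \times z$ submatrices of $\bfA$, each of which is covered by the $(z,z)$-CCP hypothesis on $\bfA$.

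First I will record the explicit column structure: writing $\bfa_0,\ldots,\bfa_{\alpha-1}$ for the columns of $\bfA$ and $\bfe_0,\ldots,\bfe_{t-1}$ for the standard basis of $GF(q)^t$, column $jt+l$ of $\bfG$ (for $0\le j<\alpha$, $0\le l<t$) equals $\bfa_j\otimes \bfe_l$. Viewing row indices of $\bfG$ as pairs $(r,m)$ with $r\in\{0,\ldots,z-1\}$ and $m\in\{0,\ldots,t-1\}$, the $(r,m)$-entry of $\bfa_j\otimes\bfe_l$ is $\bfA_{r,j}\,\delta_{l,m}$. So column $jt+l$ is supported only on the rows with within-block index equal to $l$, and on those $z$ rows its values form the column $\bfa_j$ of $\bfA$.

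Next I will parametrize an arbitrary window of $k=tz$ consecutive columns of $\bfG$ (with wraparound modulo $n=t\alpha$). Writing the starting index as $r=jt+s$ with $0\le s<t$, the window consists of the last $t-s$ columns of block $j$, all $t$ columns of blocks $j+1,\ldots,j+z-1$, and the first $s$ columns of block $j+z$ (all mod $\alpha$). A short count shows that for every within-block index $l\in\{0,\ldots,t-1\}$, the window contains exactly $z$ columns of within-block index $l$, coming from the $\bfA$-block indices $j,j+1,\ldots,j+z-1$ (mod $\alpha$) when $l\ge s$, and from $j+1,\ldots,j+z$ (mod $\alpha$) when $l<s$. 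In either case this is a set of $z$ consecutive column indices of $\bfA$ modulo $\alpha$.

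Now I will apply two permutations: group the $k$ selected columns of $\bfG$ by their within-block index $l$, and simultaneously reorder the $k=tz$ rows of $\bfG$ by their within-block index $m$. By the support observation from the first paragraph, any column of within-block index $l$ is zero on rows of within-block index $m\neq l$, so the resulting $k\times k$ matrix is block diagonal with $t$ diagonal blocks, each of size $z\times z$. The $l$-th diagonal block is precisely the submatrix of $\bfA$ formed by the $z$ consecutive columns identified above. By the $(z,z)$-CCP hypothesis on $\bfA$, each such $z\times z$ block is full rank, hence the block-diagonal matrix is full rank, hence the original window of $k$ consecutive columns of $\bfG$ is full rank. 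Since the starting index $r$ was arbitrary, $\bfG$ satisfies the $(k,k)$-CCP.

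The argument has no real obstacle; the only thing that needs care is the wraparound bookkeeping when the window straddles a block boundary ($s>0$) and simultaneously wraps around the end of $\bfG$. Both wraparounds are consistent because $n=t\alpha$, so a wrap in $\bfG$'s column index corresponds exactly to a wrap in $\bfA$'s column index, and the $(z,z)$-CCP for $\bfA$ is defined with wraparound modulo $\alpha$.
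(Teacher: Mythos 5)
Your reduction --- exploiting the support structure of $\bfA\otimes\bfI_{t\times t}$ to turn the $k\times k$ rank check into $t$ independent $z\times z$ rank checks on $\bfA$ --- is in spirit the same as the paper's: the paper observes directly that $\bfG_{\calS_a^k}=\bfA_{\calS_a^z}\otimes\bfI_{t\times t}$ and computes $\det(\bfA_{\calS_a^z}\otimes\bfI_{t\times t})=\det(\bfA_{\calS_a^z})^t$, while you make the mechanism explicit via a row/column permutation to block-diagonal form. However, there is a misstep in how you invoke the hypothesis. You analyze a window starting at an \emph{arbitrary} index $r=jt+s$ with $0\le s<t$, and then assert that the resulting $z\times z$ blocks of $\bfA$ --- supported on $z$ consecutive column indices $\{j,\dots,j+z-1\}$ or $\{j+1,\dots,j+z\}$ modulo $\alpha$ --- are full rank ``by the $(z,z)$-CCP hypothesis.'' That is not what Definition~\ref{def:kalphacc} gives you: the $(z,z)$-CCP asserts full rank only of the specific windows $\calS_a^z=\{(az)_\alpha,\dots,(az+z-1)_\alpha\}$, not of every shifted set of $z$ consecutive columns. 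For example, over $GF(2)$ with $z=2$, $\alpha=4$, the matrix
\begin{align*}
\bfA=\begin{bmatrix}1&0&1&1\\0&1&1&0\end{bmatrix}
\end{align*}
satisfies the $(2,2)$-CCP (the two required windows $\{0,1\}$ and $\{2,3\}$ are both nonsingular), yet the shifted window on columns $\{3,0\}$ has rank~$1$; so the stronger statement you set out to prove --- that \emph{every} window of $k$ consecutive columns of $\bfG$ is full rank --- is false in general.

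Fortunately the overreach is unnecessary and is easy to repair. The $(k,k)$-CCP for $\bfG$ requires full rank only of the windows $\calS_a^k$ starting at $(ak)_n=(atz)_{t\alpha}$, and since both $k=tz$ and $n=t\alpha$ are multiples of $t$, these starting indices are always multiples of $t$. In your parametrization this forces $s=0$ and $j=(az)_\alpha$, so every diagonal block is exactly $\bfA_{\calS_a^z}$ --- precisely the submatrices the $(z,z)$-CCP covers. Restricting your argument to the sets $\calS_a^k$ actually named in Definition~\ref{def:kalphacc} (i.e.\ to $s=0$) closes the gap and recovers the paper's proof.
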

\begin{proof}
	The recovery set for $\bfA$ is specified as $\calS_a^z=\{(az)_\alpha,\cdots, (az+z-1)_\alpha\}$ and the recovery set for $\bfG$ is specified as $\calS_a^k=\{(ak)_n, \cdots, (ak+k-1)_n\}$. Since $\bfA$ satisfies the $(z,z)$-CCP, $\bfA_{\calS_a^z}$ has full rank.  Note that $\bfG_{\calS_a^k}=\bfA_{\calS_a^z}\otimes \bfI_{t\times t}$. Then $\det(\bfG_{\calS_a^k})=\det(\bfA_{\calS_a^z}\otimes \bfI_{t\times t})=\det(\bfA_{\calS_a^z})^t\neq 0$. Therefore, $\bfG$ satisfies the $(k,k)$-CCP. 
\end{proof}
\begin{remark}
Let $\bfA$ be the generator matrix of a cyclic code over $GF(q)$, then $\bfG=\bfA \otimes \bfI_{t\times t}$ satisfies the $(k,k)$-CCP by Claim \ref{claim:GconstructKKCCP}.
\end{remark}
Our next construction addresses the $(k,k+1)$-CCP. In what follows, we use the following notation.
\begin{itemize}
	\item $\bfoe_a$: $[\underbrace{1,\cdots,1}_a]^T$;
	\item $\bfC(c_1, c_2)_{a\times b}$: $a\times b$ matrix where each row is the cyclic
	shift (one place to the right) of the row above it and the first row is $[c_1~ c_2~ 0~ \cdots~ 0]$; and
	\item $\bfzr_{a\times b}$: $a\times b$ matrix with zero entries.
\end{itemize}

Consider parameters $n,k$. Let the greatest common divisor of $n$ and $k+1$, $\gcd(n,k+1)=t$. It is easy to verify that $z=\frac{k+1}{t}$ is the smallest integer such that $k+1~|~nz$. Let $n=t\alpha$ and $k+1=tz$. Claim \ref{claim:Gconstruct1} below constructs a $(n,k)$ linear code that satisfies the CCP over $GF(q)$ where $q > \alpha$. Since $\alpha=\frac{n}{t}$, the required field size in Claim \ref{claim:Gconstruct1} is lower than the MDS code considered in Section \ref{sec:MDS_cons}. 
\begin{claim}
	\label{claim:Gconstruct1}
	Consider a $(n,k)$ linear block code over $GF(q)$ whose generator matrix is specified as eq. (\ref{eq:Gconstruct1}),
	\begin{table*}[t]
			\begin{equation}
		\label{eq:Gconstruct1}
		\bfG=
		\begin{bmatrix}
		b_{00}\bfI_{t\times t}&b_{01}\bfI_{t\times t}&\cdots&b_{0(\alpha-1)}\bfI_{t\times t}\\
		b_{10}\bfI_{t\times t}&b_{11}\bfI_{t\times t}&\cdots&b_{1(\alpha-1)}\bfI_{t\times t}\\
		\vdots&&&\vdots\\
		b_{(z-2)0}\bfI_{t\times t}&b_{(z-2)1}\bfI_{t\times t}&\cdots&b_{(z-2)(\alpha-1)}\bfI_{t\times t}\\
		\bfC(b_{(z-1)0}, b_{(z-1)0})_{(t-1)\times t}&\bfC(b_{(z-1)1}, b_{(z-1)1})_{(t-1)\times t}&\cdots&\bfC(b_{(z-1)(\alpha-1)}, b_{(z-1)(\alpha-1)})_{(t-1)\times t}
		\end{bmatrix}
		\end{equation}
	\end{table*}
	where
	\begin{align*}
    \begin{bmatrix}
	    b_{00}&b_{01}&\cdots&b_{0(\alpha-1)}\\
    	b_{10}&b_{11}&\cdots&b_{1(\alpha-1)}\\
	    \vdots&&&\vdots\\
	    b_{(z-1)0}&b_{(z-1)1}&\cdots&b_{(z-1)(\alpha-1)}
	\end{bmatrix}
	\end{align*}
	is a Vandermonde matrix and $q> \alpha$. Then, $\bfG$ satisfies the $(k,k+1)$-CCP.
\end{claim}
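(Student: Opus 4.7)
The plan is to study the right kernel of $\bfG_{\calS_a}$ and show that it is one-dimensional and spanned by a vector with no zero entry. Since $\bfG_{\calS_a}$ is $k\times(k+1)$, this is exactly equivalent to the $(k,k+1)$-CCP: a $k\times k$ submatrix obtained by deleting column $\ell$ is singular iff some vector in the kernel of $\bfG_{\calS_a}$ has a zero in position $\ell$. As a preliminary, I would note that both $k+1=tz$ and $n=t\alpha$ are multiples of $t$, so any set $\calS_a$ of $k+1$ cyclically consecutive column indices decomposes into exactly $z$ complete block-columns of $t$ indices each. By cyclic symmetry it suffices to analyze one such configuration, corresponding to block-column indices $\{s_0,\ldots,s_{z-1}\}\subseteq\{0,\ldots,\alpha-1\}$. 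Since $q>\alpha$, the Vandermonde evaluation points $\beta_0,\ldots,\beta_{\alpha-1}$ can be chosen pairwise distinct and nonzero, with $b_{ij}=\beta_j^i$.

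Next, I would write an arbitrary kernel vector as $\bfv=(\bfv^{(0)},\ldots,\bfv^{(z-1)})$ with $\bfv^{(j)}\in GF(q)^t$ indexing the $j$-th chosen block-column. The first $z-1$ block-rows of $\bfG_{\calS_a}$ decouple coordinate-wise: for each $r\in\{0,\ldots,t-1\}$,
\begin{equation*}
\sum_{j=0}^{z-1}\beta_{s_j}^{i}\,v_r^{(j)}=0,\qquad i=0,1,\ldots,z-2.
\end{equation*}
The $(z-1)\times z$ truncated Vandermonde on the left has a one-dimensional kernel spanned by Lagrange coefficients $\lambda_j=\prod_{j'\neq j}(\beta_{s_j}-\beta_{s_{j'}})^{-1}$, each of which is nonzero. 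Hence $v_r^{(j)}=c_r\lambda_j$ for some $c_r\in GF(q)$, i.e.\ $\bfv^{(j)}=\lambda_j\bfc$ with $\bfc=(c_0,\ldots,c_{t-1})$.

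Plugging this back into the equations coming from the last $t-1$ rows of $\bfG_{\calS_a}$, and factoring out $\bfC(b_{(z-1),s_j},b_{(z-1),s_j})_{(t-1)\times t}=\beta_{s_j}^{z-1}\bfC(1,1)_{(t-1)\times t}$, I would obtain
\begin{equation*}
\Bigl(\sum_{j=0}^{z-1}\beta_{s_j}^{z-1}\lambda_j\Bigr)\,\bfC(1,1)_{(t-1)\times t}\,\bfc=\bfzr.
\end{equation*}
The bracketed scalar equals $1$ by the classical Lagrange identity ($\sum_j\beta_{s_j}^{i}\lambda_j$ vanishes for $0\le i\le z-2$ and equals $1$ for $i=z-1$), while the kernel of $\bfC(1,1)_{(t-1)\times t}$ is one-dimensional and spanned by the alternating vector $\mathbf{w}=(1,-1,1,\ldots,(-1)^{t-1})$, which has no zero entry. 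Therefore $\bfc$ is a scalar multiple of $\mathbf{w}$, and the kernel of $\bfG_{\calS_a}$ is spanned by the single vector $(\lambda_0\mathbf{w},\ldots,\lambda_{z-1}\mathbf{w})$, every entry of which is nonzero, which gives the claim. The hard part is the block-structure bookkeeping: the crux is the decoupling step, where the Lagrange identity $\sum_j\beta_{s_j}^{z-1}\lambda_j=1$ collapses the bottom-block-row constraint into the $\beta$-independent equation $\bfC(1,1)\bfc=\bfzr$, cleanly separating the Vandermonde constraint on $\bfv^{(j)}$ from the cyclic-bidiagonal constraint on $\bfc$.
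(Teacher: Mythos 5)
Your proof is correct and takes a genuinely different route from the paper's. The paper proves Claim~\ref{claim:Gconstruct1} by a direct determinant computation: it writes $\bfG_{\calS_0}$ as a stack of two Kronecker-product blocks, fixes (WLOG) a deleted column in the last block, applies the Schur determinant identity and Kronecker-product identities to reduce $\det(\bfG_{\calS_0\setminus i})$ to a product of $\det(\bfA'\otimes\bfI_t)$, a power of the Schur complement $\bfB''-\bfB'\bfA'^{-1}\bfA''=\det(\bfF)/\det(\bfA')$, and $\det(\Delta_2)$, each shown nonzero from the Vandermonde structure and triangularity. You instead observe that the $(k,k+1)$-CCP for a $k\times(k+1)$ matrix is equivalent to its null space being one-dimensional and spanned by a vector with no zero coordinate, then compute that null space directly: the $t$-aligned block decomposition turns the first $z-1$ block-rows into $t$ decoupled copies of a $(z-1)\times z$ truncated Vandermonde system whose null space is the Lagrange-coefficient vector $(\lambda_j)$ (all nonzero, since the $\beta_{s_j}$ are distinct), and the last block-row collapses, via the Lagrange leading-coefficient identity $\sum_j\beta_{s_j}^{z-1}\lambda_j=1$, to $\bfC(1,1)\bfc=\bfzr$, whose null space is the everywhere-nonzero alternating vector. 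Both routes rest on the same underlying Vandermonde facts, but your kernel formulation handles all $k+1$ column deletions and all windows $\calS_a$ in a single uniform computation, with no case split on which column is dropped and no Schur-complement bookkeeping, and it exposes the null vector $(\lambda_0\mathbf{w},\ldots,\lambda_{z-1}\mathbf{w})$ explicitly; the paper's version is a more standard block-determinant calculation consistent with the techniques used elsewhere in the appendix. One cosmetic remark: the appeal to ``cyclic symmetry'' is unnecessary --- your argument already applies verbatim to any choice of $z$ distinct block-column indices, so each $\calS_a$ is covered directly.
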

\begin{proof}
The proof again leverages the idea that $\bfG$ can be expressed succinctly by using Kronecker products. The arguments can be found in the Appendix.
\end{proof}

Consider the case when $\alpha=z+1$. We construct a $(n,k)$ linear code satisfy the CCP over $GF(q)$ where $q\ge z$. It can be noted that the constraint of field size is looser than the corresponding constraint in Claim \ref{claim:Gconstruct1}. 
\begin{claim}
\label{claim:Gconstruct2}

Consider $nz=(z+1)\cdot (k+1)$ . Consider a $(n,k)$ linear block code whose generator matrix (over $GF(q)$) is specified as follows. 

\begin{equation}
\label{eq:Gconstruct2}
\bfG=
\begin{bmatrix}
\begin{smallmatrix}
\bfI_{t\times t}&\bfzr_{t\times t}&\cdots&\bfzr_{t\times t}&\bfzr_{t\times (t-1)}&\bfoe_t&{b_1}\bfI_{t\times t}\\
\bfzr_{t\times t}&\bfI_{t\times t}&\cdots&\bfzr_{t\times t}&\bfzr_{t\times (t-1)}&\bfoe_t&{b_2}\bfI_{t\times t}\\
\vdots&&&&&&\vdots\\
\bfzr_{t\times t}&\bfzr_{t\times t}&\cdots&\bfI_{t\times t}&\bfzr_{t\times (t-1)}&\bfoe_t&{b_{z-1}}\bfI_{t\times t}\\
\bfzr_{(t-1)\times t}&\bfzr_{(t-1)\times t}&\cdots&\bfzr_{(t-1)\times t}&\bfI_{(t-1)\times (t-1)}&\bfoe_{t-1}&\bfC(c_1, c_2)_{(t-1)\times t},
\end{smallmatrix}
\end{bmatrix}
\end{equation}
where $t=n-k-1$. If $q\ge z$, $b_1,b_2,\cdots,b_{z-1}$ are non-zero and distinct, and $c_1+c_2=0$, then $\bfG$ satisfies the CCP.

\end{claim}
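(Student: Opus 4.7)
The plan is to verify the $(k,k+1)$-consecutive column property directly by enumerating the recovery sets $\calS_a$ and showing every $k \times k$ submatrix of $\bfG_{\calS_a}$ is nonsingular. Since $k+1 = tz$ and $n = t(z+1)$, and since $\gcd(z,z+1) = 1$, the least positive integer $z'$ with $(k+1) \mid nz'$ equals $z$, so there are exactly $z+1$ recovery sets $\calS_0,\dots,\calS_z$. The columns of $\bfG$ fall into four natural groups: $z-1$ identity blocks of width $t$, one width-$(t-1)$ block containing $\bfI_{(t-1)\times(t-1)}$, the single all-ones column, and the final width-$t$ block carrying the $b_i\bfI_t$ rows stacked on $\bfC(c_1,c_2)$. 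Each $\calS_a$ omits a contiguous block of $t$ columns, and tracing through the arithmetic shows three structural types: (i) $a=0$ omits the final block; (ii) $a=1$ omits the width-$(t-1)$ block together with the all-ones column; and (iii) $a \in \{2,\dots,z\}$ omits exactly one of the middle identity blocks, which I will label $j^* = z-a$.

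Within each type, I will delete one column to form a $k \times k$ submatrix and then use the surviving identity-block columns (which are unit vectors) to perform column operations that annihilate the corresponding entries in the all-ones column and in the columns of the final block. This collapses the determinant to that of a small residual submatrix whose rows are confined either to the rows of the block containing the deleted column or to the rows of $\bfC(c_1,c_2)$, which can be computed in closed form. For Type (i) the residual is trivially a unit vector derived from the all-ones column. For Type (ii) the residual is either $b_{j+1}$ times $\bfI_t$ with row $c$ deleted, stacked on $\bfC(c_1,c_2)$, or $\bfC(c_1,c_2)$ with a single column deleted; both are nonsingular provided $c_1 \neq 0$ and $c_2 \neq 0$, which is forced by $c_1+c_2 = 0$ together with the implicit nontriviality $c_1 \neq 0$.

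For Type (iii) the residual is a $(t+1)\times(t+1)$ matrix with one all-ones column and $t$ columns that, in all but one subcase, carry the scalar $b_{j^*+1}$ as a permutation matrix. A direct expansion produces determinants of the form $b_{j^*+1}^{t-1}(b_{j^*+1}-b_{j+1})$ when the deleted column lies in another identity block; a pure power of $b_{j^*+1}$ when the deleted column is the all-ones column or lies in the final block; and $b_{j^*+1}^{t-1}\bigl(b_{j^*+1}-(c_1+c_2)\bigr)$ when the deleted column lies in the width-$(t-1)$ block. Nonvanishing of all three expressions is guaranteed by the hypotheses: $b_{j^*+1}\neq 0$ since each $b_i$ is nonzero, $b_{j+1}\neq b_{j^*+1}$ since the $b_i$ are distinct, and $c_1+c_2 = 0$ makes the third expression collapse to $b_{j^*+1}^t$.

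The main obstacle is this last subcase of Type (iii), where $c_1+c_2 = 0$ is essential and must be invoked via a careful row-operation trick: subtract $c_1/b_{j^*+1}$ times the row carrying $b_{j^*+1}$ in column $c'$ and $c_2/b_{j^*+1}$ times the row carrying $b_{j^*+1}$ in column $c'+1$ from the row contributed by $\bfC$; the first $t$ entries of that row then vanish while the last entry becomes $1-(c_1+c_2)/b_{j^*+1}$. The boundary indices $c'=0$ and $c'=t-2$ need a small sanity check but create no essential difficulty. Everything else reduces to routine block arithmetic, and the field-size hypothesis $q \geq z$ is used only to guarantee that $z-1$ distinct nonzero scalars $b_1,\dots,b_{z-1}$ exist in $GF(q)$.
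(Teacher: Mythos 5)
Your proposal is correct and takes essentially the same route as the paper's proof: a case-by-case analysis over the three types of recovery sets (the paper labels these Types I--III and further subdivides III into five deletion cases), reducing each $k\times k$ submatrix determinant to a small residual whose nonvanishing is forced by $b_i\neq 0$, the distinctness of the $b_i$'s, and $c_1+c_2=0$. The paper computes the residuals via block Schur-complement identities while you use column operations, but the resulting determinants match; one small slip in your Type (ii) description is that the residual when deleting column $c$ of identity block $j$ is the single $c$-th row of $b_{j+1}\bfI_t$ stacked on $\bfC(c_1,c_2)_{(t-1)\times t}$ (a $t\times t$ matrix), not ``$b_{j+1}\bfI_t$ with row $c$ deleted'' stacked on $\bfC$, though the conclusion that $c_1\neq 0$ and $c_2\neq 0$ are needed is unaffected.
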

\begin{proof}
	See Appendix.
\end{proof}

Given a $(n,k)$ code that satisfies the CCP, we can use it obtain higher values of $n$ in a simple manner as discussed in the claim below.
\begin{claim}
	\label{claim:matrixextend}
	Consider a $(n,k)$ linear block code over $GF(q)$ with generator matrix $\bfG$ that satisfies the CCP. Let the first $k+1$ columns of $\bfG$ be denoted by the submatrix $\bfD$. Then the matrix $\bfG'$ of dimension $k \times (n + s(k+1))$ where $s\ge 0$
	\begin{align*}
	\bfG'=[\underbrace{\bfD| \cdots |\bfD}_s| \bfG]
	\end{align*}
also satisfies the CCP.
\end{claim}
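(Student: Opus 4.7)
The plan is to show that every $(k,k+1)$-CCP window of $\bfG'$ coincides, as an (unordered) set of columns, with a CCP window of $\bfG$. Since $\bfG$ satisfies the CCP by hypothesis, each $k \times k$ submatrix of every such window is full rank, and the conclusion will follow.

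Set $n' := n + s(k+1)$ and $d := \gcd(n,k+1)$. Because $s(k+1)$ is a multiple of $k+1$, we have $\gcd(n',k+1) = d$, and hence the least positive integer $z'$ such that $(k+1) \mid n'z'$ equals $z$. Consequently, as $a$ ranges over $\{0,1,\dots,zn'/(k+1)-1\}$, the window-start positions $j := a(k+1) \bmod n'$ range precisely over the multiples of $d$ in $[0,n')$. The first $s$ copies of $\bfD$ occupy positions $0,\dots,s(k+1)-1$ of $\bfG'$ (with the column at position $p$ equal to column $p \bmod (k+1)$ of $\bfG$), while the appended copy of $\bfG$ occupies positions $s(k+1),\dots,n'-1$ (with the column at position $p$ equal to column $p-s(k+1)$ of $\bfG$).

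I would then split into four cases according to where the window $\{j,j+1,\dots,j+k\} \bmod n'$ falls. (i) If $j+k < s(k+1)$, the window lies inside the $\bfD$-block; writing $j = m(k+1)+r$ with $0 \leq r \leq k$ the window spans, in $\bfG$-indices, columns $\{r,r+1,\dots,k\} \cup \{0,1,\dots,r-1\} = \{0,1,\dots,k\}$, which is exactly $\bfD = \bfG_{\calS_0^n}$. (ii) If $j \leq s(k+1)-1 < j+k$, the window straddles the junction between the last $\bfD$-copy and $\bfG$; the same index count shows the collected $\bfG$-columns form $\{0,1,\dots,k\} = \bfD$ again, because $\bfD$ is by definition the first $k+1$ columns of $\bfG$. (iii) If $s(k+1) \leq j$ and $j+k < n'$, the window lies entirely in the $\bfG$-part and equals columns $j-s(k+1),\dots,j-s(k+1)+k$ of $\bfG$; since $j$ and $s(k+1)$ are both multiples of $d$, the starting position $j-s(k+1)$ is a multiple of $d$ in $[0,n-k-1]$, so this is a valid CCP window of $\bfG$. (iv) If $j+k \geq n'$, the window wraps around in $\bfG'$; letting $j' = j-s(k+1) \in [n-k,n-1]$ (a multiple of $d$), the window corresponds as a set to $\{j',j'+1,\dots,n-1\} \cup \{0,\dots,j'+k-n\}$, which is the wrap-around CCP window $\calS^n$ of $\bfG$ starting at $j'$.

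In all four cases, the submatrix $\bfG'_{\calS_a^{n'}}$ coincides (as a set of columns) with some CCP window of $\bfG$, so by hypothesis every $k \times k$ submatrix is full rank; therefore $\bfG'$ satisfies the $(k,k+1)$-CCP. The only real obstacle is bookkeeping: one must verify carefully that in the straddling cases (ii) and (iv) the two segments of $\bfG$-columns assemble without overlap into a window $\{0,\dots,k\}$ or a cyclic window of $\bfG$ respectively, and that in each case the starting index remains a multiple of $d$, so that the resulting submatrix is indeed one of the windows to which the hypothesis applies.
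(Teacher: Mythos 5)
Your proof is correct and takes essentially the same approach as the paper's: show that $z$ is unchanged, then verify that every CCP window of $\bfG'$ coincides, as a set of columns, with a CCP window of $\bfG$ (or with $\bfD = \bfG_{\calS_0}$). Your four cases (i)--(iv) match the paper's three cases, with (i) and (ii) together constituting the paper's Case 1 (window starting in the $\bfD$-block, which lands on $\bfD$ precisely because the first $k+1$ columns of $\bfG$ are themselves $\bfD$).
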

\begin{proof}
	See Appendix.
\end{proof}

Claim \ref{claim:matrixextend} can provide more parameter choices and more possible code constructions. For example, given $n,k,q$, where $k+1+(n)_{k+1}\le q+1<n$, there may not exist a $(n, k)$-MDS code over $GF(q)$. However, there exists a $(k+1+(n)_{k+1}, k)$-MDS code over $GF(q)$. By Claim \ref{claim:matrixextend}, we can obtain a $(n,k)$ linear block code over $GF(q)$ that satisfies the CCP. Similarly, combining Claim \ref{claim:cyclic_MDS}, Claim \ref{claim:Gconstruct1}, Claim \ref{claim:Gconstruct2} with Claim \ref{claim:matrixextend}, we can obtain more linear block codes that satisfy the CCP. 

A result very similar to Claim \ref{claim:matrixextend} can be obtained for the $(k,\alpha)$-CCP. Specifically, consider a $(n,k)$ linear block code with generator matrix $\bfG$ that satisfies the $(k,\alpha)$-CCP and let $\bfD$ be the first $\alpha$ columns of $\bfG$. Then, $\bfG'=[\underbrace{\bfD| \cdots |\bfD}_s| \bfG]$ of dimension $k\times (n+s\alpha)$ also satisfies the $(k,\alpha)$-CCP.



\subsection{Constructions where $q$ is not a prime or a prime power}
\label{sec:matrices_z_mod_q}
We now discuss constructions where $q$ is not a prime or a prime power. We attempt to construct matrices over the ring $\mathbb{Z} \mod q$ in this case. The issue is somewhat complicated by the fact that a square matrix over $\mathbb{Z} \mod q$ has linear independent rows if and only if its determinant is a unit in the ring \cite{dummit2003abstract}. In general, this fact makes it harder to obtain constructions such as those in Claim \ref{claim:Gconstruct1} that exploit the Vandermonde structure of the matrices. Specifically, the difference of units in a ring is not guaranteed to be a unit. However, we can still provide some constructions. It can be observed that Claim \ref{claim:GconstructKKCCP} and Claim \ref{claim:matrixextend} hold for linear block codes over $\mathbb{Z} \mod q$. We will use them without proof in this subsection. 


\begin{claim}
	\label{claim:SPC_CCPZmodq}
Let $\bfG = [\bfI_{k \times k} | \bfoe_k]$, i.e., it is the generator matrix of a $(k+1, k)$ single parity check (SPC) code, where the entries are from $\mathbb{Z} \mod q$.
The $\bfG$ satisfies the $(k,k+1)$-CCP and the $(k,k)$-CCP. It can be used as base matrix for  Claim \ref{claim:GconstructKKCCP}.
\end{claim}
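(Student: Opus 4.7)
The plan is to handle both CCP assertions together by exploiting the fact that $n = k+1$, which drastically limits the number of consecutive column windows that need to be examined. For the $(k,k+1)$-CCP, Definition \ref{def:MDSproperty} with $n=k+1$ yields $z=1$ and a single recovery set $\calS_0$ consisting of all $n=k+1$ columns. I therefore need to verify that every $k \times k$ submatrix obtained by deleting one column of $\bfG$ is invertible over $\mathbb{Z} \mod q$, i.e., that its determinant is a unit in the ring.

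I would first dispose of the case where the all-ones column is deleted: this simply leaves $\bfI_{k \times k}$, with determinant $1$. For each $j \in \{1,\ldots,k\}$, deleting the $j$-th identity column leaves a matrix whose columns are $\mathbf{e}_1,\ldots,\mathbf{e}_{j-1},\mathbf{e}_{j+1},\ldots,\mathbf{e}_k,\bfoe_k$. Expanding the determinant along the final (all-ones) column, every minor obtained by deleting a row $i \neq j$ still contains the original row $j$, which has zeros in columns $1,\ldots,k-1$; only the minor obtained by deleting row $j$ itself is nonsingular, and that minor is a $(k-1)\times(k-1)$ permutation matrix. Hence the determinant equals $\pm 1$, which is a unit in every $\mathbb{Z} \mod q$.

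Next I would address the $(k,k)$-CCP by observing that Definition \ref{def:kalphacc} with $\alpha=k$ and $n=k+1$ forces $z=k$, since $\gcd(k,k+1)=1$, and that the recovery sets $\calS_a^k$ for $0 \leq a \leq k$ then enumerate exactly the $k$-subsets of $\{0,\ldots,k\}$ obtained by omitting a single index. The associated $k \times k$ submatrices are precisely those already shown to have unit determinant, so the $(k,k)$-CCP follows without further work. Usability of $\bfG$ as a base matrix in Claim \ref{claim:GconstructKKCCP} is then immediate: the Kronecker-product argument in that claim only uses that the determinant of $\bfA_{\calS_a^z}$ is a unit, a condition I will have just established.

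The only subtlety, and hence the main point I would flag, is that over $\mathbb{Z} \mod q$ ``full rank'' must be interpreted as ``determinant is a unit'' rather than merely ``determinant is nonzero,'' as emphasized in Section \ref{sec:matrices_z_mod_q}. Because all of the determinants arising above are $\pm 1$, this distinction causes no difficulty here, but it is the one place where the ring setting would otherwise require additional care.
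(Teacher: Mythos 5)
Your proposal is correct and takes essentially the same approach as the paper, whose proof simply observes that every $k \times k$ submatrix of $[\bfI_{k\times k} \mid \bfoe_k]$ has determinant $\pm 1$ and hence is a unit over $\mathbb{Z}\bmod q$. You have merely filled in the details (the cofactor expansion along the all-ones column, and the verification that with $n=k+1$ the recovery sets for both CCP variants reduce to the $k+1$ possible single-column deletions) and correctly highlighted the unit-determinant subtlety that the paper states implicitly.
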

\begin{proof}
It is not too hard to see that when $\bfG = [\bfI_{k \times k} | \bfoe_k]$, any $k \times k$ submatrix of $\bfG$ has a determinant which is $\pm 1$, i.e., it is a unit over $\mathbb{Z} \mod q$. Thus, the result holds in this case.
\end{proof}
\begin{claim}
\label{claim:z_2_Zmodq}
The following matrix with entries from $\mathbb{Z}\mod q$ satisfies the $(k,k+1)$-CCP. Here $k =2t-1$ and $n=3t$.
\begin{align*}
    \bfG = \begin{bmatrix}
	   \bfI_{t \times t}& \bfzr & \bfoe_t & \bfI_{t \times t}\\
    	 \bfzr &\bfI_{(t-1) \times (t-1)}& \bfoe_t & \bfC(1,-1)_{(t-1) \times t}
	\end{bmatrix}.
\end{align*}
\end{claim}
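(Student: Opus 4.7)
The plan is to verify the $(k,k+1)$-CCP directly by computing, for each of the three recovery-set submatrices of $\bfG$, the determinant of every $(2t-1)\times(2t-1)$ submatrix and showing it equals $\pm 1$, which is always a unit in $\mathbb{Z}\mod q$. Since $\gcd(n,k+1)=\gcd(3t,2t)=t$, we get $z=2$ and $zn/(k+1)=3$, yielding the three recovery sets $\calS_0=\{0,\ldots,2t-1\}$, $\calS_1=\{2t,\ldots,3t-1,0,\ldots,t-1\}$, and $\calS_2=\{t,\ldots,3t-1\}$. The corresponding submatrices have the block forms
\begin{align*}
\bfG_{\calS_0}&=\begin{bmatrix}\bfI_{t\times t}&\bfzr&\bfoe_t\\ \bfzr&\bfI_{(t-1)\times(t-1)}&\bfoe_{t-1}\end{bmatrix},\\
\bfG_{\calS_1}&=\begin{bmatrix}\bfI_{t\times t}&\bfI_{t\times t}\\ \bfC(1,-1)&\bfzr\end{bmatrix},\\
\bfG_{\calS_2}&=\begin{bmatrix}\bfzr&\bfoe_t&\bfI_{t\times t}\\ \bfI_{(t-1)\times(t-1)}&\bfoe_{t-1}&\bfC(1,-1)\end{bmatrix}.
\end{align*}

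The key auxiliary lemma I would prove is that deleting any column from $\bfC(1,-1)_{(t-1)\times t}$ leaves a $(t-1)\times(t-1)$ matrix whose determinant is $\pm 1$; this is immediate from the bidiagonal structure of $\bfC(1,-1)$ via a short cofactor expansion. For $\bfG_{\calS_0}$ the three subcases (removing an identity-block column or the $\bfoe$ column) each reduce, after a single row cofactor expansion along the $\bfoe$ row contributing the sole nonzero entry or by direct inspection, to a block-diagonal identity whose determinant is $\pm 1$. For $\bfG_{\calS_1}$, removing a column from the second $\bfI$ block and then subtracting each first-block column from its second-block counterpart kills the upper-right identity and produces a block-triangular form whose determinant equals $\pm\det(\bfC(1,-1)^{(\setminus i)})$ and is $\pm 1$ by the lemma; removing a column from the first $\bfI$ block, followed by a block-column swap (whose sign is $(-1)^{t(t-1)}=1$), similarly yields a block-triangular form with determinant $\pm 1$.

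The main obstacle is $\bfG_{\calS_2}$, since both the $\bfoe$ column and the mixed $[\bfI;\bfC(1,-1)]$ block participate simultaneously. My plan exploits the identity $\sum_{j=0}^{t-1}c_j=\bfzr$ where $c_j$ denotes the $j$-th column of $\bfC(1,-1)$. If the removed column is column $i$ of the first $\bfI_{(t-1)\times(t-1)}$ block, I subtract every last-block column from the $\bfoe$ column; since $\sum_j e_j=\bfoe_t$, the $\bfoe$ column becomes $[\bfzr;\bfoe_{t-1}]$, and a subsequent block-column swap of trivial sign produces a block-triangular matrix whose diagonal blocks are $\bfI_{t\times t}$ and the $(t-1)\times(t-1)$ matrix $[\bfI^{(\setminus i)}_{(t-1)\times(t-2)},\bfoe_{t-1}]$, whose determinants are $1$ and $\pm 1$ respectively. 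If the removed column is the $\bfoe$ column, a single block-column swap yields two diagonal identity blocks of determinant $1$. If it is column $j$ of the last block, a single block-column swap produces diagonal blocks $[\bfoe_t,\bfI^{(\setminus j)}_{t\times(t-1)}]$ (of determinant $(-1)^j$) and $\bfI_{(t-1)\times(t-1)}$. In every case the determinant is $\pm 1$, a unit in $\mathbb{Z}\mod q$, so the $(k,k+1)$-CCP holds.
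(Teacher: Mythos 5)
Your proof is correct. The high-level strategy is identical to the paper's: the paper proves this claim by referring to the case analysis of its Claim~\ref{claim:Gconstruct2} specialized to $z=2$, $b_1=1$, $c_1=1$, $c_2=-1$, which amounts to the same task you carry out explicitly, namely checking that each $(2t-1)\times(2t-1)$ minor of the three submatrices $\bfG_{\calS_0},\bfG_{\calS_1},\bfG_{\calS_2}$ has determinant $\pm 1$. Where you diverge is in the execution. The paper's underlying Claim~\ref{claim:Gconstruct2} argument reaches the determinants via block partitions and Schur-complement identities (e.g., $\det\bfD' = b^t\bigl(1-b^{-1}(c_1+c_2)\bigr)$ in its ``Case 3''), so the constraint $c_1+c_2=0$ enters through a scalar Schur factor. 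You instead use only integer column operations and block triangularization, and for the hardest subcase (deleting a column from the leftmost block of $\bfG_{\calS_2}$) you invoke the column-sum identity $\sum_{j}c_j=\mathbf 0$ of $\bfC(1,-1)$ to annihilate the top of the all-ones column directly. Since that column-sum identity is just a restatement of $c_1+c_2=0$, the two arguments exploit the same algebraic fact, but your route avoids any division and is self-contained (not a ``specialize Claim~\ref{claim:Gconstruct2}'' pointer), which is somewhat cleaner over $\mathbb{Z}\bmod q$ where one would otherwise want to pause and justify that the inverse $b^{-1}$ exists. Your sign bookkeeping (the $(-1)^{t(t-1)}=1$ block-swap parity, the $(-1)^{j}$ from $[\bfoe_t\mid\bfI^{(\setminus j)}]$) and the computation $\det\bfC(1,-1)^{(\setminus i)}=\pm 1$ are all accurate, so the proposal is a valid, slightly more elementary rendering of the same proof.

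Two very small bookkeeping notes. First, the paper's statement of the claim contains a dimensional typo — the all-ones block in the second block row should read $\bfoe_{t-1}$ rather than $\bfoe_t$ — and your block forms implicitly (and correctly) fix this. Second, your $\bfG_{\calS_1}$ is written in wraparound column order $\{2t,\dots,3t-1,0,\dots,t-1\}$ rather than sorted order; since the CCP is a rank condition and hence invariant under column permutation, this is harmless, but it is worth saying so explicitly to avoid confusion with the paper's eq.~(\ref{eq:claim_6_type_2}), which uses the sorted ordering and therefore has the $\bfzr$ block in the lower-left rather than lower-right position.
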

\begin{proof}
This can be proved by following the arguments in the proof of Claim \ref{claim:Gconstruct2} while treating elements to be from $\mathbb{Z} \mod q$ and setting $z=2$. We need to consider three different $k \times (k+1)$ submatrices for which we need to check the property. These correspond to simpler instances of the submatrices considered in Types I - III in the proof of Claim \ref{claim:Gconstruct2}. In particular, the corresponding determinants will always be $\pm 1$ which are units over $\mathbb{Z} \mod q$.
\end{proof}
\begin{remark}
We note that the general construction in Claim \ref{claim:Gconstruct2} can potentially fail in the case when the matrices are over $\mathbb{Z} \mod q$. This is because in one of the cases under consideration (specifically, Type III, Case 1), the determinant depends on the difference of the $b_i$ values. The difference of units in $\mathbb{Z} \mod q$ is not guaranteed to be a unit, thus there is no guarantee that the determinant is a unit.
\end{remark}
\begin{remark}
We can use Claim \ref{claim:matrixextend} to obtain higher values of $n$ based on the above two classes of linear block codes over $\mathbb{Z} \mod q$.
\end{remark}

%
%

While most constructions of cyclic codes are over $GF(q)$, there has been some work on constructing cyclic codes over $\mathbb{Z}\mod q$. Specifically, \cite{blake1972codes} provides a construction where $q=q_1\times q_2 \cdots \times q_d$ and $q_i, i = 1, \dots, d$ are prime. We begin by outlining this construction. By the Chinese remainder theorem any element $ \gamma \in \mathbb{Z}\mod q$ has a unique representation in terms of its residues modulo $q_i$, for $i=1, \dots, d$. Let $\psi: \mathbb{Z}\mod q \rightarrow GF(q_1) \times \dots \times GF(q_d)$ denote this map.
\begin{itemize}
\item Suppose that $(n,k_i)$ cyclic codes over $GF(q_i)$ exist for all $i = 1, \dots, d$. Each individual code is denoted $\calC^{i}$.
\item Let $\calC$ denote the code over $\mathbb{Z} \mod q$. Let $\bfc^{(i)} \in \calC^{i}$ for $i = 1, \dots, d$. The codeword $\bfc \in \calC$ is obtained as follows. The $j$-th component of $\bfc$, $\bfc_j = \psi^{-1}(\bfc^{(1)}_j, \dots, \bfc^{(d)}_j)$
\end{itemize}

Therefore, there are $q_1^{k_1}q_2^{k_2}\cdots q_d^{k_d}$ codewords in $\calC$. It is also evident that $\calC$ is cyclic.
As discussed in Section \ref{sec:construction}, we form the matrix $\bfT$ for the codewords in $\calC$. It turns out that using $\bfT$ and the technique discussed in Section \ref{sec:construction}, we can obtain a resolvable design. Furthermore, the gain of the system in the delivery phase can be shown to be $k_{min}=\min\{k_1,k_2,\cdots,k_d\}$. We discuss these points in detail in the Appendix (Section \ref{sec:blake_codes}).


\section{Discussion and Comparison with Existing Schemes}
\label{sec:compare}
\subsection{Discussion}



When the number of users is $K=nq$ and the cache fraction is $\frac{M}{N}=\frac{1}{q}$, we have shown in Theorem \ref{them:main} that the gain $g=k+1$ and $F_s=q^kz$. Therefore, both the gain and the subpacketization level increase with larger $k$. Thus, for our approach given a subpacketization budget $F'_s$, the highest coded gain that can be obtained is denoted by $g_{max}=k_{max}+1$ where $k_{max}$ is the largest integer such that $q^{k_{max}}z\le F'_s$ and there exists a $(n,k_{max})$ linear block code that satisfies the CCP.

For determining $k_{max}$, we have to characterize the collection of values of $k$ such that there exists a $(n,k)$ linear code satisfies the CCP over $GF(q)$ or $\mathbb{Z} \mod q$. We use our proposed constructions (MDS code, Claim \ref{claim:cyclic_MDS}, Claim \ref{claim:Gconstruct1}, Claim \ref{claim:Gconstruct2}, Claim \ref{claim:matrixextend}, Claim \ref{claim:SPC_CCPZmodq}, Claim \ref{claim:z_2_Zmodq}) for this purpose. We call this collection $\calC(n,q)$ and generate it in Algorithm \ref{Alg:CandidateK}. We note here that it is entirely possible that there are other linear block codes that fit the appropriate parameters and are outside the scope of our constructions. Thus, the list may not be exhaustive. In addition, we note that we only check for the $(k,k+1)$-CCP. Working with the $(k,\alpha)$-CCP where $\alpha \leq k$ can provide more operating points. 



%
%
{
	\renewcommand{\arraystretch}{1.5}
	\begin{table*}[t]
		\begin{center}	
		\begin{tabular}{|c|c|c|c|c|c|}
			\hline
			\hline
			$k$ & $n'$ & $z$ & $\alpha$ & \text{Construction} & \text{Notes}\\
			\hline
			\hline 
			11  & 12  & 1  & 1  & (12, 11)\text{~SPC code} & $k+1=n'$\\
			\hline
			10  & 12  & 11  & 12   & -&-\\
			\hline
			9  & 12  & 5  & 6    & \text{Claim \ref{claim:Gconstruct2}} & $\alpha = z+1$ and $q\ge z$\\
			\hline
			8  & 12  & 3  & 4     & \text{Claim \ref{claim:Gconstruct2}} &  $\alpha = z+1$ and $q\ge z$\\
		    \hline
			7  & 12  & 2  & 3    & \text{Claim \ref{claim:Gconstruct2}} & $\alpha = z+1$ and $q\ge z$\\
		    \hline
			6  & 12  & 7  & 12     & \text{Claim \ref{claim:cyclic_MDS}} &\text{Generator polynomial is} $X^6+X^5+3X^4+3X^3+X^2+4X+3$\\
		    \hline
			5  & 6  & 1  & 1     & \text{(6,5) SPC code and Claim \ref{claim:matrixextend}} &\text{Extend (6,5) SPC code to (12,5) code}\\
		    \hline
			4  & 7  & 5  & 7     & \text{Claim \ref{claim:cyclic_MDS}} &\text{Generator polynomial is} $X^8+X^7+4X^6+3X^5+2X^3+X^2+4X+4$\\
		    \hline
			3  & 4  & 1  & 1     & \text{(4,3) SPC code and Claim \ref{claim:matrixextend}} &\text{Extend (4,3) SPC code to (12,3) code} \\
		    \hline
			2  & 3  & 1  & 1     &  \text{(3,2) SPC code and Claim \ref{claim:matrixextend}} &\text{Extend (3,2) SPC code to (12,2) code}\\
			\hline
			1  & 2  & 1  & 1     &  \text{(2,1) SPC code and Claim \ref{claim:matrixextend}} &\text{Extend (2,1) SPC code to (12,1) code}\\
			\hline
			\hline
		\end{tabular}
    \end{center}
    \caption{\label{Table:Example} List of $k$ values for Example \ref{eg:discuss_comp_1}. The values of $n', \alpha$ and $z$ are obtained by following Algorithm \ref{Alg:CandidateK}.}
	\end{table*}
}
\begin{example}
\label{eg:discuss_comp_1}
	Consider a caching system with $K=nq=12\times 5=60$ users and cache fraction $\frac{M}{N}=\frac{1}{5}$. Suppose that the subpacketization budget is $1.5\times 10^6$. By checking all $k<n$ we can construct $\calC(n,q)$ (see Table \ref{Table:Example}). 
As a result, $\calC(n,q)=\{1,2,3,4,5,6,7,8,9,11\}$. Then $k_{max}=8$, $F_s\approx 1.17\times 10^6$ and the maximal coded gain we can achieve is $g_{max}=9$. By contrast, the scheme in \cite{maddahN14} can achieve coded gain $g=\frac{KM}{N}+1=13$ but requires subpacketization level $F_s=\binom{K}{\frac{KM}{N}}\approx 1.4\times 10^{12}$.

We can achieve almost the same rate by performing memory-sharing by using the scheme of \cite{maddahN14} in this example. In particular, we divide each file of size $\Omega$ into two smaller subfiles $W_n^1$ and $W_n^2$, where the size of $W_n^1$, $|W_n^1|=\frac{9}{10} \Omega$ and the size of $W_n^2$, $|W_n^2|=\frac{1}{10}\Omega$. The scheme of \cite{maddahN14} is then applied separately on $W_n^1$ and $W_n^2$ with $\frac{M_1}{N_1}=\frac{2}{15}$ (corresponding to $W_n^1$) and $\frac{M_2}{N_2}=\frac{13}{15}$ (corresponding to $W_n^2$). Thus, the overall cache fraction is $0.9 \times \frac{2}{15} + 0.1 \times \frac{13}{15} \approx \frac{1}{5}$. The overall coded gain of this scheme is $g \approx 9$. However, the subpacketization level is $F_s^{MN} = \binom{K}{KM_1/N_1} + \binom{K}{KM_2/N_2}\approx 5\times 10^9$, which is much greater than the subpacketization budget.
\end{example}

In Fig. \ref{fig:rate_comps}, we present another comparison for system parameters $K=64$ and different values of $M/N$. The scheme of \cite{maddahN14} works for all $M/N$ such that $KM/N$ is an integer. In Fig. \ref{fig:rate_comps}, our plots have markers corresponding to $M/N$ values that our scheme achieves. For ease of presentation, both the rate (left $y$-axis) and the logarithm of the subpacketization level (right $y$-axis) are shown on the same plot. We present results corresponding to two of our construction techniques: (i) the SPC code and (ii) a smaller SPC code coupled with Claim \ref{claim:matrixextend}. It can be seen that our subpacketization levels are several orders of magnitude smaller with only a small increase in the rate.

\begin{figure}[!t]
		\centering
		\includegraphics[scale=0.35]{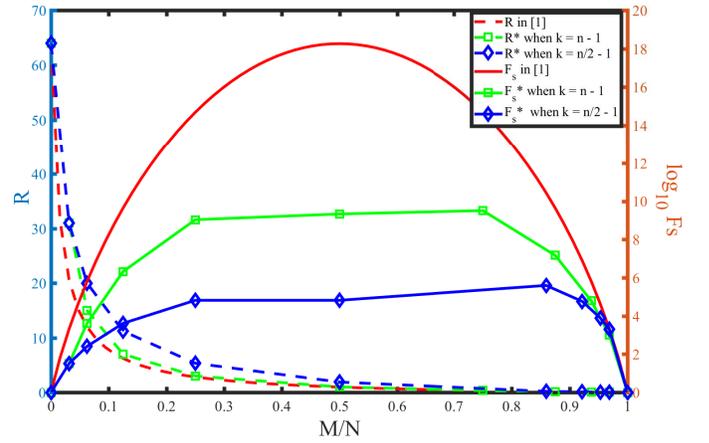}
		\caption{A comparison of rate and subpacketization level vs. $M/N$ for a system with $K=64$ users. The left $y$-axis shows the rate and the right $y$-axis shows the logarithm of the subpacketization level. The green and the blue curves correspond to two of our proposed constructions. Note that our schemes allow for multiple orders of magnitude reduction in subpacketization level and the expense of a small increase in coded caching rate.}
		\label{fig:rate_comps}
\end{figure}

An in-depth comparison for general parameters is discussed next. In the discussion below, we shall use the superscript $*$ to refer to the rates and subpacketization levels of our proposed scheme.

\begin{algorithm}[t]
	\caption{$\calC(n,q)$ Construction Algorithm}
	\label{Alg:CandidateK}
	\SetKwInOut{Input}{Input}
	\SetKwInOut{Output}{Output}
	\Input{$n$, $q$, $\calC(n,q)=\emptyset$}
	\If{$q$ is a prime power}
	{
		\For{$k=1:(n-1)$}
		{
			$n'\leftarrow (n)_{k+1}+k+1$\;
			$z\leftarrow \frac{k+1}{\gcd(n',k+1)}$\;
            $\alpha \leftarrow \frac{n'}{\gcd(n',k+1)}$\;
            \eIf{there exists a $(n' + i(k+1),k)$ cyclic code which satisfies the condition in Claim \ref{claim:cyclic_MDS} for some $i$ such that $n' + i(k+1) \leq n$ }{$\calC(n,q)\leftarrow k$. Corresponding codes are constructed by using Claim \ref{claim:matrixextend}.}{
			\eIf{$z\le 2$}{$\calC(n,q)\leftarrow k$. Corresponding codes are constructed by SPC code and Claim \ref{claim:matrixextend} when $z=1$ or Claim \ref{claim:Gconstruct2} and Claim \ref{claim:matrixextend} when $z=2$.}{
			\eIf{$q+1 \geq n'$}{$\calC(n,q)\leftarrow k$. Corresponding codes are constructed by MDS code and Claim \ref{claim:matrixextend}.}
		                   { \If{$\alpha=z+1$ and $q\ge z$} {$\calC(n,q)\leftarrow k$. Corresponding codes are constructed by Claim \ref{claim:Gconstruct2} and Claim \ref{claim:matrixextend}.}
		                     \If{$\alpha>z+1$ and $q > \alpha$} {$\calC(n,q)\leftarrow k$. Corresponding codes are constructed by Claim \ref{claim:Gconstruct1} and Claim \ref{claim:matrixextend}.}
                           }
           }
        }
        }
    }
    \If{$q$ is not a prime power}
    {
    	\For{$k=1:(n-1)$}
    	{
    		\If{$z\le 2$}
    		{$\calC(n,q)\leftarrow k$. Corresponding codes are constructed by Claim \ref{claim:SPC_CCPZmodq} when $z=1$ and Claim \ref{claim:matrixextend} or Claim \ref{claim:z_2_Zmodq} and Claim \ref{claim:matrixextend}  when $z=2$.
    		}
    	}
    }
	\Output{$\calC(n,q)$}
\end{algorithm}

\subsection{Comparison with memory-sharing within the scheme of \cite{maddahN14}}

Suppose that for given $K$, $M$ and $N$, a given rate $R$ can be achieved  by the memory sharing of the scheme in \cite{maddahN14} between the corner points $(M_1, R_1), (M_2,R_2),\cdots, (M_d,R_d)$ where $M_i=\frac{t_iN}{K}$ for some integer $t_i$. Then $R_i=\frac{K(1-\frac{M_i}{N})}{1+\frac{KM_i}{N}}$, $R=\sum_{i=1}^d \lambda_i R_i$, $M/N = \sum_{i=1}^d \lambda_i \frac{M_i}{N}$ and $\sum_{i=1}^d \lambda_i=1$. The subpacketization level is $F_s^{MS}=\sum_{i=1}^d\binom{K}{\frac{KM_i}{N}}$. In addition, we note that the function $h(x) = K(1 - x)/(1 + Kx)$ is convex in the parameter $0 \leq x \leq 1$. This can be verified by a simple second derivative calculation.

We first argue that $F_s^{MS}$ is lower bounded by $\binom{K}{\frac{KM'}{N}}$, where $M'$ is obtained as follows. For a given $M/N$, we first determine $\lambda$ and $\frac{M^*}{N}$ that satisfy the following equations.
\begin{align}
R &=\lambda\frac{K(1-\frac{M^*}{N})}{1+\frac{KM^*}{N}}+(1-\lambda)\frac{K\frac{M^*}{N}}{1+K(1-\frac{M^*}{N})}, \text{~and~} \label{eq:memoryR}\\
\frac{M}{N} &=\lambda \frac{M^*}{N}+(1-\lambda)\bigg{(} 1-\frac{M^*}{N} \bigg{)}. \label{eq:memoryM}
\end{align}
Here, $\frac{M^*}{N}\le \frac{1}{2}$, and $M'=\frac{t'N}{K}$, where $t'$ is the least integer such that $M'\ge M^*$.

To see this, consider the following argument. Suppose that the above statement is not true. Then, there exists a scheme that operates via memory sharing between points $(M_1,R_1), \cdots, (M_d,R_d)$ such that $F_s<\binom{K}{K\frac{M'}{N}}$. Note that $\binom{K}{\frac{K M_1}{N}}<\binom{K}{\frac{K M_2}{N}}$ if $\frac{M_1}{N}<\frac{M_2}{N}\le \frac{1}{2}$ or $\frac{M_1}{N}>\frac{M_2}{N}\ge \frac{1}{2}$. By the convexity of $h(\cdot)$, we can conclude that $(M, R)$ is not in the convex hull of the corner points $(M_1,R_1), \cdots, (M_d,R_d)$. This is a contradiction.

Next, we compare this lower bound on $F_s^{MS}$ to the subpacketization level of our proposed scheme. In principle, we can solve the system of equations (\ref{eq:memoryR}) and (\ref{eq:memoryM}) for $R = \frac{n(q-1)}{k+1}$ and $\frac{M}{N} = \frac{1}{q}$ and obtain the appropriate $\lambda$ and $M^*$ values\footnote{Similar results can be obtained for $\frac{M}{N}=1-\frac{k+1}{nq}$}. Unfortunately, doing this analytically becomes quite messy and does not yield much intuition. Instead, we illustrate the reduction in subpacketization level by numerical comparisons.

\begin{example}
	Consider a $(9,5)$ linear block code over $GF(2)$ with generator matrix specified below.
	\begin{align*}
	\bfG=
	\begin{bmatrix}
        1&0&0&0&0&1&1&0&0\\
        0&1&0&0&0&1&0&1&0\\
        0&0&1&0&0&1&0&0&1\\
        0&0&0&1&0&1&1&1&0\\
        0&0&0&0&1&1&0&1&1
    \end{bmatrix}.	
	\end{align*}
    It can be checked that $\bfG$ satisfies the $(5,6)$-CCP. Thus, it corresponds to a coded caching system with $K=9\times 2=18$ users. Our scheme achieves the point $\frac{M_1}{N}=\frac{1}{2}$, $R_1=\frac{3}{2}$, $F_{s,1}^*=64$ and $\frac{M_2}{N}=\frac{2}{3}$, $R_2=\frac{2}{3}$, $F_{s,2}^*=96$.

    On the other hand for $\frac{M_1}{N}=\frac{1}{2}$, $R_1=\frac{3}{2}$, by numerically solving  (\ref{eq:memoryR}) and (\ref{eq:memoryM}) we obtain $\frac{M_1^*}{N}\approx 0.227$ and therefore $\frac{M_1'}{N}=\frac{5}{18}$. Then $F_{s,1}^{MS}\ge\binom{18}{5}=8568$, which is much higher than $F_{s,1}^*=64$. A similar calculation shows that $\frac{M_2^*}{N}\approx \frac{1}{4}$ and therefore $\frac{M_2'}{N}=\frac{5}{18}$. Thus  $F_{s,2}^{MS}$ is also at least as large as $8568$, which is still much higher than $F_{s,2}^*=96$.
\end{example}

The next set of comparisons are with other proposed schemes in the literature. We note here that several of these are restrictive in the parameters that they allow.
\subsection{Comparison with \cite{maddahN14}, \cite{yan_et_al17},\cite{yan2016placement}, \cite{shangguan2016centralized} and \cite{shanmugam2017coded}}

For comparison with \cite{maddahN14}, denote  $R^{MN}$  and $F_s^{MN}$ be the rate and the subpacketization level of the scheme of \cite{maddahN14}, respectively. For the rate comparison, we note that
\begin{align*}
 \frac{R^*}{R^{MN}}=\frac{1+n}{1+k},&\text{~~for~} \frac{M}{N}=\frac{1}{q}\\
\frac{R^*}{R^{MN}}=\frac{nq-k}{nq-n},&\text{~~for~} \frac{M}{N}=1-\frac{1+k}{nq},
\end{align*}

For the comparison of  subpacketization level we have the following results.
\begin{claim} When $K = nq$, the following results hold.
	\label{claim:compare_MN}
	\begin{itemize}
		\item If  $\frac{M}{N}=\frac{1}{q}$, we have
		\begin{align}
		\lim_{n \rightarrow \infty} \frac{1}{K}\log_2\frac{F_s^{MN}}{F_s^*} = H_2\bigg{(}\frac{1}{q}\bigg{)} -\frac{\eta}{q} \log_2 q. \label{eq:scaling_gain}
		\end{align}
		\item If  $\frac{M}{N}=1-\frac{k+1}{nq}$, we have 
		\begin{align}
		\lim_{n \rightarrow \infty}\frac{1}{K}\log_2\frac{F_s^{MN}}{F_s^*}= H_2\bigg{(}\frac{\eta}{q}\bigg{)}-\frac{\eta}{q} \log_2 q.
		\end{align}
	\end{itemize}
	In the above expressions, $0 < \eta =k/n \le 1$ and $H_2(\cdot)$ represents the binary entropy function. 
\end{claim}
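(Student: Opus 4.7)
The plan is to compute the normalized logarithms of $F_s^{MN}$ and $F_s^*$ separately and subtract, using the standard entropy approximation for binomial coefficients, $\frac{1}{N}\log_2 \binom{N}{pN} \to H_2(p)$ as $N \to \infty$ with $p$ fixed. The key bookkeeping observation I will repeatedly invoke is that the integer $z$ from Theorem \ref{them:main} admits the closed form $z = (k+1)/\gcd(n,k+1) \leq k+1 \leq n+1$, so $\log_2 z = O(\log n)$, which vanishes after normalization by $K = nq$.

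For the first case, $M/N = 1/q$, I will first note that $KM/N = n$, so $F_s^{MN} = \binom{nq}{n}$, and the entropy approximation with $p = 1/q$ yields $\frac{1}{K}\log_2 F_s^{MN} \to H_2(1/q)$. Then, for $F_s^* = q^k z$, taking logs and normalizing gives
\begin{align*}
\frac{1}{K}\log_2 F_s^* &= \frac{k}{nq}\log_2 q + \frac{\log_2 z}{nq}.
\end{align*}
Since $\log_2 z /(nq) = O((\log n)/n) \to 0$ and $k/n \to \eta$, this converges to $(\eta/q)\log_2 q$, and subtracting the two limits gives the first claim.

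For the second case, $M/N = 1 - (k+1)/(nq)$, I will observe that $KM/N = nq - (k+1)$, so $F_s^{MN} = \binom{nq}{nq - (k+1)} = \binom{nq}{k+1}$. Since $(k+1)/(nq) \to \eta/q$ as $n \to \infty$, the entropy approximation yields $\frac{1}{K}\log_2 F_s^{MN} \to H_2(\eta/q)$. For $F_s^* = (q-1)\,q^k\, zn/(k+1)$, I will use the simplification $zn/(k+1) = n/\gcd(n,k+1) \leq n$ to conclude that
\begin{align*}
\frac{1}{K}\log_2 F_s^* &= \frac{k}{nq}\log_2 q + O\!\left(\frac{\log n}{n}\right) \to \frac{\eta}{q}\log_2 q,
\end{align*}
and subtracting gives the second claim.

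The main obstacle is not conceptual but purely a matter of careful bookkeeping: every non-exponential factor appearing in $F_s^*$ (namely $z$, $q-1$, $n$, and $k+1$) must be shown to contribute only a logarithmic term that is absorbed by the $1/(nq)$ normalization. The closed-form bound $z \leq k+1 \leq n+1$ handles this cleanly, after which the remaining steps are routine applications of Stirling's formula.
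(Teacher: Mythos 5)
Your proposal is correct and follows essentially the same route as the paper's proof in the Appendix: both compute the normalized logarithm of the ratio, invoke the entropy approximation $\binom{K}{Kp} \approx 2^{KH_2(p)}$, and use the bound $z \le k+1$ (which you sharpen to the exact identity $z = (k+1)/\gcd(n,k+1)$) to show the non-exponential factors are $O((\log n)/n)$ and vanish after dividing by $K = nq$. The paper's derivation is slightly more cavalier in the second case — it writes the $q$-dependent contribution as $\frac{k+1}{K}\log_2 q$, implicitly merging the $\log_2(q-1)$ factor into an extra $\log_2 q$ — whereas you track $(q-1)$, $z$, $n$, and $k+1$ separately, but since all of these contribute only vanishing terms, the two derivations coincide in substance.
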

\begin{proof}
	Both results are simple consequences of approximating $\binom{K}{Kp}\approx 2^{KH_2(p)}$ \cite{graham1994concrete}. The derivations can be found in the Appendix.
\end{proof}
It is not too hard to see that $F_s^*$ is exponentially lower than $F_s^{MN}$. Thus, our rate is higher, but the subpacketization level is exponentially lower.
Thus, the gain in the scaling exponent of with respect to the scheme of \cite{maddahN14} depends on the choice of $R$ and the value of $M/N$. In Fig. \ref{fig:scaling_exponent_gain} we plot this value of different values of $R$ and $q$. The plot assumes that codes satisfying the CCP can be found for these rates and corresponds to the gain in eq. (\ref{eq:scaling_gain}).
\begin{figure}[t]
		\centering
		\includegraphics[scale=0.5]{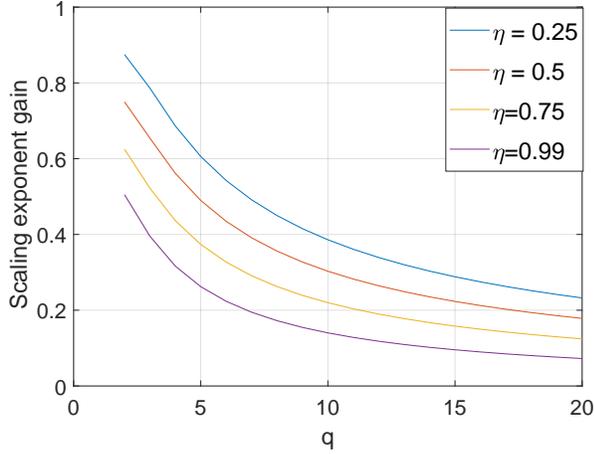}
		\caption{The plot shows the gain in the scaling exponent obtained using our techniques for different value of $M/N = 1/q$. Each curve corresponds to a choice of $\eta = k/n$.}
		\label{fig:scaling_exponent_gain}

\end{figure}

%

In \cite{yan_et_al17} a scheme for the case when $M/N = 1/q$ or $M/N = 1 - 1/q$ with subpacketization level exponentially smaller with respect to \cite{maddahN14} was presented. This result can be recovered a special case of
our work (Theorem \ref{them:main}) when the linear block code is chosen as a single parity check code over $\mathbb{Z} \mod q$. In this specific case, $q$ does not need to be a prime power. Thus, our results subsume the results of \cite{yan_et_al17}.

In a more recent preprint, reference \cite{yan2016placement}, proposed a caching system with $K =\binom{m}{a}$, $F_s=\binom{m}{b}$ and $M/N = 1 -\binom{a}{\lambda}\binom{m-a}{b-\beta}/ \binom{m}{b}$. The corresponding rate is 
$$R = \frac{\binom{m}{a+b-2\beta}}{\binom{m}{b}}\min \bigg{\{} \binom{m-(a+b-2\beta)}{\lambda},\binom{a+b-2\beta}{a-\beta} \bigg{\}},$$
where $m,a,b,\beta$ are positive integers and $0<a<m$, $0<b<m$, $0\le \beta \le \min\{a,b\}$. While a precise comparison is somewhat hard, we can compare the schemes for certain parameter choices, that were also considered in \cite{yan2016placement}.

Let $a=2$, $\beta=1$, $m=2b$. This corresponds to a coded caching system with $K=b(2b-1)\approx 2b^2$, $\frac{M}{N}=\frac{b-1}{2b-1}\approx \frac{1}{2}$, $F_s=\binom{2b}{b}\approx 2^{2b}$, $R=b$. For comparison with our scheme we keep the transmission rates of both schemes roughly the same and let $n=b^2$, $q=2$, $k=b-1$. We assume that the corresponding linear block code exists. Then $F^*_s\approx 2^b$, which is better than $F_s$.

On the other hand if we let $\beta=0$, $a=2$, $m=2qb$, we obtain a coded caching system with
$K=\frac{m(m-1)}{2}$, $\frac{M}{N}\approx \frac{1}{q}$, $F_s=\binom{m}{\frac{m}{2q}}\approx (2q)^{\frac{m}{2q}}$, $R^{YAN}=(2q-1)^2$.
For keeping the rates the same, we let $n=\frac{m(m-1)}{2q}$, $k= \frac{m(m-1)}{4q(2q-1)}-1$ so that  $F^*_s\approx q^{\frac{m(m-1)}{4q(2q-1)}}\approx q^{\frac{m^2}{8q^2}}$. In this regime, the subpacketization level of \cite{yan2016placement} will typically be lower.

The work of \cite{shangguan2016centralized} proposed caching schemes with parameters (i) $K=\binom{m}{a}$, $\frac{M}{N}=1-\frac{\binom{m-a}{b}}{\binom{m}{b}}$, $F_s=\binom{m}{b}$ and $R=\frac{\binom{m}{a+b}}{\binom{m}{b}}$, where $a,b,m$ are positive integers and $a+b\le m$ and (ii) $K=\binom{m}{t}q^t$, $\frac{M}{N}=1-\frac{1}{q^t}$, $F_s=q^m(q-1)^t$ and $R=\frac{1}{(q-1)^t}$, where $q,t,m$ are positive integers.

Their scheme (i), is the special case of scheme in \cite{yan2016placement} when $\beta=0$. For the second scheme, if we let $t=2$, \cite{shangguan2016centralized} shows that $R\approx R^*$, $F_s\approx q^{\sqrt{\frac{K}{2q}}}(\sqrt{q}-1)^2$ and $F_s^*\approx (q-1)q^{\frac{K}{q}-1}$, which means $F_s$ is again better than $F_s^*$. We emphasize here that these results require somewhat restrictive parameter settings.

Finally, we consider the work of \cite{shanmugam2017coded}. In their work, they leveraged the results of \cite{alon2012nearly} to arrive at coded caching schemes where the subpacketization is linear in $K$. Specifically, they show that for any constant $M/N$, there exists a scheme with rate $K^\delta$, where $\delta > 0$ can be chosen arbitrarily small by choosing $K$ large enough. From a theoretical perspective, this is a positive result that indicates that regimes where linear subpacketization scaling is possible. However, these results are only valid when the value of $K$ is very large. Specifically, $K = C^n$ and the result is asymptotic in the parameter $n$. For these parameter ranges, the result of \cite{shanmugam2017coded} will clearly be better as compared to our work.

\section{Conclusions and Future Work}
\label{sec:conclusion}

In this work we have demonstrated a link between specific classes of linear block codes and the subpacketization problem in coded caching. Crucial to our approach is the consecutive column property which enforces that certain consecutive column sets of the corresponding generator matrices are full-rank. We present several constructions of such matrices that cover a large range of problem parameters. Leveraging this approach allows us to construct families of coded caching schemes where the subpacketization level is exponentially smaller compared to the approach of \cite{maddahN14}.

There are several opportunities for future work. Even though our subpacketization level is significantly lower than \cite{maddahN14}, it still scales exponentially with the number of users. Of course, the rate of growth with the number of users is much smaller. There have been some recent results on coded caching schemes that demonstrate the existence of schemes where the subpacketization scales sub-exponentially in the number of users. It would be interesting to investigate whether some of these ideas can be leveraged to obtain schemes that work for practical systems with tens or hundreds of users.



\appendix
\subsection{Resolvable design over $Z\text{~mod~}q$}
\begin{lemma}
	\label{lemma:resol_ring}
	A $(n,k)$ linear block code over $\mathbb{Z} \text{~mod~}q$ with generator matrix $\bfG=[g_{ab}]$ can construct a resolvable block design by the procedure in Section \ref{sec:construction} if $\gcd(q,g_{0b},g_{1b},\cdots,g_{(k-1)b})=1$ for $0\le b<n$.
\end{lemma}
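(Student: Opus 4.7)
The plan is to mimic the proof of Lemma \ref{lemma:Resolve_MDS}, but replace the field-theoretic statement ``solve $\bfu_{a^*} g_{a^*b} = c$ uniquely'' with the correct ring-theoretic analogue: the statement that the map $\bfu \mapsto (\bfu\bfG)_b$ is surjective and has fibers of equal size $q^{k-1}$. The gcd hypothesis is precisely what is needed to make this surjectivity hold column-by-column.

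Concretely, fix a column index $b$ with $0 \le b < n$ and consider the additive group homomorphism
\[
\phi_b : (\mathbb{Z}/q\mathbb{Z})^k \longrightarrow \mathbb{Z}/q\mathbb{Z}, \qquad \phi_b(\bfu) \;=\; \sum_{a=0}^{k-1} \bfu_a\, g_{ab}.
\]
I would first observe that the image $\operatorname{Im}(\phi_b)$ is precisely the cyclic subgroup of $\mathbb{Z}/q\mathbb{Z}$ generated by the residues of $g_{0b}, g_{1b}, \dots, g_{(k-1)b}$, which equals $\langle d_b \rangle$ for $d_b = \gcd(g_{0b}, g_{1b}, \dots, g_{(k-1)b}) \bmod q$. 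Since $|\langle d_b\rangle| = q / \gcd(q, d_b)$, the map $\phi_b$ is surjective if and only if $\gcd(q, g_{0b}, g_{1b}, \dots, g_{(k-1)b}) = 1$, which is the hypothesis.

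Once surjectivity is established, the first isomorphism theorem gives $|\ker \phi_b| = q^{k-1}$, and hence every fiber $\phi_b^{-1}(l)$ has size exactly $q^{k-1}$ for each $l \in \{0,1,\dots,q-1\}$. Translating back through the identification of codewords with points $j \in X = \{0,1,\dots,q^k-1\}$ used to build $\bfT$ in eq.~(\ref{eq:T}), this is exactly the statement $|B_{b,l}| = q^{k-1}$. Because the fibers of $\phi_b$ partition $(\mathbb{Z}/q\mathbb{Z})^k$, the family $\calP_b = \{B_{b,l} : 0 \le l \le q-1\}$ consists of $q$ disjoint blocks whose union is $X$; this is a parallel class. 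Doing this for every $b \in \{0,\dots,n-1\}$ yields the resolution.

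The main obstacle is the step identifying $\operatorname{Im}(\phi_b)$ with $\langle d_b\rangle$ and characterizing when this equals all of $\mathbb{Z}/q\mathbb{Z}$. This is the only place where commutative ring structure genuinely differs from the field case: in a field we could simply isolate $\bfu_{a^*}$ by division, whereas over $\mathbb{Z}/q\mathbb{Z}$ we have to argue via the ideal/subgroup generated by the column entries. Everything else (equal fiber sizes, partition into a parallel class, resolvability) follows by routine transfer of the field-case argument.
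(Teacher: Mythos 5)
Your proof is correct, but it takes a genuinely different route from the paper. The paper factors $q = q_1 q_2 \cdots q_d$ into prime powers, argues that the gcd hypothesis forces some entry $g_{a^*b}$ to be a unit in each $\mathbb{Z}/q_i\mathbb{Z}$ (so that $\bfu_{a^*}$ can be solved for uniquely mod $q_i$ for any choice of the other coordinates), counts $q_i^{k-1}$ solutions in each factor, and then stitches these together with the Chinese remainder theorem to obtain $q^{k-1}$ solutions mod $q$. Your approach instead works directly over $\mathbb{Z}/q\mathbb{Z}$: you view each column $b$ as a group homomorphism $\phi_b : (\mathbb{Z}/q\mathbb{Z})^k \to \mathbb{Z}/q\mathbb{Z}$, identify its image with the cyclic subgroup $\langle \gcd(g_{0b},\dots,g_{(k-1)b}) \bmod q\rangle$ of order $q/\gcd(q,g_{0b},\dots,g_{(k-1)b})$, deduce surjectivity from the gcd hypothesis, and then read off equal fiber sizes $q^{k-1}$ from the first isomorphism theorem. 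Both arguments are valid. Yours avoids CRT and the prime-power decomposition entirely, replacing them with a one-line subgroup-of-a-cyclic-group computation; it is arguably cleaner and also makes transparent that the gcd condition is not merely sufficient but exactly characterizes surjectivity of $\phi_b$. The paper's proof is more in the spirit of the field-case Lemma~\ref{lemma:Resolve_MDS} (isolate a pivot variable), which perhaps motivates it, but it requires knowing the factorization of $q$ and invoking two classical tools where your argument invokes one.
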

\begin{proof}
    Assume $q=q_1\times q_2\times \cdots \times q_d$ where $q_i$, $1\le i\le d$ is a prime or a prime power. If the $\gcd(q,g_{0b}, g_{1b}, \cdots, g_{(k-1)b})=1$, then it is evident that $\gcd(q_i,g_{0b}, g_{1b},\cdots, g_{(k-1)b})=1$ for $1\le i\le d$. As $q_i$ is either a prime or a prime power, it follows that there exists a $g_{a^*b}$ which is relatively prime to $q_i$, i.e., $g_{a^*b}$ is a unit in the ring $\mathbb{Z} \mod q_i$.

    Note that for $\Delta = [\Delta_0 ~\Delta_1~ \dots~\Delta_{n-1}]= \bfu \bfG$, we have
	\begin{equation}
	\label{eq:resolring}
	\Delta_b = \sum_{a=0}^{k-1} \bfu_ag_{ab},
	\end{equation}
	where $\bfu=[\bfu_0,\cdots, \bfu_{k-1}]$. We consider eq. (\ref{eq:resolring}) over the ring $\mathbb{Z} \mod q_i$ and rewrite eq. (\ref{eq:resolring}) as
	\begin{align*}
	\Delta_b-\bfu_{a^*}g_{a^*b}=\sum_{a\neq a^*}\bfu_ag_{ab},
	\end{align*}
	For arbitrary $\bfu_a$, $a\neq a^*$, this equation has a unique solution for $\bfu_{a^*}$ since $g_{a^*b}$ is a unit in $\mathbb{Z} \mod q_i$. This implies that there are $q_i^{k-1}$ distinct solutions for (\ref{eq:resolring}) over $\mathbb{Z} \mod q_i$. Using the Chinese remainder theorem, eq. (\ref{eq:resolring}) has $q_1^{k-1}\times q_2^{k-1}\times \cdots q_d^{k-1}=q^{k-1}$ solutions over $\mathbb{Z} \mod q$ and the result follows.
\end{proof}

\begin{remark}
	From Lemma \ref{lemma:resol_ring}, it can be easily verified that a linear block code over $\mathbb{Z}\text{~mod~}q$ can construct a resolvable block design if one of the following conditions for each column $\bfg_i$ of the generator matrix is satisfied.
	\begin{itemize}
		\item At least one non-zero entry of $\bfg_i$ is a unit in $\mathbb{Z} \mod q$, or
		\item all non-zero entries in $\bfg_i$ are zero divisors but their greatest common divisor is 1.
	\end{itemize}	
	For the SPC code over $\mathbb{Z}\text{~mod~}q$, all the non-zero entries in the generator matrix are $1$, which is an unit. Therefore, the construction always results in a resolvable design.
\end{remark}
\subsection*{Proof of Lemma \ref{lemma:delivery}}
First, we show that the proposed delivery scheme allows each user's demand to be satisfied. Note that Claim \ref{claim:signalgeneration} shows that each user in a parallel class that belongs to the recovery set $\calS_a$ recovers all missing subfiles with a specified superscript from it. Thus, we only need to show that if signals are generated (according to Claim \ref{claim:signalgeneration}) for each recovery set, we are done. This is equivalent to showing that the bipartite recovery set graph is such that each parallel class has degree $z$ and multiple edges between nodes are disallowed.


Towards this end, consider the parallel class and we claim that there exist exactly $z$ solutions $(a_\alpha, b_\alpha)$ for integer values of $\alpha=1, \dots, z$ to the equation
\begin{align}
a_\alpha (k+1) + b_\alpha &= j + n(\alpha-1) \label{eq:congruence}
\end{align}
such that $a_{\alpha_1} \neq a_{\alpha_2}$ for $\alpha_1 \neq \alpha_2$ and $j < n$.
The existence of the solution for each equation above follows from the division algorithm. Note that $a_\alpha < nz/(k+1)$ as the $RHS < nz$. Furthermore, note that for $1 \leq \alpha_1 \leq z$ and $1 \leq \alpha_2 \leq z$, we cannot have solutions to eq. (\ref{eq:congruence}) such that $a_{\alpha_1} = a_{\alpha_2}$ as this would imply that $ |b_{\alpha_1} - b_{\alpha_2}| \geq n$ which is a contradiction.
This shows that each parallel class $P_j$ participates in at least $z$ different recovery sets.

The following facts follow easily from the construction of the recovery sets. The degree of each recovery set in the bipartite graph is $k+1$ and there are $\frac{nz}{k+1}$ of them; multiple edges between a recovery set and a parallel class are disallowed. Therefore, the total number of edges in the bipartite graph is $nz$. As each parallel class participates in at least $z$ recovery sets, by this argument, it participates in exactly $z$ recovery sets.
.

Finally, we calculate the rate of the delivery phase. In total, the server transmits $q^k(q-1)\frac{zn}{k+1}$ equations, where the symbol transmitted has the size of a subfile. Thus, the rate is
\begin{align*}
R&=q^k(q-1)\frac{zn}{k+1}\frac{1}{q^kz}\\
&=\frac{(q-1)n}{k+1}.
\end{align*}
\endproof

%

\subsection*{Proof of Claim \ref{claim:cyclic_MDS}}
The matrix  $\bfG_\calS$ is shown below.
\begin{align*}
\begin{bmatrix}
g_0&g_1&\cdot&\cdot&g_{\lceil\frac{k}{2}\rceil-1}&0&\cdot&\cdot&0&0\\
0&g_0&g_1&\cdot&g_{\lceil\frac{k}{2}\rceil-2}&0&\cdot&\cdot&0&0\\
\vdots&&&&&&&&&\vdots\\
0&\cdot&0&g_0&g_1&0&\cdot&\cdot&0&0\\
0&\cdot&\cdot&0&g_0&g_{n-k}&0&\cdot&\cdot&0\\
0&\cdot&\cdot&0&0&g_{n-k-1}&g_{n-k}&0&\cdot&0\\
\vdots&&&&&&&&&\vdots\\
0&\cdot&\cdot&0&0&g_{n-k-\lfloor \frac{k}{2}\rfloor}&\cdot&\cdot&\cdot&g_{n-k}
\end{bmatrix}
\end{align*}
In the above expression and the subsequent discussion if $i$ is such that $i < 0$, we set $g_i=0$.

By Claim \ref{claim:cyclic_k1}, a cyclic code with generator matrix $\bfG$ satisfies the CCP if all submatrices \begin{align*}
\bfG_{\calS\setminus (a+j)_n}=&[\bfg_{(a)_n},\bfg_{(a+1)_n},\cdots,\bfg_{(a+j-1)_n},\\
&~\bfg_{(a+j+1)_n},\cdots,\bfg_{(a+k)_n}],
\end{align*} where $a=n-\lfloor \frac{k}{2}\rfloor-1$, $0\le j\le k$, have full rank. 
In what follows, we argue that this is true.
Note that in the generator matrix of cyclic code, any $k$ consecutive columns are linearly independent \cite{lincostello}. Therefore for $j=0$ and $k$, $\bfG_{\calS\setminus (a+j)_n}$ has full rank, without needing the conditions of Claim \ref{claim:cyclic_MDS}.
For $0<j\le \lfloor \frac{k}{2}\rfloor$, $\bfG_{\calS\setminus (a+j)_n}$ is as eq. (\ref{eq:G_cylic_1}).
\begin{table*}
	\begin{align}
	\label{eq:G_cylic_1}
	\begin{bmatrix}
	g_0&g_1&\cdot&g_{\lceil\frac{k}{2}\rceil-1}&0&\cdot&\cdot&\cdot&\cdot&0\\
	0  &g_0&\cdot&g_{\lceil\frac{k}{2}\rceil-2}&0&\cdot&\cdot&\cdot&\cdot&0\\
	\vdots&&&&&&&&&\vdots\\
	0&\cdot&g_0&g_1&0&\cdot&\cdot&\cdot&\cdot&0\\
	0&\cdot&0&g_0&g_{n-k}&0&\cdot&\cdot&\cdot&0\\
	0&\cdot&\cdot&0&g_{n-k-1}&g_{n-k}&0&\cdot&\cdot&0\\
	\vdots&&&&&&&&&\vdots\\
	0&\cdot&\cdot&0&g_{n-k-j+1}&\cdot&g_{n-k}&0&\cdot&0\\
	0&\cdot&\cdot&0&g_{n-k-j}&\cdot&g_{n-k-1}&0&\cdot&0\\
	0&\cdot&\cdot&0&g_{n-k-j-1}&\cdot&g_{n-k-2}&g_{n-k}&\cdot&0\\
	\vdots&&&&&&&&&\vdots\\
	0&\cdot&\cdot&0&g_{n-k-\lfloor \frac{k}{2}\rfloor}&\cdot&g_{n-k-\lfloor \frac{k}{2}\rfloor+j-1}&g_{n-k-\lfloor \frac{k}{2}\rfloor+j+1}&\cdot&g_{n-k}
	\end{bmatrix}.
	\end{align}
\end{table*}

Rewriting $\bfG_{\calS\setminus (a+j)_n}$ in block form, we get
\begin{align*}
\bfG_{\calS\setminus (a+j)_n}=
\left[
\begin{array}{c|cc}
\mathbf A_j&\multicolumn{2}{c}{\mathbf B_j}\\  \hline
\multirow{2}*{\Huge$\mathbf 0$}&\mathbf C_j&\mathbf 0\\
&\mathbf D_j&\mathbf E_j
\end{array}
\right],
\end{align*}
where
$$\mathbf A_j=\begin{bmatrix}
g_0&g_1&\cdot&\cdot&g_{\lceil\frac{k}{2}\rceil-1}\\
0&g_0&g_1&\cdot&g_{\lceil\frac{k}{2}\rceil-2}\\
\vdots&&&&\vdots\\
0&\cdot&&0&g_0
\end{bmatrix},$$
$$\mathbf C_j=\begin{bmatrix}
g_{n-k-1}&g_{n-k}&0&\cdot&\cdot&0\\
g_{n-k-2}&g_{n-k-1}&g_{n-k}&0&\cdot&0\\
\vdots&&&&&\vdots\\
g_{n-k-j+1}&\cdot&\cdot&\cdot&\cdot&g_{n-k}\\
g_{n-k-j}&\cdot&\cdot&\cdot&\cdot&g_{n-k-1}
\end{bmatrix},$$
and
$$\mathbf E_j=\begin{bmatrix}
g_{n-k}&0&\cdot&\cdot&0\\
g_{n-k-1}&g_{n-k}&0&\cdot&0\\
\vdots&&&&\vdots\\
g_{n-k-\lfloor \frac{k}{2}\rfloor+j+1}&\cdot&\cdot&\cdot&g_{n-k}
\end{bmatrix}.$$
Matrices $\mathbf A_j$ and $\mathbf E_j$ have full rank as they are respectively upper triangular and lower triangular, with non-zero entries on the diagonal (as $g_0$ and $g_{n-k}$ are non-zero in a cyclic code). Therefore, $\bfG_{\calS\setminus (a+j)_n}$ has full rank if $\mathbf C_j$ has full rank.
For $\lfloor \frac{k}{2}\rfloor<j<k$, $\bfG_{\calS\setminus (a+j)_n}$ can be partitioned into a similar form and the result in Claim \ref{claim:cyclic_MDS} follows.
\endproof

\subsection*{Proof of Claim \ref{claim:Gconstruct1}}
We need to argue that all $k\times k$ submatrices of $\bfG_{\calS_{a}}$ where $0\le a< \alpha$ are full rank. In what follows we argue that all $k\times k$ submatrices of $\bfG_{\calS_0}$ are full rank. The proof for any $\bfG_{\calS_a}$ is similar. Note that $\bfG_{\calS_0}$ can be written compactly as follows by using Kronecker products.
\begin{align*}
	\bfG_{\calS_0}=
	\begin{bmatrix}
		\bfA\otimes \bfI_t \\
		\bfB\otimes \bfC_{(t-1)\times t}(1,1)
	\end{bmatrix},
\end{align*}
where 	
\begin{align*}
\bfA &=
\begin{bmatrix}
b_{00}&\cdots&b_{0(z-1)}\\
\vdots&&\vdots\\
b_{(z-2)0}&\cdots&b_{(z-2)(z-1)}
\end{bmatrix}
\end{align*}
and $\bfB =[b_{(z-1)0},\cdots,b_{(z-1)(z-1)}]$.

Next, we check the determinant of submatrices $\bfG_{\calS_0\setminus i}$ obtained by deleting $i$-th column of $\bfG_{\calS_0}$. W.l.o.g, we let $i=(z-1)t+j$ where $0 \leq j < t$. The block form of the resultant matrix $\bfG_{\calS_0\setminus i}$ can be expressed as
\begin{align*}	
	\bfG_{\calS_0\setminus i}&=
	\begin{bmatrix}
	\bfA'\otimes \bfI_t&\bfA''\otimes \Delta_1\\
	\bfB'\otimes \bfC_{(t-1)\times t}(1,1)&\bfB''\otimes \Delta_2
	\end{bmatrix},
\end{align*}
where $\bfA'$ and $\bfA''$ are the first $z-1$ columns and last column of $\bfA$ respectively. Likewise, $\bfB'$ and $\bfB''$ are the first $z-1$ components and last component of $\bfB$. The matrices $\Delta_1$ and $\Delta_2$ are obtained by deleting the $j$-th column of $\bfI_t$ and $\bfC_{(t-1)\times t}(1,1)$ respectively. Then, using the Schur determinant identity \cite{hornJ91}, we have
\begin{align*}
\det(\bfG_{\calS_0\setminus i})=&\det(\bfA'\otimes \bfI_t)\det(\bfB''\otimes \Delta_2-\\
&\bfB'\otimes \bfC_{(t-1)\times t}(1,1)\cdot (\bfA'\otimes \bfI_t)^{-1}\cdot \bfA''\otimes\Delta_1)\\
\overset{(1)}{=}&\det(\bfA'\otimes \bfI_t)\det(\bfB''\otimes \Delta_2-\\
&\bfB'\bfA'^{-1}\bfA''\otimes \bfC_{(t-1)\times t}(1,1)\Delta_1)\\
\overset{(2)}{=}&\det(\bfA'\otimes \bfI_t)\det((\bfB''-\bfB'\bfA'^{-1}\bfA'')\otimes \Delta_2),
\end{align*}
where $(1)$ holds by the properties of the Kronecker product \cite{hornJ91} and $(2)$ holds since $ \bfC_{(t-1)\times t}(1,1)\Delta_1=\Delta_2$. Next note that $\det(\Delta_2) \neq 0$. This is because $\Delta_2$ can be denoted as
	\begin{align*}
\Delta_2=
\left[
\begin{array}{c|c}
\mathbf A&\mathbf \bfzr\\  \hline
\bfzr&\mathbf B
\end{array}
\right],
\end{align*}
where $\bfA$ is a $j\times j$ upper-triangular matrix;
\begin{align*}
\bfA=\begin{bmatrix}
1&1&0&\cdots&0&0\\
0&1&1&\cdots&0&0\\
\vdots&&&&&\vdots\\
0&0&0&\cdots&1&1\\
0&0&0&\cdots&0&1
\end{bmatrix}
\end{align*}
and $\bfB$ is a $(t-1-j)\times (t-1-j)$ lower-triangular matrix;
\begin{align*}
\bfB=\begin{bmatrix}
1&0&\cdots&0&0&0\\
1&1&\cdots&0&0&0\\
\vdots&&&&&\vdots\\
0&0&\cdots&1&1&0\\
0&0&\cdots&0&1&1
\end{bmatrix}.
\end{align*}
Next, we define the matrix
\begin{align*}
\bfF= \begin{bmatrix}
b_{00}&b_{01}&\cdots&b_{0(z-1)}\\
b_{10}&b_{11}&\cdots&b_{1(z-1)}\\
\vdots&&&\vdots\\
b_{(z-1)0}&b_{(z-1)1}&\cdots&b_{(z-1)(z-1)}
\end{bmatrix}.
\end{align*}
Another application of the Schur determinant identity yields
\begin{align*}
\det(\bfB''-\bfB'\bfA'^{-1}\bfA'')&=\frac{\det(\bfF)}{\det(\bfA')}\\
&\neq 0,
\end{align*}
since $\det(\bfF)$ and $\det(\bfA')$ are both non-zero as their columns have the Vandermonde form. In $\bfF$, the columns correspond to distinct and non-zero elements from $GF(q)$; therefore, $q>z$. Note however, that the above discussion focused only $\bfG_{\calS_0}$. As the argument needs to apply for all $\bfG_{\calS_a}$ where $0 \leq a < \alpha$, we need $q > \alpha$.

\subsection*{Proof of Claim \ref{claim:Gconstruct2}}

Note that the matrix in eq. (\ref{eq:Gconstruct2}) is the generator matrix of $(n,k)$ linear block code over $GF(q)$ where $nz=(z+1)(k+1)$. Since $z$ and $z+1$ are coprime, $z$ is the least positive integer such that $k+1~|~nz$. To show $\bfG$ satisfies the CCP, we need to argue that all $k\times k$ submatrices of $\bfG_{\calS_a}$ where $0\le a\le z$ are full rank. It is easy to check that $\calS_a=\{0,\cdots, n-1\}\setminus \{t(z-a),t(z-a)+1,\cdots,t(z-a)+t-1\}$. We verify three types of matrix $\bfG_{\calS_a}$ as follows: I. $a=0$ II. $a=1$ III. $a>1$.
\begin{itemize}
	\item Type I
	
	When $a=0$, it is easy to verify that any $k\times k$ submatrix of $\bfG_{\calS_0}$ has full rank since $\bfG_{\calS_0}$ has the form $[\bfI_{k\times k}|\bfoe_k]$, which is the generator matrix of the SPC code.
	
	\item Type II
	
	When $a=1$, $\bfG_{\calS_1}$ has the form in eq. (\ref{eq:claim_6_type_2}),
		\begin{table*}[t]
	\begin{align}
		\bfG_{\calS_1}&=
		\begin{bmatrix}
		\begin{array}{cc}
		\underbrace{\begin{array}{cccc}
			\bfI_{t\times t}&\bfzr_{t\times t}&\cdots&\bfzr_{t\times t}\\
			\bfzr_{t\times t}&\bfI_{t\times t}&\cdots&\bfzr_{t\times t}\\
			\vdots&&&\\
			\bfzr_{t\times t}&\bfzr_{t\times t}&\cdots&\bfI_{t\times t}\\
			\bfzr_{(t-1)\times t}&\bfzr_{(t-1)\times t}&\cdots&\bfzr_{(t-1)\times t}
			\end{array}
			}_{\text{Case 1}}&
		\underbrace{\begin{array}{c}
			{b_1}\bfI_{t\times t}\\
			{b_2}\bfI_{t\times t}\\
			\vdots\\
			{b_{z-1}}\bfI_{t\times t}\\
			\bfC(c_1, c_2)_{(t-1)\times t}\\
			\end{array}
		}_{\text{Case 2}}
		\end{array}
				\end{bmatrix}
        \label{eq:claim_6_type_2}
	\end{align}
\end{table*}
	
	Case 1: Suppose that we delete any of first $(z-1)t$ columns in $\bfG_{\calS_1}$ (this set of columns is depicted by the underbrace in eq. (\ref{eq:claim_6_type_2})), say $i$-th column of $\bfG_{\calS_1}$, where $z_1t\le i< (z_1+1)t$ and $0\le z_1\le z-2$. Let $i_1=i-z_1t$, $i_2=(z_1+1)t-i-1$. The resultant matrix $\bfG_{\calS_1 \setminus i}$ can be expressed as follows.
	\begin{align*}
		\bfG_{\calS_1 \setminus i}=
		\left[
		\begin{array}{c|c}
			\mathbf A&\mathbf C\\  \hline
			\mathbf B&\mathbf D
		\end{array}
		\right],
	\end{align*}
	where
	\begin{align*}
		\mathbf A &= \bfI_{i\times i},\\
		\mathbf B &= \bfzr_{(k-i)\times i},\\
		\mathbf C &=
		\left[
		\begin{array}{ccc}
			\bfzr_{t\times (k-t-i)} &\multicolumn{2}{c}{{b_1}\bfI_{t\times t}} \\
			\bfzr_{t\times (k-t-i)} & \multicolumn{2}{c}{{b_2}\bfI_{t\times t}} \\
			\vdots &\multicolumn{2}{c}{\vdots}\\
			\bfzr_{t\times (k-t-i)} & \multicolumn{2}{c}{{b_{z_1}}\bfI_{t\times
					t}}\\
			\bfzr_{i_1\times (k-t-i)} &{b_{z_1+1}\bfI_{i_1\times
				i_1}}&\bfzr_{i_1\times (i_2+1)}
		\end{array}
		\right],
	\end{align*}
	and $\mathbf D$ has the form in eq. (\ref{eq:typeII_D}).
	\begin{table*}[t]
	\begin{align}
\mathbf D &=
\left[
\begin{array}{ccccccc}
\bfzr_{1\times i_2}  & \bfzr_{1\times t} & \cdots & \bfzr_{1\times t} & \bfzr_{1\times i_1} & b_{z_1+1} & \bfzr_{1\times i_2}\\
\bfI_{i_2\times i_2}  &\bfzr_{i_2\times t} &\cdots & \bfzr_{i_2\times t} &\multicolumn{2}{c}{\bfzr_{i_2\times (i_1+1)}} & {b_{z_1+1}}\bfI_{i_2\times
	i_2}\\
\vdots&&&&&\vdots&\\
\bfzr_{t\times i_2}  &\bfzr_{t\times t} &\cdots & \bfI_{t\times t} &\multicolumn{3}{c}{{b_{z-1}}\bfI_{t\times
		t}}\\
\bfzr_{(t-1)\times i_2}&\bfzr_{(t-1)\times t}&\cdots&\bfzr_{(t-1)\times t}&\multicolumn{3}{c}{\bfC(c_1, c_2)_{(t-1)\times t}}
\end{array}
\right]
\label{eq:typeII_D}
\end{align}
	\end{table*}

	Note that if $z_1=0$, $\bfC=[\bfzr_{i_1\times(k-t-i)}~b_1\bfI_{i_1\times i_1}~\bfzr_{i_1\times (i_2+1)}]$ and if $z_1=z-2$, 
	$$
	\mathbf D =
	\left[
	\begin{array}{cccc}
	\bfzr_{1\times i_2} & \bfzr_{1\times i_1} & b_{z-1} & \bfzr_{1\times i_2}\\
     \bfI_{i_2\times i_2} &\multicolumn{2}{c}{\bfzr_{i_2\times (i_1+1)}} & {b_{z-1}}\bfI_{i_2\times i_2}\\
	\bfzr_{(t-1)\times i_2}&\multicolumn{3}{c}{\bfC(c_1, c_2)_{(t-1)\times t}}
	\end{array}
	\right].
	$$
To verify $\bfG_{\calS_1 \setminus i}$ has full rank, we just need to check $\bfD$ has full rank (as $\mathbf A$ is full rank). Checking that $\bfD$ has full rank can be further simplified as follows. As $b_{z_1 + 1} \neq 0$, we can move the corresponding column that has $b_{z_1 + 1}$ as its first entry so that it is the first column of $\bfD$.

Following this, consider $\bfC(c_1, c_2)_{(t-1)\times t}\setminus \bfc_{i_1}$ which is obtained by deleting the $i_1$-th column of $\bfC(c_1, c_2)_{(t-1)\times t}$. 
	\begin{align*}
		\bfC(c_1, c_2)_{(t-1)\times t}\setminus \bfc_{i_1} = \begin{bmatrix}
			\bfD_1 & \bfD_2\\
			\bfD_3 & \bfD_4
		\end{bmatrix},
	\end{align*}
	where $\bfD_1$ is a $i_1\times i_1$ matrix as follows
	\begin{align*}
		\bfD_1=\begin{bmatrix}
			c_1 & c_2 & 0 & 0 &\cdots&0 & 0\\
			0 & c_1 & c_2 & 0 &\cdots&0 & 0\\
			\vdots&&&&&&\vdots\\
			0 & 0 & 0 & \cdots & 0 &c_1 & c_2\\
			0 & 0 & 0 & \cdots & 0 &0 & c_1
		\end{bmatrix},
	\end{align*}
	$\bfD_4$ is a $i_2\times i_2$ matrix as follows
	\begin{align*}
		\bfD_4=\begin{bmatrix}
			c_2 & 0 & 0 & \cdots & 0 &0 & 0\\
			c_1 & c_2 & 0 & \cdots &0&0 & 0\\
			\vdots&&&&&&\vdots\\
			0 & 0 & \cdots & 0 & c_1 &c_2 & 0\\
			0 & 0 & \cdots & 0 & 0 &c_1 & c_2
		\end{bmatrix},
	\end{align*}
	and $\bfD_2$ and $\bfD_3$ are $i_1\times i_2$ and $i_2\times i_1$ all zero matrices respectively. Then $\det(\bfC(c_1, c_2)_{(t-1)\times t}\setminus \bfc_{i_1})=c_1^{i_1}c_2^{i_2}$ and $\det(\bfG_{\calS_1\setminus i}) = \pm b_{z_1+1}c_1^{i_1}c_2^{i_2} \neq 0$. 
	
	Case 2: By deleting any of last $t$ columns in $\bfG_{\calS_1}$, say $i$-th column of $\bfG_{\calS_1}$, where $(z-1)t\le i<zt$, the block form of resultant matrix $\bfG_{\calS_1 \setminus i}$ can be expressed as follows.
	\begin{align*}
		\bfG_{\calS_1 \setminus i}=
		\left[
		\begin{array}{c|c}
			\mathbf A&\mathbf C\\  \hline
			\mathbf B&\mathbf D
		\end{array}
		\right],
	\end{align*}
	where $\bfA = \bfI_{(z-1)t\times (z-1)t}$, $\mathbf B=\bfzr_{(t-1)\times (z-1)t}$, $\bfC$ is obtained by deleting the $(i-(z-1)t)$-th column of matrix $[{b_1}\bfI_{t\times t}~{b_2}\bfI_{t\times t}~\cdots~{b_{z-1}}\bfI_{t\times t}]^T$ and $\bfD$ is $\bfC(c_1, c_2)_{(t-1)\times t}\setminus \bfc_{i-(z-1)t}$. Since $\det(\bfD)=c_1^{i-(z-1)t}c_2^{zt-i-1}\neq 0$,  $\det(\bfG_{\calS_1 \setminus i})= \pm c_1^{i-(z-1)t}c_2^{zt-i-1}\neq 0$ and therefore $\bfG_{\calS_1 \setminus i}$ has full rank. 
	
	\item Type III
	when $a>1$, $\bfG_{\calS_a}$ has the form in eq. (\ref{eq:claim_6_type_3}). 
	\begin{table*}
	\begin{align}
		\bfG_{\calS_a}=
		\begin{bmatrix}
		\underbrace{
		\begin{array}{ccc}
		\bfI_{t\times t}& \cdots& \bfzr_{t\times t}\\
		\vdots&&\vdots\\
		\bfzr_{t\times t}& \cdots& \bfI_{t\times t}\\
		\bfzr_{t\times t}& \cdots& \bfzr_{t\times t}\\
		\bfzr_{t\times t}& \cdots& \bfzr_{t\times t}\\
		\vdots&&\vdots\\
		\bfzr_{t\times t}& \cdots& \bfzr_{t\times t}\\
		\bfzr_{(t-1)\times t}& \cdots& \bfzr_{(t-1)\times t}	
		\end{array}
	    }_{\text{Case 1}}
	    &
	    \underbrace{
	    	\begin{array}{ccc}
	    	\bfzr_{t\times t} &\cdots&\bfzr_{t\times t}\\
	    	\vdots&&\vdots\\
	    	\bfzr_{t\times t} &\cdots&\bfzr_{t\times t}\\
	    	\bfzr_{t\times t} &\cdots&\bfzr_{t\times t}\\
	    	\bfI_{t\times t} &\cdots&\bfzr_{t\times t}\\
	    	\vdots&&\vdots\\
	    	\bfzr_{t\times t}& \cdots& \bfI_{t\times t}\\
	    	\bfzr_{(t-1)\times t} &\cdots&\bfzr_{(t-1)\times (t-1)}	
	    	\end{array}
	    }_{\text{Case 2}}
	    &
	    \underbrace{
	    	\begin{array}{c}
	    	\bfzr_{t\times (t-1)}\\
	    	\vdots\\
	    	\bfzr_{t\times (t-1)}\\
	    	\bfzr_{t\times (t-1)}\\
	    	\bfzr_{t\times (t-1)}\\
	    	\vdots\\
	    	\bfzr_{t\times (t-1)}\\
	    	\bfI_{(t-1)\times (t-1)}
	    	\end{array}
	    	}_{\text{Case 3}}
	    &
	    \underbrace{
	    	\begin{array}{c}
	    	\bfoe_{t}\\
	    	\vdots\\
	    	\bfoe_{t}\\
	    	\bfoe_{t}\\
	    	\bfoe_{t}\\
	    	\vdots\\
	    	\bfoe_{t}\\
	    	\bfoe_{t-1}
	    	\end{array}
	    }_{\text{Case 4}}
	    &
	    	\underbrace{
	    		\begin{array}{c}
	    		{b_1}\bfI_{t\times t}\\
	    		\vdots\\
	    		{b_{z-a}}\bfI_{t\times t}\\
	    		{b_{z-a+1}}\bfI_{t\times t}\\
	    		{b_{z-a+2}}\bfI_{t\times t}\\
	    		\vdots\\
	    		{b_{z-1}}\bfI_{t\times t}\\
	    		\bfC(c_1, c_2)_{(t-1)\times t}
	    		\end{array}
	    	}_{\text{Case 5}}
		\end{bmatrix}
    \label{eq:claim_6_type_3}
	\end{align}
	\end{table*}
As before we perform a case analysis. Each of the cases is specified by the corresponding underbrace in eq. (\ref{eq:claim_6_type_3}).

    Case 1:	By deleting the $i$-th column of $\bfG_{\calS_a}$, where $z_1t\le i< (z_1+1)t$, $z_1\le z-a-1$, $i_1=i-z_1t$, and $i_2=(z_1+1)t-i-1$, the block form of the resultant matrix $\bfG_{\calS_a \setminus i}$ can be expressed as follows,
	
	\begin{align*}
		\bfG_{\calS_a \setminus i}=
		\left[
		\begin{array}{c|c}
			\mathbf A&\mathbf C\\  \hline
			\mathbf B&\mathbf D
		\end{array}
		\right],
	\end{align*}
	where $\bfA= \bfI_{i\times i}, \mathbf B= \bfzr_{(k-i)\times i}
		$, $\mathbf C$ and $\mathbf D$ has the form in eq. (\ref{eq:typeIII_C}) and eq. (\ref{eq:typeIII_D}), respectively.
		\begin{table*}[t]
		\begin{align}
	\mathbf C &=
\left[
\begin{array}{ccccccc}
\bfzr_{t\times i_2} &\bfzr_{t\times t} &\cdots & \bfzr_{t\times (t-1)} & \bfoe_t &\multicolumn{2}{c}{{b_1}\bfI_{t\times t}}\\
\vdots&&&&&\multicolumn{2}{c}{\vdots}\\
\bfzr_{t\times i_2} &\bfzr_{t\times t}&\cdots & \bfzr_{t\times (t-1)} & \bfoe_t &\multicolumn{2}{c}{{b_{z_1}}\bfI_{t\times t}}\\
\bfzr_{i_1\times i_2}&\bfzr_{i_1\times t} &\cdots & \bfzr_{i_1\times (t-1)} & \bfoe_{i_1} &{b_{z_1+1}}\bfI_{i_1\times i_1}&\bfzr_{i_1\times(i_2+1)}
\label{eq:typeIII_C}
\end{array}
\right]
		\end{align}
		\end{table*}
			\begin{table*}[t]
	\begin{align}
	\bfD&=	\left[
\begin{array}{cccccccccc}
\bfzr_{1\times i_2} &\bfzr_{1\times t} &\cdots&\cdots &\cdots  &\bfzr_{1\times (t-1)} & 1 &\bfzr_{1\times i_1}  & b_{z_1+1} & \bfzr_{1\times i_2} \\
\bfI_{i_2\times i_2} &\bfzr_{i_2\times t} &\cdots&\cdots&\cdots  &\bfzr_{i_2\times (t-1)} &\bfoe_{i_2} &\multicolumn{2}{c}{\bfzr_{i_2\times (i_1+1)}} & {b_{z_1+1}}\bfI_{i_2\times i_2}\\
\bfzr_{t\times i_2} &\bfI_{t\times t} &\cdots&\cdots&\cdots  &\bfzr_{t\times (t-1)} &\bfoe_{t} &\multicolumn{3}{c}{{b_{z_1+2}}\bfI_{t\times t}}\\
\vdots &&&&&\vdots\\
\bfzr_{t\times i_2}& \bfzr_{t\times t}& \cdots &\bfzr_{t\times t} &\cdots&\bfzr_{t\times (t-1)}&\bfoe_t&\multicolumn{3}{c}{{b_{z-a+1}}\bfI_{t\times t}}\\
\bfzr_{t\times i_2}& \bfzr_{t\times t}& \cdots &\bfI_{t\times t} &\cdots&\bfzr_{t\times (t-1)}&\bfoe_t&\multicolumn{3}{c}{{b_{z-a+2}}\bfI_{t\times t}}\\
\vdots&&&&&&&&&\vdots\\
\bfzr_{(t-1)\times i_2}& \bfzr_{(t-1)\times t}& \cdots &\bfzr_{(t-1)\times t} &\cdots&\bfI_{(t-1)\times (t-1)}&\bfoe_{t-1}&\multicolumn{3}{c}{\bfC(c_1, c_2)_{(t-1)\times t}}
\label{eq:typeIII_D}
\end{array}
\right]
	\end{align}
\end{table*}
	Note that if $z_1=0$, $\bfC=[\bfzr_{i_1\times i_2}~\bfzr_{i_1\times t}~\cdots~ \bfzr_{i_1\times (t-1)}~\bfoe_{i_1}~{b_1}\bfI_{i_1\times i_1}~\bfzr_{i_1\times(i_2+1)}]$ and if $z_1=z-a-1$, $\mathbf D$ has the form in (\ref{eq:typeIII_D_2}).
	\begin{table*}[h]
	\begin{align}
	\mathbf D =
	\left[
	\begin{array}{cccccccccc}
	\bfzr_{1\times i_2} &\bfzr_{1\times t} &\cdots&\cdots &\cdots  &\bfzr_{1\times (t-1)} & 1 &\bfzr_{1\times i_1}  & b_{z-a} & \bfzr_{1\times i_2} \\
	\bfI_{i_2\times i_2} &\bfzr_{i_2\times t} &\cdots&\cdots&\cdots  &\bfzr_{i_2\times (t-1)} &\bfoe_{i_2} &\multicolumn{2}{c}{\bfzr_{i_2\times (i_1+1)}} & {b_{z-a}}\bfI_{i_2\times i_2}\\
	\bfzr_{t\times i_2}& \bfzr_{t\times t}& \cdots &\bfzr_{t\times t} &\cdots&\bfzr_{t\times (t-1)}&\bfoe_t&\multicolumn{3}{c}{{b_{z-a+1}}\bfI_{t\times t}}\\
	\bfzr_{t\times i_2}& \bfzr_{t\times t}& \cdots &\bfI_{t\times t} &\cdots&\bfzr_{t\times (t-1)}&\bfoe_t&\multicolumn{3}{c}{{b_{z-a+2}}\bfI_{t\times t}}\\
	\vdots&&&&&&&&&\vdots\\
	\bfzr_{(t-1)\times i_2}& \bfzr_{(t-1)\times t}& \cdots &\bfzr_{(t-1)\times t} &\cdots&\bfI_{(t-1)\times (t-1)}&\bfoe_{t-1}&\multicolumn{3}{c}{\bfC(c_1, c_2)_{(t-1)\times t}}
	\end{array}
	\right].
	\label{eq:typeIII_D_2}
	\end{align}
	\end{table*}

	To verify that $\bfG_{\calS_a \setminus i}$ has full rank, we just need to check $\bfD$ has full rank. Owing to the construction of $\bfD$, we have to check the determinant of the following $(t+1) \times (t+1)$ matrix.
	\begin{align*}
	\bfF=
		\begin{bmatrix}
			1&0 & 0 & \cdots & b_{z_1+1}& \cdots &0\\
			1&b_{z-a+1} &0 &\cdots & 0 & \cdots &0\\
			1&0 & b_{z-a+1}&\cdots & 0 & \cdots &0\\
			\vdots&&&&\vdots\\
			1&0&0&\cdots & 0 & \cdots&b_{z-a+1}
		\end{bmatrix};
	\end{align*}
	$\det(\bfF)=(b_{z-a+1}-b_{z_1+1})b_{z-a+1}^{t-1}$. Since $z_1\neq z-a$ and then $b_{z_1+1}\neq b_{z-a+1}$, the above matrix has full rank and $\det(\bfG_{\calS_a \setminus i})= \pm (b_{z-a+1}-b_{z_1+1})b_{z-a+1}^{t-1}\neq 0$, so that $\bfG_{\calS_a \setminus i}$ has full rank.
	
	Case 2: By deleting $i$-th column of $\bfG_{\calS_a}$, where $z_1t\le i< (z_1+1)t$, $z-a\le z_1\le z-3$, the proof that the resultant matrix has full rank is similar to the case that $z_1\le z-a-1$ and we omit it here. 
	
	Case 3: By deleting $i$-th column of $\bfG_{\calS_a}$, where $(z-2)t\le i\le (z-1)t-2$, $i_1 = i-(z-2)t$ and $i_2=(z-1)t-2-i$, the resultant matrix is as follows,
	\begin{align*}
		\bfG_{\calS_a \setminus i}=
		\left[
		\begin{array}{c|c}
			\mathbf A&\mathbf C\\  \hline
			\mathbf B&\mathbf D
		\end{array}
		\right],
	\end{align*}
	where \begin{align*}
		\bfA&= \bfI_{(z-a)t\times (z-a)t}\\
		\mathbf B &= \bfzr_{(k-(z-a)t)\times (z-a)t}\\
		\mathbf C &= \begin{bmatrix}
			\bfzr_{t\times t}&\cdots&\bfzr_{t\times (t-2)}&\bfoe_t&{b_1}\bfI_{t\times t}\\
			\vdots&&&&\vdots\\
			\bfzr_{t\times t}&\cdots&\bfzr_{t\times (t-2)}&\bfoe_t&{b_{z-a}}\bfI_{t\times t}
		\end{bmatrix}\\	 	  	
		\bfD&=\begin{scriptsize}
	\left[
		\begin{array}{cccccc}
			\bfzr_{t\times t} & \cdots &\multicolumn{2}{c}{\bfzr_{t\times (t-2)}} & \bfoe_t & {b_{z-a+1}}\bfI_{t\times t}\\
			\bfI_{t\times t} & \cdots & \multicolumn{2}{c}{\bfzr_{t\times (t-2)}} & \bfoe_t & {b_{z-a+2}}\bfI_{t\times t}\\
			\vdots&&&&\vdots\\
			\multirow{3}*{$\bfzr_{(t-1)\times t}$}&\multirow{3}*{$\cdots$}&\bfI_{i_1\times i_1}&\bfzr_{i_1\times i_2}&\multirow{3}*{$\bfoe_{t-1}$}&\multirow{3}*{$\bfC(c_1, c_2)_{(t-1)\times t}$}\\
			&&\bfzr_{1\times i_1}&\bfzr_{1\times i_2}&&\\
			&&\bfzr_{i_2\times i_1}&\bfI_{i_2\times i_2}&&
		\end{array}
		\right]
				\end{scriptsize}
	\end{align*}
	To verify $\bfG_{\calS_a\setminus i}$ has full rank, we need to check the determinant of $\bfD$. Owing to the construction of $\bfD$, the following matrix is required to be full rank,
	\begin{align*}
		\bfD'=&\begin{bmatrix}
			\bfoe_t&{b_{z-a+1}}\bfI_{t\times t}\\
			1&\bfC(c_1, c_2)_{(t-1)\times t}(i_1)
		\end{bmatrix}\\=&
		\begin{bmatrix}
		\begin{smallmatrix}
			1&b_{z-a+1}&0&\cdots&0&0&0&\cdots&0\\
			1&0&b_{z-a+1}&\cdots&0&0&0&\cdots&0\\
			\vdots&&&&&&&&\vdots\\
			1&0&0&\cdots&0&0&0&\cdots&b_{z-a+1}\\
			1&0&\cdots&0&c_1&c_2&0&\cdots&0
			\end{smallmatrix}
		\end{bmatrix},
	\end{align*}
	where $\bfC(c_1, c_2)_{(t-1)\times t}(i_1)$ denotes the $i_1$-th row of $\bfC(c_1, c_2)_{(t-1)\times t}$, $0\le i\le t-2$.
	\begin{align*}
	\det \bfD'=&\det (b_{z-a+1}\bfI_{t\times t})\cdot \det(1-\\
	&\bfC(c_1,c_2)_{(t-1)\times t}(i_1)\cdot(b^{-1}_{z-a+1}\bfI_{t\times t})\cdot\bfoe_t)\\
	=&b_{z-a+1}^t(1-b_{z-a+1}^{-1}(c_1+c_2))
	\end{align*}
	
	Since $b_{z-a+1}\neq 0$ and $c_1+c_2=0$, $\det \bfD'\neq 0$ and $\bfD'$ has full rank. Then $\det(\bfD)=b_{z-a+1}^t(1-b_{z-a+1}^{-1}(c_1+c_2))\neq 0$ and  thus $\bfG_{\calS_a\setminus i}$ is full rank.

	Case 4: By deleting $i$-th column of $\bfG_{\calS_a}$, where $i=(z-1)t-1$, the block form of the resultant matrix $\bfG_{\calS_a \setminus i}$ can be expressed as eq. (\ref{eq:case4_G1}).
	\begin{table*}[t]\begin{align}
		\bfG_{\calS_a\setminus i}&=
		\begin{bmatrix}
			\bfI_{t\times t}& \cdots& \bfzr_{t\times t} &\bfzr_{t\times t} &\cdots&\bfzr_{t\times (t-1)}&{b_1}\bfI_{t\times t}\\
			\vdots&&&&&&\vdots\\
			\bfzr_{t\times t}& \cdots& \bfI_{t\times t} &\bfzr_{t\times t} &\cdots&\bfzr_{t\times (t-1)}&{b_{z-a}}\bfI_{t\times t}\\
			\bfzr_{t\times t}& \cdots& \bfzr_{t\times t} &\bfzr_{t\times t} &\cdots&\bfzr_{t\times (t-1)}&{b_{z-a+1}}\bfI_{t\times t}\\
			\bfzr_{t\times t}& \cdots& \bfzr_{t\times t} &\bfI_{t\times t} &\cdots&\bfzr_{t\times (t-1)}&{b_{z-a+2}}\bfI_{t\times t}\\
			\vdots&&&&&&\vdots\\
			\bfzr_{(t-1)\times t}& \cdots& \bfzr_{(t-1)\times t} &\bfzr_{(t-1)\times t} &\cdots&\bfI_{(t-1)\times (t-1)}&\bfC(c_1, c_2)_{(t-1)\times t}
		\end{bmatrix}
			\label{eq:case4_G1}
	\end{align}
	\end{table*}
   Evidently, $\det(\bfG_{\calS_a\setminus i})= \pm b_{z-a+1}^t$, so that $\bfG_{\calS_a\setminus i}$  has full rank.
	
	Case 5: By deleting $i$-th column of $\bfG_{\calS_a}$, where $(z-1)t\le i<zt$ and $i_1=i-(z-1)t$, the block form of the resultant matrix $\bfG_{\calS_a \setminus i}$ can be expressed as eq. (\ref{eq:case4_G2}).
	
	\begin{table*}[t]
	\begin{align}
		\bfG_{\calS_a\setminus i}&=
		\begin{bmatrix}
			\bfI_{t\times t}& \cdots& \bfzr_{t\times t} &\bfzr_{t\times t} &\cdots&\bfzr_{t\times (t-1)}&\bfoe_t&{b_1}\bfI_{t\times t}\setminus \bfc_{i_1}\\
			\vdots&&&&&&&\vdots\\
			\bfzr_{t\times t}& \cdots& \bfI_{t\times t} &\bfzr_{t\times t} &\cdots&\bfzr_{t\times (t-1)}&\bfoe_t&{b_{z-a}}\bfI_{t\times t}\setminus \bfc_{i_1}\\
			\bfzr_{t\times t}& \cdots& \bfzr_{t\times t} &\bfzr_{t\times t} &\cdots&\bfzr_{t\times (t-1)}&\bfoe_t&{b_{z-a+1}}\bfI_{t\times t}\setminus \bfc_{i_1}\\
			\bfzr_{t\times t}& \cdots& \bfzr_{t\times t} &\bfI_{t\times t} &\cdots&\bfzr_{t\times (t-1)}&\bfoe_t&{b_{z-a+2}}\bfI_{t\times t}\setminus \bfc_{i_1}\\
			\vdots&&&&&&&\vdots\\
			\bfzr_{(t-1)\times t}& \cdots& \bfzr_{(t-1)\times t} &\bfzr_{(t-1)\times t} &\cdots&\bfI_{(t-1)\times (t-1)}&\bfoe_{t-1}&\bfC(c_1, c_2)_{(t-1)\times t}\setminus \bfc_{i_1}
		\end{bmatrix}
				\label{eq:case4_G2}
	\end{align}
\end{table*}
	where ${b_s}\bfI_{t\times t}\setminus \bfc_{i_1}$ denotes the submatrix obtained by deleting $i_1$-th column of ${b_s}\bfI_{t\times t}$ and $\bfC(c_1, c_2)_{(t-1)\times t}\setminus \bfc_{i_1}$ denotes the submatrix obtained by deleting $i_1$-th column of $\bfC(c_1, c_2)_{(t-1)\times t}\setminus \bfc_{i_1}$. To verify $\bfG_{\calS_a\setminus i}$ has full rank, we just need to check $[\bfoe_t|{b_{z-a+1}}\bfI_{t\times t}\setminus \bfc_{i_1}]$ has full rank. Since $[\bfoe_{t}|{b_{z-a+1}}\bfI_{t\times t}]$ has the following form,
	\begin{align*}
		\begin{bmatrix}
			1&b_{z-a+1} &0 &\cdots &0\\
			1&0 & b_{z-a+1}&\cdots &0\\
			\vdots&&&&\vdots\\
			1&0&0&\cdots&b_{z-a+1}
		\end{bmatrix},
	\end{align*}
	by deleting any column of above matrix, it is obvious that $\det([\bfoe_t|{b_{z-a+1}}\bfI_{t\times t}\setminus \bfc_{i_1}])=\pm b_{z-a+1}^{t-1}$ and $\det(\bfG_{\calS_a\setminus i})\neq 0$.
	
\end{itemize}

\subsection*{Proof of Claim \ref{claim:matrixextend}}
Let $z$ be the least integer such that $k+1~|~nz$. First, we argue that $z$ is the least integer such that $k+1~|~(n+s(k+1))z$. Assume that this is not true, then there exists $z'<z$ such that $k+1~|~(n+s(k+1))z'$. As $n\ge k+1$ and $k+1~|~s(k+1)z'$ this implies that $k+1~|~nz'$ which is a contradiction.

Next we argue that $\bfG'$ satisfies the CCP, i.e., all $k \times k$ submatrices of each $\bfG'_{\calS'_a}$, where $\calT'_a=\{a(k+1),\cdots, a(k+1)+k\}$ and $\calS'_a=\{(t)_{n+s(k+1)}|t\in \calT'_a\}$ and $0\le a\le \frac{nz}{k+1}+sz$, are full rank. Let $n'= n+s(k+1)$. We argue it in three cases.

\begin{itemize}
\item {\it Case 1. The first column of $\bfG'_{\calS'_a}$ lies in the first $s(k+1)$ columns of $\bfG'$.}\\
Suppose $ln'\le a(k+1)< ln'+s(k+1)$ where $0\le l<  z-1$. By the construction of $\bfG'$, $\bfG'_{\calS_a}=\bfD$. Since $\bfD = \bfG_{\calS_0}$, all $k\times k$ submatrices of $\bfD$ have full rank and so does $\bfG'_{\calS_a}$.

\item {\it Case 2. The first and last column of $\bfG'_{\calS'_a}$ lie in the last $n$ columns of $\bfG'$.}\\
Suppose  $ln'+s(k+1)\le a(k+1)$ and $ a(k+1)+k< (l+1)n'$ where $0\le l< z-1$. As $n'>s(k+1)$, $a(k+1)-(l+1)s(k+1)>0$ and $k+1|a(k+1)-(l+1)s(k+1)$. Let $a'=a-(l+1)s$, then $\bfG'_{\calS_a}=\bfG_{\calS_{a'}}$ and hence  all $k\times k$ submatrices of $\bfG'_{\calS_a}$ have full rank.

\item {\it Case 3. The first column of $\bfG'_{\calS'_a}$ lies in the last $n$ columns of $\bfG'$ but the last column lies in the first $(k+1)$ columns of $\bfG'$.}\\
Suppose  $ln'+s(k+1)\le a(k+1)$ and $ a(k+1)+k> (l+1)n'$ where $0\le l<z-2$. Again, we can get $k+1|a(k+1)-(l+1)s(k+1)$ and let $a'=a-(l+1)s$. Let $\calS'^1=\{(a(k+1))_{n'},\cdots, (ln'+n'-1)_{n'}\}$ and $\calS^1=\{(a'(k+1))_n,\cdots,(ln+n-1)_n\}$. As $(ln'+n'-1)-a(k+1)=(ln+n-1)-a'(k+1)$, $\bfG'_{\calS'^1}=\bfG_{\calS^1}$. Let $\calS'^2=\{(ln'+n')_{n'},\cdots,(a(k+1)+k)_{n'}\}$ and $\calS^2=\{(ln+n)_n,\cdots, (a'(k+1)+k)_n\}$. By the construction of $\bfG'$, $\bfG_{\calS^2}=\bfG'_{\calS'^2}$. Then $\bfG'_{\calS_a}=[\bfG'_{\calS'^1} \bfG'_{\calS'^2}]=[\bfG_{\calS^1} \bfG_{\calS^2}]=\bfG_{\calS_{a'}}$ and hence all $k\times k$ submatrices of $\bfG'_{\calS_a}$ have full rank.
\end{itemize}

\subsection*{Proof of Claim \ref{claim:compare_MN}}

\begin{itemize}
	\item $\frac{M}{N}=\frac{1}{q}$.
We have
\begin{align*}
\frac{1}{K} \log_2 \frac{F_s^{MN}}{F_s^*} &= \frac{1}{K} \log_2 \binom{K}{K/q} - \frac{1}{K} \log_2 z - \frac{k}{K} \log_2 q.
\end{align*}
Using the fact that $z \leq k+1$ and taking limits as $n \rightarrow \infty$, we get that
\begin{align*}
\lim_{n \rightarrow \infty} \frac{1}{K} \log_2 \frac{F_s^{MN}}{F_s^*} = H_2\bigg{(}\frac{1}{q}\bigg{)} - \frac{\eta}{q} \log_2 q.
\end{align*}

\item $\frac{M}{N}=1-\frac{k+1}{nq}$. We have 
\begin{align*}
\frac{1}{K} \log_2 \frac{F_s^{MN}}{F_s^*} =& \frac{1}{K}\log_2 \binom{K}{k+1}-\frac{k+1}{K}\log_2 q\\
&-\frac{1}{K}\log_2\frac{zn}{k+1}.
    \end{align*}
 Using the fact that $z \leq k+1$ and taking limits as $n \rightarrow \infty$, we get that
   	\begin{align*}
  \lim_{n \rightarrow \infty}  	\frac{1}{K} \log_2 \frac{F_s^{MN}}{F_s^*}&=H_2\bigg{(}\frac{\eta}{q}\bigg{)} - \frac{\eta}{q} \log_2 q.
   	\end{align*}
\end{itemize}

\subsection{Discussion on coded caching systems constructed by generator matrices satisfying the $(k,\alpha)$-CCP where $\alpha \leq k$}
\label{sec:kalpha_ccp_matrices}
Consider the $(k,\alpha)$-CCP ({\it cf.} Definition \ref{def:kalphacc}) where $\alpha\le k$. Let $z$ be the least integer such that $\alpha~|~nz$, and let $\calT^\alpha_a=\{a\alpha,\cdots,a\alpha+\alpha-1)\}$ and $\calS^\alpha_a=\{(t)_n~|~t\in \calT^\alpha_a\}$. Let $\bfG_{\calS^\alpha_a}=[\bfg_{i_0},\cdots,\bfg_{i_{\alpha-1}}]$ be the submatrix of $\bfG$ specified by the columns in $\calS^\alpha_a$, i.e,  $\bfg_{i_j} \in \bfG_{\calS^\alpha_a}$ if $i_j\in \calS^\alpha_a$.
We demonstrate that the resolvable design generated from a linear block code that satisfies the $(k,\alpha)$-CCP can also be used in a coded caching scheme. First, we construct a $(X,\calA)$ resolvable design as described in Section \ref{sec:design}.A., which can be partitioned into $n$ parallel classes $\calP_i=\{B_{i,j}: 0\le j<q\}$, $0\le i<n$. By the constructed resolvable design, we partition each subfile $W_n$ into $q^kz$ subfiles $W_n=\{W_{n,t}^s~|~0\le t< q^k, 0\le s< z\}$ and operate the placement scheme in Algorithm \ref{Alg:Placement}. In the delivery phase, for each recovery set, several equations are generated, each of which benefit $\alpha$ users simultaneously. Furthermore, the equations generated by all the recovery sets can recover all the missing subfiles. In this section, we only show that for the recovery set $\calP_{\calS^\alpha_a}$, it is possible to generate equations which benefit $\alpha$ users and allow the recovery of all of missing subfiles with given superscript. The subsequent discussion exactly mirrors the discussion in the $(k,k+1)$-CCP case and is skipped.

Towards this end, we first show that picking $\alpha$ users from $\alpha$ distinct parallel classes can always form $q^{k-\alpha+1}-q^{k-\alpha}$ signals. More specifically, consider blocks $B_{i_1, l_{i_1}}, \dots, B_{i_{\alpha}, l_{i_{\alpha}}}$ (where $l_{i_j} \in \{0, \dots, q-1\}$) that are picked from $\alpha$ distinct parallel classes of $\calP_{\calS^\alpha_{a}}$. Then, $|\cap_{j=1}^{\alpha-1} B_{i_j, l_{i_j}}| = q^{k-\alpha+1}$ and $|\cap_{j=1}^\alpha B_{i_j, l_{i_j}}| = q^{k-\alpha}$.

\begin{claim}
	\label{claim:MDSintersectionKAlpha}
	Consider the resolvable design $(X, \calA)$ constructed by a $(n,k)$ linear block code that satisfies the $(k,\alpha)$ CCP. Let $\calP_{\calS^\alpha_a}=\{\calP_i~|~i\in \calS^\alpha_a\}$ for $0\le a< \frac{zn}{\alpha}$, i.e., it is the set of parallel classes corresponding to $\calS^\alpha_a$. We emphasize that $|\calP_{\calS^\alpha_a}| = \alpha\le k$. Consider blocks $B_{i_1, l_{i_1}}, \dots, B_{i_{\alpha'}, l_{i_{\alpha'}}}$ (where $l_{i_j} \in \{0, \dots, q-1\}$) that are picked from any $\alpha'$ distinct parallel classes of $\calP_{\calS^\alpha_a}$ where $\alpha'\le \alpha$. Then, $|\cap_{j=1}^{\alpha'} B_{i_j, l_{i_j}}| = q^{k-\alpha'}$.
\end{claim}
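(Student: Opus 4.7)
The plan is to generalize the proof of Claim \ref{claim:MDSintersection} in a straightforward manner. Recall that a block $B_{i,l} \in \calP_i$ is defined by $B_{i,l}=\{j : \bfT_{i,j}=l\}$, where the columns of $\bfT$ enumerate all codewords $\bfu \bfG$ for $\bfu \in GF(q)^k$. Thus the intersection $\cap_{j=1}^{\alpha'} B_{i_j, l_{i_j}}$ is in bijection with the set of vectors $\bfu \in GF(q)^k$ such that $\bfu \bfg_{i_j} = l_{i_j}$ for every $j=1, \dots, \alpha'$, where $\bfg_{i_j}$ denotes the $i_j$-th column of $\bfG$.

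First I would set up this system explicitly as $\alpha'$ linear equations in the $k$ unknowns $\bfu_0, \dots, \bfu_{k-1}$ over $GF(q)$, exactly as in the proof of Claim \ref{claim:MDSintersection}. The coefficient matrix of this system is the $\alpha' \times k$ matrix whose rows are $\bfg_{i_1}^T, \dots, \bfg_{i_{\alpha'}}^T$. Since $\{i_1, \dots, i_{\alpha'}\} \subseteq \calS_a^\alpha$ and $\alpha' \leq \alpha$, these columns are a subset of the $\alpha$ columns in $\bfG_{\calS_a^\alpha}$, which are linearly independent by the $(k, \alpha)$-CCP (Definition \ref{def:kalphacc}). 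Any subset of a linearly independent set is linearly independent, so the coefficient matrix has rank $\alpha'$.

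Next I would invoke the standard fact from linear algebra that a consistent system of linear equations over $GF(q)$ in $k$ unknowns with coefficient matrix of rank $\alpha'$ has exactly $q^{k-\alpha'}$ solutions. Consistency is immediate because the rows are linearly independent, so the augmented matrix has the same rank as the coefficient matrix. This yields $|\cap_{j=1}^{\alpha'} B_{i_j, l_{i_j}}| = q^{k-\alpha'}$, which is the desired conclusion and also recovers the two special cases $q^{k-\alpha+1}$ and $q^{k-\alpha}$ mentioned in the discussion preceding the claim.

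I do not anticipate any real obstacle: the argument is essentially a verbatim generalization of Claim \ref{claim:MDSintersection} from the $\alpha' = k$ case to arbitrary $\alpha' \leq \alpha$, with the $(k, \alpha)$-CCP playing the role that the $(k, k+1)$-CCP plays in the original claim. The only thing to be careful about is noting explicitly that a sub-collection of linearly independent columns remains linearly independent, which justifies applying the CCP to $\alpha' < \alpha$ columns rather than exactly $\alpha$ columns.
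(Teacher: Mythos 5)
Your proof is correct and follows essentially the same route as the paper's: both reduce the intersection count to counting solutions of the linear system $\bfu \bfg_{i_j} = l_{i_j}$ and use the $(k,\alpha)$-CCP to conclude the coefficient matrix has rank $\alpha'$. The only cosmetic difference is that the paper explicitly parametrizes the $q^{k-\alpha'}$ solutions by fixing $k-\alpha'$ free variables and solving a square system, whereas you invoke the standard rank-based solution count directly.
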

	The above argument implies that any $\alpha-1$ blocks from any $\alpha-1$ distinct parallel classes of $\calP_{\calS^\alpha_a}$ have $q^{k-\alpha+1}$ points in common and any $\alpha$ blocks $B_{i_1,l_{i_1}}$, $B_{i_\alpha,l_{i_\alpha}}$ from any $\alpha$ distinct parallel classes of $\calP_{\calS^{\alpha}_a}$ have $q^{k-\alpha}$ points in common. These blocks (or users) can participate in $q^{k-\alpha+1}-q^{k-\alpha}$ equations, each of which benefits $\alpha$ users. In particular, each user will recover a missing subfile indexed by an element belonging to the intersection of the other $\alpha-1$ blocks in each equation. A very similar argument to Lemma \ref{lemma:delivery} can be made to justify enough equations can be found that allow all users to recover all missing subfiles. 
\begin{proof}
	Recall that by the construction in Section III.A,  block $B_{i,l} \in \calP_i$ is specified as follows,
	$$
	B_{i,l} = \{j : \bfT_{i,j} = l\}.
	$$
	Let $\bfG=[g_{ab}]$, for $0\le a<k$, $0\le b<n$.
	
	Now consider $B_{i_1, l_{i_1}}, \dots, B_{i_{\alpha'}, l_{i_{\alpha'}}}$ (where $i_j \in \calS^\alpha_a, l_{i_j} \in \{0, \dots, q-1\}$) that are picked from $\alpha'$ distinct parallel classes of  $\calP_{\calS^\alpha_a}$. W.l.o.g. we assume that $i_1 < i_2 < \dots < i_{\alpha'}$. Let $\calI =  \{i_1, \dots, i_{\alpha'}\}$ and $\bfT_{\calI}$ denote the submatrix of $\bfT$ obtained by retaining the rows in $\calI$. We will show that the vector $[l_{i_1}~l_{i_2}~\dots~l_{i_{\alpha'}}]^T$ is a column in $\bfT_{\calI}$ and appears $q^{k-\alpha'}$ times in it.
	
	We note here that by the $(k,\alpha)$-CCP, the vectors $\bfg_{i_1}, \bfg_{i_2}, \ldots, \bfg_{i_\alpha}$ are linearly independent and thus the subset of these vectors, $\bfg_{i_1},\cdots,\bfg_{i_{\alpha'}}$ are linearly independent. W. l. o. g., we assume that the top $\alpha' \times \alpha'$ submatrix of the matrix $[\bfg_{i_1}~\bfg_{i_2}~ \dots ~\bfg_{i_{\alpha'}}]$ is full-rank.  Next, consider the system of equations in variables $\bfu_0, \dots, \bfu_{\alpha'-1}$.
	\begin{align*}
	\sum_{b=0}^{\alpha'-1}\bfu_{b}g_{bi_1} &= l_{i_1}-\sum_{b=\alpha'}^{k-1}\bfu_{b}g_{bi_1},\\
	\sum_{b=0}^{\alpha'-1}\bfu_{b}g_{bi_2} &= l_{i_2}-\sum_{b=\alpha'}^{k-1}\bfu_{b}g_{bi_2},\\
	\mathrel{\makebox[\widthof{=}]{\vdots}}\\
	\sum_{b=0}^{\alpha'-1}\bfu_{b}g_{bi_{\alpha'}} &= l_{i_{\alpha'}}-\sum_{b=\alpha'}^{k-1}\bfu_{b}g_{bi_{\alpha'}}.
	\end{align*}
By the assumed condition, it is evident that this system of $\alpha'$ equations in $\alpha'$ variables has a unique solution for a given vector $\bfv=[\bfu_{\alpha'},\cdots, \bfu_{k-1}]$ over $GF(q)$. Since there are $q^{k-\alpha'}$ possible $\bfv$ vectors, the result follows.
	\end{proof}
		
As in the case of the $(k,k+1)$-CCP, we form a recovery set bipartite graph with parallel classes and recovery sets as the disjoint vertex subsets, and the edges incident on each parallel class are labeled arbitrarily from $0$ to $z-1$. For a parallel class $\calP\in \calP_{\calS^{\alpha}_a}$ we denote this label by label($\calP-\calP_{\calS^{\alpha}_a}$). For a given recovery set $\calP_{\calS^{\alpha}_a}$, the delivery phase proceeds by choosing blocks from $\alpha$ distinct parallel classes in $\calP_{\calS^{\alpha}_a}$ and it provides $q^{k-\alpha+1}-q^{k-\alpha}$ equations that benefit $\alpha$ users. Note that in the $(k,\alpha)$-CCP case, randomly picking $\alpha$ blocks from $\alpha$ parallel classes in $\calP_{\calS^{\alpha}_a}$ will always result in $q^{k-\alpha}$ intersections, which is different from $(k,k+1)$-CCP. It turns out that each equation allows a user in $\calP\in \calP_{\calS^{\alpha}_a}$ to recover a missing subfile with superscript label($\calP-\calP_{\calS_{a}^\alpha}$).

Let the demand of user $U_{B_{i,j}}$ for $i\in n-1, 0\le j\le q-1$ by $W_{\kappa_{i,j}}$. We formalize the argument in Algorithm \ref{Alg:SignalAlphaCCP} and prove that equations generated in each recovery set $\calP_{\calS^{\alpha}_a}$ can recover all missing subfile with superscript label$(\calP-\calP_{\calS^{\alpha}_a})$.

	\begin{algorithm}[htb]
		\SetNoFillComment
		\caption{Signal Generation Algorithm for $\calP_{\calS_a^\alpha}$}
		\label{Alg:SignalAlphaCCP}
		\SetKwInOut{Input}{Input}
		\SetKwInOut{Output}{Output}
		\Input{For $\calP \in \calP_{\calS_a^\alpha}$, $E(\calP) = \text{label}(\calP - \calP_{\calS_a^\alpha})$. Signal set $Sig=\emptyset$.}
		\While{any user $U_B\in \calP_j, j \in \calS^\alpha_a$ does not recover all its missing subfiles with superscript $E(\calP)$}
		{ Pick blocks $B_{j,l_j} \in \calP_j$ for all $j \in \calS_a^{\alpha}$ and $l_j \in \{0, \dots, q-1\}$\;
			\tcc{Pick blocks from distinct parallel classes in $\calP_{\calS_a^\alpha}$. The cardinality of their intersection is always $q^{k-\alpha}$}
			Find set $\hat{L}_{s} = \cap_{j \in \calS^\alpha_a \setminus \{s\}}B_{j,l_j}\setminus \cap_{j \in \calS^\alpha_a} B_{j,l_j}$ for $s\in \calS^\alpha_a$\;
			\tcc{Determine the missing subfile indices that the user from $\calP_s^\alpha$ will recover. Note that $|\hat{L}_{s}|=q^{k-\alpha+1}-q^{k-\alpha}$}
			Add signals $\oplus_{s \in S^\alpha_a} W^{E(\calP_s)}_{\kappa_{s,l_s},\hat{L}_s[t]}$, $0\le t< q^{k-\alpha+1}-q^{k-\alpha}$, to $Sig$\;
		\tcc{User $U_{B_{s,l_s}}$ demands file $W_{\kappa_{s,l_s}}$. This equation allows it to recover the corresponding missing subfile index $\hat{L}_s[t]$, which is the $t$-th element of $\hat{L}_s[t]$. The superscript is determined by the recovery set bipartite graph}
	}
		\Output{Signal set $Sig$.}
	\end{algorithm}
	
	For the sake of convenience we argue that user $U_{B_{\beta, l_\beta}}$ that demands $W_{\kappa_{\beta, l_\beta}}$ can recover all its missing subfiles with superscript $E(\calP_\beta)$. Note that $B_{\beta,l_{\beta}}=q^{k-1}$. Thus user $U_{B_{\beta,l_{\beta}}}$ needs to obtain $q^k-q^{k-1}$ missing subfiles with superscript $E(\calP_{\beta})$. The delivery phase scheme repeatedly picks $\alpha$ users from different parallel classes of $\calP_{\calS^\alpha_a}$. The equations in Algorithm \ref{Alg:SignalAlphaCCP} allow $U_{B_{\beta, l_\beta}}$ to recover all $W^{E(\calP_\beta)}_{\kappa_{\beta,l_\beta},\hat{L}_\beta[t]}$ where $\hat{L}_{\beta} = \cap_{j \in \calS^\alpha_a \setminus \{\beta\}}B_{j,l_j}\setminus \cap_{j \in \calS^\alpha_a} B_{j,l_j}$ and $t=1,\cdots, q^{k-\alpha+1}-q^{k-\alpha}$. This is because of Claim	\ref{claim:MDSintersectionKAlpha}.
	
	Next, we count the number of equations that  $U_{B_{\beta, l_\beta}}$  participates in. We can pick $\alpha-1$ users from $\alpha -1 $ parallel classes in $\calP_{\calS^{\alpha}_a}$. There are totally $q^{\alpha-1}$ ways to pick them, each of which generate $q^{k-\alpha+1}-q^{k-\alpha}$ equations. Thus there are a total of $q^k-q^{k-1}$ equations in which user  $U_{B_{\beta, l_\beta}}$ participates in.
	
	It remains to argue that each equation provides a distinct file part of user $U_{B_{\beta, l_\beta}}$. Towards this end, let $\{i_1,\cdots, i_{\alpha-1}\}\subset \calS_{a}^{\alpha}$ be an index set such that $\beta\notin \{i_1,\cdots, i_{\alpha-1}\}$ but $\beta\in \calS_{a}^{\alpha}$.  Note that when we pick the same set of blocks $\{B_{i_1,l_{i_1}},\cdots, B_{i_{\alpha-1},l_{i_{\alpha-1}}}\}$, it is impossible that the recovered subfiles $W^{E(\calP_{\beta})}_{\kappa_{\beta,l_{\beta}},\hat{L}_{\beta}[t_1]}$ and $W^{E(\calP_{\beta})}_{\kappa_{\beta,l_{\beta}},\hat{L}_{\beta}[t_2]}$ are the same since the points in $\hat{L}_\beta$ are distinct. Next, suppose that there exist sets of blocks $\{B_{i_1,l_{i_1}},\cdots, B_{i_{\alpha-1},l_{i_{\alpha-1}}}\}$ and $\{B_{i_1,l'_{i_1}}, \cdots, B_{i_{\alpha-1},l'_{i_{\alpha-1}}}\}$ such that $\{B_{i_1,l_{i_1}}, \cdots, B_{i_{\alpha-1},l_{i_{\alpha-1}}}\}\neq \{B_{i_1,l'_{i_1}}, \cdots, B_{i_{\alpha-1},l'_{i_{\alpha-1}}}\}$, but $\gamma\in \cap_{j=1}^{\alpha-1} B_{i_j,l_{i_j}}\setminus B_{\beta,l_{\beta}}$ and $\gamma\in \cap_{j=1}^{\alpha-1} B_{i_j,l'_{i_j}} \setminus B_{\beta,l'_{\beta}}$. This is a contradiction since this in turn implies that $\gamma \in \cap_{j=2}^{\alpha} B_{i_j,l_{i_j}}\bigcap \cap_{j=2}^{\alpha} B_{i_j,l'_{i_j}}$, which is impossible since two blocks from the same parallel class have an empty intersection.
	
	Finally we calculate the transmission rate. In Algorithm \ref{Alg:SignalAlphaCCP}, for each recovery set, we transmit $q^{k+1}-q^{k}$ equations and there are totally $\frac{zn}{\alpha}$ recovery sets. Since each equation has size equal to a subfile, the rate is given by
	\begin{align*}
	R&=(q^{k+1}-q^{k})\times \frac{zn}{\alpha}\times \frac{1}{zq^k}\\
	 &=\frac{n(q-1)}{\alpha}.
	\end{align*}

The $(n,k)$ linear block codes that satisfy the $(k,\alpha)$-CCP over $GF(q)$ correspond to a coded caching system with $K=nq$, $\frac{M}{N}=\frac{1}{q}$, $F_s=zq^k$ and have a rate $R=\frac{n(q-1)}{\alpha}$. Thus, the rate of this system is a little higher compared to the $(k,k+1)$-CCP system with almost the same subpacketization level.

However, by comparing Definitions \ref{def:MDSproperty} and \ref{def:kalphacc} it is evident that the rank constraints of the $(k,\alpha)$-CCP are weaker as compared to the $(k,k+1)$-CCP. Therefore, in general we can find more instances of generator matrices that satisfy the $(k,\alpha)$-CCP. For example, a large class of codes that satisfy the $(k,k)$-CCP are $(n,k)$ cyclic codes since any $k$ consecutive columns in their generator matrices are linearly independent \cite{lincostello}. Thus, $(n,k)$ cyclic codes always satisfy the $(k,k)$-CCP but satisfy $(k,k+1)$-CCP if they satisfy the additional constraints discussed in Claim \ref{claim:cyclic_k1}.

%

\subsection{Cyclic codes over $\mathbb{Z} \mod q$ \cite{blake1972codes}}
\label{sec:blake_codes}
%
%
%
First, we show that matrix $\bfT$ constructed by constructed by the approach outlined in Section \ref{sec:matrices_z_mod_q} still results in a resolvable design. Let $\Delta=[\Delta_0 \Delta_1 \cdots \Delta_{n-1}]$ be a codeword of the cyclic code over $\mathbb{Z}\mod q$, denoted $\calC$ where $q=q_1q_2\cdots q_d$, and $q_i, i = 1, \dots, d$ are prime. By using the Chinese remaindering map $\psi$ (discussed in Section \ref{sec:matrices_z_mod_q}), $\Delta$ can be uniquely mapped into $d$ codewords $\bfc^{(i)}, i = 1, \dots, d$ where each $\bfc^{(i)}$ is a codeword of $\calC^i$ (the cyclic code over $GF(q_i)$). Thus, the $b$-th component $\Delta_b$ can be mapped to $(\bfc^{(1)}_{b}, \bfc^{(2)}_{b}, \dots, \bfc^{(d)}_{b})$

Let $\bfG^i=[g_{ab}^{(i)}]$ represent the generator matrix of the code $\calC^i$. Based on prior arguments, it is evident that there are $q_i^{k_i-1}$ distinct solutions over $GF(q_i)$ to the equation $\sum_{a=0}^{k_i-1}\bfu_a g_{ab}^{(i)}= \bfc^{(i)}_{b}$. In turn, this implies that $\Delta_b$ appears $q_1^{k_1-1}q_2^{k_2-1}\cdots q_d^{k_d-1}$ times in the $b$-th row of $\bfT$ and the result follows.
%
%


Next we show any $\alpha$ blocks from distinct parallel classes of $\calP_{\calS_{a}^{k_{min}}}$  have $q_1^{k_1-\alpha}q_2^{k_2-\alpha}\cdots q_{d}^{k_d-\alpha}$ intersections, where $\alpha\le k_{min}$ and ${\calS^{k_{min}}_a}=\{(ak_{min})_n, (ak_{min}+1)_n,\cdots,(ak_{min}+k_{min}-1)_n\}$

Towards this end consider $B_{i_1, l_{i_1}}, \dots, B_{i_{\alpha}, l_{i_{\alpha}}}$ (where $i_j \in \calS^{k_{min}}_a, l_{i_j} \in \{0, \dots, q-1\}$) that are picked from $\alpha$ distinct parallel classes of  $\calP_{\calS^{k_{min}}_a}$. W.l.o.g. we assume that $i_1 < i_2 < \dots < i_{\alpha}$. Let $\calI =  \{i_1, \dots, i_{\alpha}\}$ and $\bfT_{\calI}$ denote the submatrix of $\bfT$ obtained by retaining the rows in $\calI$. We will show that the vector $[l_{i_1}~l_{i_2}~\dots~l_{i_{\alpha}}]^T$ is a column in $\bfT_{\calI}$ and appears $q_1^{k_1-\alpha}q_2^{k_2-\alpha}\cdots q_{d}^{k_d-\alpha}$ times.

Let $\psi_m(l_{i_j})$ for $m = 1, \dots, d$ represent the $m$-th component of the map $\psi$. Consider the $(n,k_1)$ cyclic code over $GF(q_1)$ and the system of equations in variables $\bfu_0, \dots, \bfu_{\alpha-1}$ that lie in $GF(q_1)$.
\begin{align*}
\sum_{b=0}^{\alpha-1}\bfu_{b}g_{bi_1}^{(1)}&= \psi_1(l_{i_1})-\sum_{b=\alpha}^{k_1-1}\bfu_{b}g_{bi_1}^{(1)},\\
\sum_{b=0}^{\alpha-1}\bfu_{b}g_{bi_2}^{(1)} &= \psi_1(l_{i_2})-\sum_{b=\alpha}^{k_1-1}\bfu_{b}g_{bi_2}^{(1)},\\
\mathrel{\makebox[\widthof{=}]{\vdots}}\\
\sum_{b=0}^{\alpha-1}\bfu_{b}g_{bi_{\alpha}}^{(1)} &= \psi_1(l_{i_{\alpha}})-\sum_{b=\alpha}^{k_1-1}\bfu_{b}g_{bi_{\alpha}}^{(1)}.
\end{align*}
By arguments identical to those made in Claim \ref{claim:MDSintersectionKAlpha} it can be seen that this system of equations has $q_1^{k_1-\alpha}$ solutions. Applying the same argument to the other cyclic codes we conclude that the vector $[l_{i_1},l_{i_2},\cdots, l_{i_{\alpha}}]$ appears $q_1^{k_1-\alpha}q_2^{k_2-\alpha}\cdots q_{d}^{k_d-\alpha}$ times in $\bfT_{\calI}$ and the result follows. 

\begin{IEEEbiographynophoto}{Li Tang} received his   B.E. degree in mechanical engineering from Beihang University, Beijing, China in 2011, and M.S. degree in electrical and information engineering from Beihang University, Beijing, China in 2014.  He is currently working towards the Ph.D degree in the Department of Electrical and Computer  Engineering at Iowa State University, Ames, IA, USA. His research interests include network coding and channel coding.
\end{IEEEbiographynophoto}

\begin{IEEEbiographynophoto}{Aditya Ramamoorthy}
(M'05) received the B.Tech. degree in electrical engineering from the Indian Institute of Technology, Delhi, in 1999, and the M.S. and Ph.D. degrees from the University of California, Los Angeles (UCLA), in 2002 and 2005, respectively. He was a systems engineer with Biomorphic VLSI Inc. until 2001. From 2005 to 2006, he was with the Data Storage Signal Processing Group of Marvell Semiconductor Inc. Since fall 2006, he has been with the Electrical and Computer Engineering Department at Iowa State University, Ames, IA 50011, USA. His research interests are in the areas of network information theory, channel coding and signal processing for bioinformatics and nanotechnology. Dr. Ramamoorthy served as an editor for the IEEE Transactions on Communications from 2011 -- 2015. He is currently serving as an associate editor for the IEEE Transactions on Information Theory. He is the recipient of the 2012 Early Career Engineering Faculty Research Award from Iowa State University, the 2012 NSF CAREER award, and the Harpole-Pentair professorship in 2009 and 2010.
\end{IEEEbiographynophoto}

 \end{document}